\newtheorem{definition}{Definition} 
\newtheorem{lemma}[definition]{Lemma}
\newtheorem{proposition}[definition]{Proposition}
\newtheorem{corollary}[definition]{Corollary}
\newtheorem{example}[definition]{Example}
\newtheorem{fact}[definition]{Fact}
\apptocmd{\sloppy}{\hbadness 10000\relax}{}{}
\newcommand{\rafael}[1]{\todo[inline,author=Rafael,color=cyan]{#1}}
\newcommand{\anni}[1]{\todo[inline,author=Anni,color=yellow]{#1}}
\newcommand{\veronika}[1]{\todo[inline,author=Veronika,color=lightgray]{#1}}
\newcommand{\newpart}[1]{\ifdraft{{\color{blue} #1}}{#1}\xspace}
\newcommand{\efont}[1]{\ensuremath{{\mathsf{#1}}}\xspace}
\newcommand{\exName}{1}
\newcommand{\Aex}{\ensuremath{\Amc_{\exName}}\xspace}
\newcommand{\Tex}{\ensuremath{\Tmc_{\exName}}\xspace}
\newcommand{\Kex}{\ensuremath{\Kmc_{\exName}}\xspace}
\newcommand{\IKex}{\ensuremath{\Imc_{\Kex}}\xspace}
\newcommand{\mypar}[1]{\medskip\noindent\textbf{#1.}}
\newcommand{\Con}{\ensuremath{\mathbb{C}}\xspace}
\newcommand{\Amc}{\ensuremath{\mathcal{A}}\xspace}
\newcommand{\Hmc}{\ensuremath{\mathcal{H}}\xspace}
\newcommand{\Imc}{\ensuremath{\mathcal{I}}\xspace}
\newcommand{\Kmc}{\ensuremath{\mathcal{K}}\xspace}
\newcommand{\Rmc}{\ensuremath{\mathcal{R}}\xspace}
\newcommand{\Tmc}{\ensuremath{\mathcal{T}}\xspace}
\newcommand{\Umc}{\ensuremath{\mathcal{U}}\xspace}
\newcommand{\Nbb}{\ensuremath{\mathbb{N}}\xspace}
\newcommand{\EL}{\ensuremath{\mathcal{E\!L}}\xspace}
\newcommand{\ELHbot}{\ensuremath{\mathcal{E\!LH}_\bot}\xspace}
\newcommand{\RELHbot}{\ensuremath{\mathcal{E\!LH}^{\rho}_{\bot}}\xspace}
\newcommand{\ALC}{\ensuremath{\mathcal{ALC}}\xspace}
\newcommand{\NC}{\ensuremath{\mathsf{N_C}}\xspace}
\newcommand{\NR}{\ensuremath{\mathsf{N_R}}\xspace}
\newcommand{\NI}{\ensuremath{\mathsf{N_I}}\xspace}
\newcommand{\NT}{\ensuremath{\mathsf{N_T}}\xspace}
\newcommand{\NIA}{\ensuremath{\mathsf{N}^{\Con}_{\mathsf{I}}}\xspace} 
\newcommand{\NIT}{\ensuremath{\mathsf{N}^\rho_{\mathsf{I}}}\xspace}
\newcommand{\NIU}{\ensuremath{\mathsf{N}^{\mathsf{up}}_{\mathsf{I}}}\xspace}
\newcommand{\NIL}{\ensuremath{\mathsf{N}^{\mathsf{low}}_{\mathsf{I}}}\xspace}
\newcommand{\NV}{\ensuremath{\mathsf{N_V}}\xspace}
\newcommand{\true}{\ensuremath{\mathsf{true}}\xspace}
\newcommand{\Cert}{\ensuremath{\mathsf{Cert}}\xspace}
\newcommand{\Ind}{\ensuremath{\mathsf{N_I}}\xspace}
\newcommand{\IndA}{\ensuremath{\mathsf{Ind}(\Amc)}\xspace}
\newcommand{\Var}{\ensuremath{\mathsf{N_V}}\xspace}
\newcommand{\Term}{\ensuremath{\mathsf{N_T}}\xspace}
\newcommand{\AVar}{\ensuremath{\mathsf{N_{AV}}}\xspace}
\newcommand{\QVar}{\ensuremath{\mathsf{N_{QV}}}\xspace}
\newcommand{\UO}{\ensuremath{{\Umc_\Kmc}}\xspace}
\newcommand{\IO}{\ensuremath{{\Imc_\Kmc}}\xspace}
\newcommand{\IOa}[1]{\ensuremath{{\Imc_{\Kmc_{ #1}}}}\xspace}
\newcommand{\IOR}{\ensuremath{{\Imc_\Kmc^{\mathsf{re}}}}\xspace}
\newcommand{\ir}{\ensuremath{\rho}\xspace}
\newcommand{\NRir}{\ensuremath{\NR\cup\{\ir\}}\xspace}
\newcommand{\ira}[1]{\ensuremath{\rho^{#1}}\xspace}
\newcommand{\iras}[2]{\ensuremath{\rho^{#1}_{#2}}\xspace}
\newcommand{\irO}{\ensuremath{\rho_{\Kmc}}\xspace}
\newcommand{\irl}{\ensuremath{\rho_{\ell}}\xspace}
\newcommand{\irla}[1]{\ensuremath{\irl^{#1}}\xspace}
\newcommand{\rhoior}{\ensuremath{\rho^{\IOR}}\xspace}
\newcommand{\aquery}{\ensuremath{\Phi}\xspace}
\newcommand{\erq}{{\ensuremath{\sim_{\Phi}^{\rho}}}\xspace}
\newcommand{\erqr}{{\ensuremath{\sim_{\Phi}^{r}}}\xspace}
\newcommand{\erqfiv}{{\ensuremath{\sim_{\Phi_5}^{\rho}}}\xspace}
\newcommand{\erqfivr}{{\ensuremath{\sim_{\Phi_5}^{r}}}\xspace}
\newcommand{\erqOp}{\;\erq\;}  
\newcommand{\erqrOp}{\;\erqr\;} 
\newcommand{\Aux}{\ensuremath{\mathsf{Aux}}\xspace}
\newcommand{\Paths}{\ensuremath{\mathsf{Paths}}\xspace}
\newcommand{\Tail}{\ensuremath{\mathsf{Tail}}\xspace}
\newcommand{\rhoTail}{\mbox{\ensuremath{\rho$-$\!\mathsf{Tail}}}\xspace}
\newcommand{\rhoTailFunc}[1]{\ensuremath{\rhoTail\big(#1\big)}\xspace}
\newcommand{\Pre}{\ensuremath{\mathsf{Pre}}\xspace}
\newcommand{\In}{\ensuremath{\mathsf{In}}\xspace}
\newcommand{\ForkId}{\ensuremath{\mathsf{Fork_=}}\xspace}
\newcommand{\ForkNeq}{\ensuremath{\mathsf{Fork_{\not=}}}\xspace}
\newcommand{\ForkH}{\ensuremath{\mathsf{Fork_\Hmc}}\xspace}
\newcommand{\I}{\ensuremath{\mathsf{I}}\xspace}
\newcommand{\Cyc}{\ensuremath{\mathsf{Cyc}}\xspace}
\newcommand{\Rew}{\ensuremath{\Phi^\dag_\Rmc}\xspace}
\let\oldDelta\Delta
\renewcommand{\Delta}{\mathrm{\oldDelta}}
\newcommand{\indHyp}{induction hypothesis\xspace}
\newcommand{\ourtitle}{\protect{%
	Query Answering for Rough \EL Ontologies
	}}
	\title{\ourtitle\\(Extended Technical Report)}
  \title{\ourtitle}
\author{
	Rafael Pe\~naloza \\ KRDB Research Centre, \\Free University of Bolzano, Italy \\ \tt{\small Rafael.Penaloza@unibz.it}
	\And Veronika Thost \\ MIT-IBM Watson AI Lab\\IBM Research\\ \tt{\small veronika.thost@ibm.com}
	\And Anni-Yasmin Turhan \\ Inst.\ of Theor.\ Computer Science\\ TU~Dresden, Germany  \\ \tt{\small Anni-Yasmin.Turhan@tu-dresden.de}
}
\begin{document}
\maketitle{}

%

\begin{abstract}

Querying large datasets with incomplete and vague data is still a 
challenge. Ontology-based query answering extends standard database 
query answering by background knowledge from an ontology to augment 
incomplete data. We focus on ontologies written in rough description 
logics (DLs), which allow to represent vague knowledge by partitioning 
the domain of discourse into classes of indiscernible elements.

In this paper, we extend the combined approach for ontology-based query 
answering to a variant of the DL \ELHbot augmented with rough concept constructors. 
We show that this extension preserves the good computational properties 
of classical \EL and can be implemented by standard database systems. 
\end{abstract}

%
%

\section{Introduction}

Ontology-based query answering performs database-style query answering over description logic (DL) knowledge 
bases (KBs), 
which consist of an ontology (or TBox) expressing terminological (i.e., background) knowledge about a domain, and a 
dataset (called ABox) containing facts about particular individuals. The knowledge in the KB is 
captured by means of concepts (unary predicates) and roles (binary relations). The use  of conceptual background 
knowledge allows one to derive more answers to queries than from the data alone. The queries considered are 
typically conjunctive queries, which are special forms of first-order (FO) queries. 
The expressivity of a DL is determined by the concept (and sometimes also role) constructors it provides to describe important notions from the application domain. 
In classical DLs concepts represent unary predicates and hence are interpreted as sets of elements. Thus,
classical DLs lack capabilities of modeling uncertainty or vagueness \cite{LuSt08}. 

A moderate form of relaxation of concepts can be achieved by interpreting them as \emph{rough sets} \cite{Pawlak-PR-82}. 
Rough sets employ an \emph{indiscernibility relation} $\ir$, which groups objects that are considered to be 
indistinguishable from one another. The relation \ir effectively partitions the set of elements into so-called 
\emph{granules}. A granule, in essence, relaxes the notion of an element to a class of equivalent elements.
In rough sets, every classic set, say $S$, is accompanied by two sets. The \emph{lower approximation} $\underline{S}$ contains 
elements that all share the properties of elements in $S$ as it contains those partitions that lie completely in $S$. 
The \emph{upper approximation} $\overline{S}$  contains elements that are indistinguishable from an element in $S$, i.e., it 
contains those granules that  overlap with $S$.
%
Rough sets are employed in knowledge discovery and data mining, among others \cite{lin2012rough}.\anni{NTS: Expand.}

The capability of rough sets to relax objects in the data was already noticed in \cite{Pawlak-RoughReasoning-98} and 
is a standard way to relax database queries.
One of the goals of this paper is to extend these ideas to relax ontology-based query answering techniques.

In the context of DLs, concept constructors for upper (and lower) 
approximations provide means to relax (and crispen) concepts, while granules effectively relax objects. 
The idea to use rough set interpretations for DLs is not new 
\cite{liau1996rough,KleinMS-URSW-07,ScKP07,JiWaTX-09,Keet-EKAW10}. Rough DLs typically have concept constructors for the upper and  the lower approximation of concepts. 
%
One of their basic motivations is medical applications \cite{KleinMS-URSW-07,ScKP07}, where, 
for instance, patients can be indistinguishable by their symptoms or drugs and their generica can be 
indistinguishable by their active agent. Similarly, they were suggested to enhance the web ontology language 
OWL \cite{Keet-EKAW10} or to solve the identity matching problem in the linked data cloud 
\cite{KleinMS-URSW-07,BSvh-ESWC-16}. As in database settings, indiscernibility relations for rough DLs can 
be derived automatically from the data \cite{dAmFEL-13,BSvh-ESWC-16} making rough DLs amenable 
for practical applications.

Another approach for dealing with vagueness is based on fuzzy logic.
While fuzzy DLs \cite{BCE+-15} can express vagueness regarding the concept membership of objects, rough DLs can express 
granularity of objects. The former DLs can easily turn undecidable \cite{BoDP-AIJ15,BoCP-IJA17}, but the latter 
are always decidable, as long as the underlying classical DL is. 
Reasoning procedures for classical reasoning tasks such as satisfiability 
or subsumption, i.e.,\ the computation of  sub- and super-concept relationships in rough DLs were
proposed in
\cite{KleinMS-URSW-07,Keet-11,PenZ13}. In fact, if inverse roles, transitive roles and role hierarchies are available 
in a DL, then reasoning in its rough variant can be reduced to it \cite{KleinMS-URSW-07}.
%
The lightweight DL \EL has only conjunction and existential restrictions 
as concept constructors and thus such a reduction would use a much more expressive logic with higher 
computational complexity. \EL cannot express contradictions, thus subsumption is the interesting reasoning 
task, and can be decided in polynomial time \cite{BaaderBrandtLutz-IJCAI-05} by means of canonical models 
\cite{LuWo-JSC-10}. The  subsumption decision procedure  based on canonical models was lifted in 
\cite{PenZ13} to \RELHbot---a rough variant of \EL with role hierarchies extended by constructors for upper 
and lower approximations of concepts. 
\newpart{
This rough DL can be used, for example, to model biological species through their phenotypical characteristics,
which are often vague in nature. For example, the edible \emph{Agaricus arvensis} mushroom is described to have
an ``anise-like'' smell, ``ellipsoid'' spores, among other characteristics. Thus, we can say that this mushroom
belongs to the concept
\[
\efont{Edible} \sqcap \exists \efont{hasSmell}.\overline{\efont{Anise}}\sqcap 
	\exists \efont{hasSpores}.\exists \efont{hasShape}.\underline{\efont{Ellipse}}
\]
}


We consider ontology-based query answering in \RELHbot.   For this task, we use conjunctive queries 
that admit \RELHbot concepts and the indiscernibility relation \ir in the atoms of the query. 
\newpart{
For example, when preparing a field-guide to mushroom picking, it is important to highlight possible confusions 
between edible and poisonous mushrooms to avoid an intoxication. More precisely, one could query for all
pairs of mushrooms that are morphologically similar, but where one is edible and the other is not, through the query
\begin{align*}
\aquery(x_1,x_2) = \exists y_1,y_2.
	& \efont{Mushroom}(x_1)\land \efont{Edible}(x_1) \land {}\\
	& \efont{Mushroom}(x_2)\land \efont{Poisonous}(x_2) \land {}\\
	& \efont{hasShape}(x_1,y_1) \land {} \\
	& \efont{hasShape}(x_2,y_2) \land \ir(y_1,y_2).
\end{align*}
Such a query can be further refined, for example,  to return additionally the smell of the poisonous elements, or to 
consider other characteristics like color, size, or the shape of the spores. In this case, the query described
above could return the two answers that \emph{Agaricus arvensis} (which is edible) may be confused with the poisonous
\emph{Agaricus xanthodermus} and with \emph{Agaricus pilatianus}. The refined query would state that both 
poisonous species have a pungent smell, which makes them easy to differentiate from \emph{A. arvensis}.

Obviously, the relevance of rough CQ answering is not limited to the identification of mushrooms or other biological
species. It has also applications in medicine \cite{ScKP07}, for suggesting adequate treatments after identifying symptoms, and diseases, which usually have vague descriptions.
Furthermore rough CQ answering is applied in
verification, for quality control; and in online marketing, for handling similar clients uniformly, among many
others.
}


A well-known approach to answering conjunctive queries for classical \EL is the \emph{combined approach} \cite{LutzTW09}.  
It proceeds in two steps.
First, all the knowledge from the TBox is `absorbed' into the ABox. After this step only the data in the 
materialized ABox, but not the TBox, needs to be regarded for answering the query.
The materialized ABox introduces auxiliary elements to represent information about all syntactical sub-concepts 
occurring in the TBox. Hence, such a materialized ABox may give `spurious' answers to the original query, 
due to joins at auxiliary elements in the materialized ABox.
In the second step of the approach, the query is rewritten. The rewriting complements the query with filter conditions that sift out the spurious answers.
The combined approach is designed to be implemented by  database systems. The materialized ABox can be 
represented in a database and the rewritten conjunctive query can be expressed by standard database query 
languages. This approach has been implemented in competitive systems such as Combo system \cite{LSTW-ISWC-13}, 
and, based on Datalog, in RDFox  \cite{MNPHO-AAAI-14} and Hermit \cite{SM-AAAI-15}.

To lift the combined approach for \EL to the rough DL \RELHbot, the materialized ABox needs to be further 
augmented by new auxiliary elements. \newpart{These new elements represent the upper and lower 
approximations of concepts. Due to their semantics, they can give rise to new kinds of joins, which can in 
turn cause new kinds of spurious elements that are not detected by the filters employed for the classical \EL 
query answering method. Thus, it is important to provide new filter predicates for the rewritten query in the 
presence of rough information.}

\begin{tr}
	This technical report extends the original paper \cite{paper} by an appendix that contains the missing
	proofs.
	In detail, this report is structured as follows:	the next section introduces the basic notions for (rough) DLs and conjunctive query answering. Section \ref{sec:canonical} describes the absorption of TBox information into the ABox. Section \ref{sec:rewriting} develops the new filter conditions for the query rewriting. Section \ref{sec:assorted-facts} discusses possible extensions of the setting considered in the technical sections, before
	concluding with an outlook for future work.
	Appendix \ref{sec:appA} covers the proofs and additional definitions for Section \ref{sec:canonical} and the Appendix  \ref{sec:appB} does so for Section \ref{sec:rewriting}.	
\end{tr}
\begin{paper}
	For lack of space, full proofs are not included in this paper. \newpart{They can be found in the appendix uploaded with this submission.} 
	
\end{paper}

%

\section{Preliminaries}\label{sec:preliminaries}
We introduce the rough DL \RELHbot, that extends the classical DL  \ELHbot by an indiscernibility relation and by concept constructors 
for the {lower} and the {upper approximation}. Based on this, we define the problem of answering conjunctive queries that we consider.

\mypar{Syntax} Let \NC, \NR, and \NI be non-empty, pairwise disjoint sets of
\emph{concept names}, \emph{role names}, and \emph{individual names}, 
respectively, and let \ir be the \emph{indiscernibility relation}. 
  \RELHbot  \emph{concepts} are built inductively by the following syntax rule
(where $A\in\NC$ and $r\in\NR$):
  \begin{align*}
  	C &::=~A\,\mid\,\top\,\mid\,\bot\,\mid\, C\sqcap C\,\mid\, \exists r.C\,\mid\,
  	\overline{D}\,\mid\, \underline{D}.
  \end{align*}
Concepts of the form $\overline{C}$ (resp.\ $\underline{C}$) are  called the \emph{upper} (resp.\ \emph{lower}) \emph{approximation} of $C$.
  Let $A\in\NC$, $r, s\in\NR$, $a,b\in\NI$, and $C$ and $D$ be concepts.
  \emph{Axioms} are the following kinds of expressions:
  \emph{general concept inclusions} (GCIs) of the
  form $C\sqsubseteq D$, \emph{role inclusions} (RIs) of the form $r\sqsubseteq s$, and \emph{assertions} of the form $A(a)$, $r(a,b)$, or $\ir(a,b)$.
  A \emph{TBox} \Tmc is a finite set of GCIs and RIs, and an \emph{ABox} \Amc is a finite set of assertions.  Together, 
  they form a \emph{knowledge base} (KB) $\Kmc=(\Tmc,\Amc)$.
    
    Note that the indiscernibility relation \ir is not an
    element of the set of role names \NR and  does not occur in TBoxes explicitly, but it  
    can be used directly in ABoxes to state that two objects cannot be distinguished.  
    The relation \ir is the basis for the semantics of the upper and lower approximation.

    We denote the sets of all concept names, role names, individual names, and concepts (including syntactic sub-concepts) occurring in a set $X$ of expressions by $\NC(X)$, $\NR(X)$, $\NI(X)$, and  $\Con(X)$, respectively.
    
   \mypar{Semantics}  
     An \emph{interpretation} $\Imc=(\Delta^\Imc,\cdot^\Imc)$ consists of a non-empty set $\Delta^\Imc$, 
     called the \emph{domain} of \Imc, and an \emph{interpretation function} $\cdot^\Imc$, 
     which assigns to every $A\in\NC$ a
     	set $A^\Imc\subseteq\Delta^\Imc$, to every $r\in\NR$ a binary
     	relation $r^\Imc\subseteq\Delta^\Imc\times\Delta^\Imc$, 
     	to every $a\in\NI$ an element $a^\Imc\in\Delta^\Imc$ such that, for all $a,b\in\NI$, $a^\Imc\neq b^\Imc$ if 
	$a\neq b$ (unique name assumption),
     	and to \ir an equivalence relation \ira{\Imc} on $\Delta^\Imc$.

    Let $[x]_{\sim}$ denote the  \emph{equivalence class} of $x\in\Delta^\Imc$ under the relation $\sim$.
      The function $\cdot^\Imc$ is extended to complex concepts by setting $\top^\Imc:=\Delta^\Imc$,
     $\bot^\Imc:=\emptyset$, and 
     \begin{align*}
     (D\sqcap E)^\Imc{} &:=D^\Imc\cap E^\Imc \\
     (\exists r.D)^\Imc&{} :=\{x\in\Delta^\Imc\mid\exists y\in\Delta^\Imc,(x,y)\in r^\Imc,y\in D^\Imc\}\\
     \overline{D}^\Imc&{} :=\{x\in\Delta^\Imc\mid[x]_{\ira{\Imc}}\cap D^\Imc\not=\emptyset\}\\
     \underline{D}^\Imc&{} :=\{x\in\Delta^\Imc\mid[x]_{\ira{\Imc}}\subseteq D^\Imc\}.
     \end{align*}
The \emph{granule} of an element $x\in\Delta^\Imc$ is the equivalence class $[x]_{{\ira{\Imc}}}$ of 
elements indiscernible from $x$.  Intuitively, $\overline{D}$ relaxes $D$ to the union of all those granules 
with elements in $D$. Inversely, $\underline{D}$ strengthens $D^\Imc$ to those elements whose granule is 
fully contained in $D^\Imc$. \newpart{Observe that the lower approximation behaves to some extent like a 
value restriction from more expressive DLs in the sense that it refers to \emph{all elements of a granule}.} 
%
\begin{figure}[tb]
	\centering \small
	\begin{tikzpicture} [domain=0:14, scale=0.52, baseline=0, >=stealth,opacity=0.75]
	\definecolor{MyColor}{rgb}{0.83,0.83,0.83};
	\foreach \y in {1} {\foreach \x in {3,4,5,6,7,8} {\path[fill=MyColor] (\x,\y) rectangle ++(1,1); } }
	\foreach \y in {2} {\foreach \x in {2,3,4,5,6,7,8,9} {\path[fill=MyColor] (\x,\y) rectangle ++(1,1); } }
	\foreach \y in {3} {\foreach \x in {1,2,3,4,5,6,7,8,9} {\path[fill=MyColor] (\x,\y) rectangle ++(1,1); } }
	\foreach \y in {4} {\foreach \x in {1,2,3,4,5,6,7,8,9} {\path[fill=MyColor] (\x,\y) rectangle ++(1,1); } }
	\foreach \y in {5} {\foreach \x in {2,3,4,5,6,7,8,9} {\path[fill=MyColor] (\x,\y) rectangle ++(1,1); } }
	\foreach \y in {6} {\foreach \x in {3,4,5,6,7,8} {\path[fill=MyColor] (\x,\y) rectangle ++(1,1); } }
	\definecolor{MyColor}{rgb}{0.7,0.7,0.7};
	\foreach \y in {2} {\foreach \x in {4,5,6,7} {\path[fill=MyColor] (\x,\y) rectangle ++(1,1); } }
	\foreach \y in {3} {\foreach \x in {3,4,5,6,7,8} {\path[fill=MyColor] (\x,\y) rectangle ++(1,1); } }
	\foreach \y in {4} {\foreach \x in {3,4,5,6,7,8} {\path[fill=MyColor] (\x,\y) rectangle ++(1,1); } }
	\foreach \y in {5} {\foreach \x in {4,5,6,7} {\path[fill=MyColor] (\x,\y) rectangle ++(1,1); } }
	\filldraw [fill=none,draw=black,thick,label={above:{$C$}}] (5.89,3.90) ellipse (3.99 and 2.6);
	\draw[step=1cm,very thin,opacity=0.4,black!40] (0,0) grid (11,8); 
	\draw (10mm,65mm) node[above,opacity=1] {\Large $\Delta^\Imc$};
	\draw (62mm,31mm) node[above,opacity=1] {\large $\underline{C}^\Imc$};
	\draw (15.3mm,38mm) node[above,opacity=1] {\large $\overline{C}^\Imc$};
	\draw (62mm,2mm) node[above,opacity=1] {\large ${C}^\Imc$};
	\end{tikzpicture}
	\caption{Semantics of a concept (ellipse), its upper (light grey) and lower (dark grey) approximation.}
	\label{fig:sem-app}
\end{figure}
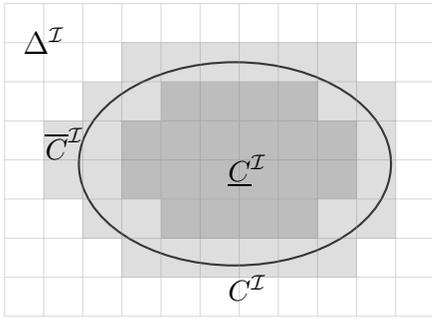
\newpart{The semantics of the upper approximation $\overline{C}$ and the lower approximation $\underline{C}$ are shown in Figure \ref{fig:sem-app} in relation to concept $C$. 
}    
     The interpretation \Imc is a \emph{model} of the GCI $C\sqsubseteq D$ iff
     $C^\Imc\subseteq D^\Imc$, the RI $r\sqsubseteq s$ iff
     $r^\Imc\subseteq s^\Imc$, the assertion $A(a)$ iff $a^\Imc\in A^\Imc$ 
     and the assertion $\widehat{r}(a,b)$ with $\widehat{r}\in\NRir$ iff  $(a^\Imc,b^\Imc)\in \widehat{r}^\Imc$.
An interpretation \Imc is a \emph{model} of (or \emph{satisfies}) a set of axioms $X$, written $\Imc\models X$, iff it is a model of all 
axioms in $X$.
A KB $\Kmc = (\Tmc, \Amc)$ is \emph{consistent} if $\Tmc \cup \Amc$ has a model, and \emph{inconsistent} otherwise. 
\Kmc \emph{entails} an axiom 
$\alpha$, written $\Kmc\models\alpha$, iff all models of \Kmc also satisfy $\alpha$. 
Given two concepts $C$ and $D$, we say that $C$ \emph{subsumes} $D$ w.r.t.\ \Kmc 
(written $C \sqsubseteq_{\Kmc} D$), 
iff  $C^\Imc \subseteq D^\Imc$ holds in every model \Imc of the KB \Kmc.

   \mypar{Query Answering} 
%
Consider a set of \emph{variables} \NV which is disjoint from $\NI \cup \NC \cup \NI$, and let $\NT:=\NV \cup \NI$ be the set of \emph{terms}. 
A \emph{first-order (FO) query} is a FO formula $\aquery(\vec{x})$ over the signature  $\NC\cup\NR\cup\{\ir\}\cup  \NT$.

The tuple
$\vec{x}=x_1,\ldots,x_k$ with $x_i\in\NV$ for all $i$, with $1\le i\le k$ are
the \emph{answer variables} of $\aquery(\vec{x})$. A query containing $k$ answer variables is a
$k$-ary query.
Let $C$ be an \RELHbot concept, ${r}\in\NR$, and $t,u\in\NT$. A \emph{conjunctive query} (CQ) is a FO query of the form $\exists \vec{v}.\aquery(\vec{v},\vec{w})$, where 
$\aquery$ is a (possibly empty) conjunction built of concept atoms $C(t)$, role atoms ${r}(t,u)$, and indiscernibility 
atoms $\ir(t,u)$. The empty conjunction is
denoted by \true.

Given an interpretation \Imc, a $k$-ary FO query $\aquery(\vec{x})$, and $a_i\in\NI$ for $i$, with $1\le i\le k$, 
we write $\Imc\models\aquery(a_1,\ldots,a_k)$ if the interpretation \Imc satisfies $\aquery(\vec{x})$ with $x_i$ assigned to $a_i^\Imc$ for 
$i$, with $1\le i\le k$, and call $(a_1,\ldots,a_k)$ an \emph{answer} to $\aquery$ in \Imc.
Such a tuple $(a_1,\ldots,a_k)$ 
is a \emph{certain answer} to $\aquery$ w.r.t.\  a KB \Kmc if, for every model \Imc of \Kmc, we have 
$\Imc\models\aquery(a_1,\ldots,a_k)$. The set $\Cert(\aquery,\Kmc)$ contains all
certain answers for a given CQ $\aquery$ w.r.t.\ a KB \Kmc. 
The reasoning task investigated in this paper is  \emph{CQ answering} in \RELHbot, i.e., the computation of the set $\Cert(\aquery,\Kmc)$. %

 When convenient, we view a CQ $\aquery$ as the set of atoms occurring in it.
For a given query \aquery we use the following sets: 
$\Term(\aquery)$ for its  terms,
 $\Var(\aquery)$ for its variables, 
$\AVar(\aquery)$ for its answer variables and $\QVar(\aquery)$ for its quantified variables.

%
\section{The Combined Approach for \RELHbot}

Recall that the combined approach for query answering first absorbs the TBox information into the ABox. 
Afterwards, it computes a query rewriting that augments the initial query by filter conditions. 

Let $\Kmc = (\Tmc, \Amc)$ be a KB.  For the remainder of the paper
we make the following simplifying assumptions w.l.o.g.\
\begin{enumerate}
	\item  CQs over \Kmc contain only individual names that occur  in \Kmc, 
	\item \Kmc contains no role synonyms; i.e., there are no $r,s\in\NR$ such that $r\not=s$ and $\Kmc\models
	\{r\sqsubseteq s, s\sqsubseteq r\}$, and 
	\item all concept names that appear in \Amc appear also in \Tmc.  
\end{enumerate}
For the rest of the paper let $\aquery$ be a $k$-ary CQ to be answered w.r.t.\ a consistent \RELHbot KB $\Kmc=(\Tmc,\Amc)$.

\subsection{Absorption of  TBox Axioms} 
\label{sec:canonical}

The goal of TBox absorption is to rewrite the ABox in such a way that all the background knowledge is
already included in it. In this way, the TBox can be disregarded in the
query answering process, using only the relevant information encoded in the rewritten ABox. 
We show how this method, originally devised for \EL, can be lifted to rough DL \RELHbot. 
\begin{figure*}[t]
	\fbox{%
	\parbox[t]{0.515\textwidth}{%
	\begin{align*}
	\\
	\Delta^\IO:={}&\Ind(\Amc)\cup\NIA\cup\NIL\cup\NIU
	\\[3mm] 
	a^\IO:={} & a
	\\
	\\
	\\
	r^\IO:={}& \{(a, b) \mid s(a,b)\in\Amc,~\Kmc\models s\sqsubseteq r\} ~\cup~
	\\
	&	\{(a,x_C)\in\Ind(\Amc)\times\NIA\mid \Kmc\models\exists r.C(a)\} ~\cup~
		\\
	&\{(x_C,x_D)\in\NIA\times\NIA\mid \Kmc\models
	C\sqsubseteq\exists r.D \} ~\cup 
	\\
	&	\{(x_{C,e},x_D)\in\NIU\times\NIA\mid \Kmc\models C\sqsubseteq\exists r.D \} ~\cup~
		\\
	&\{(x_{C,b},x_D),(\ell_{b},x_D)\in\NIT\times\NIA\mid \Kmc\models \underline{\exists r.D}(b)\} ~\cup
	\\
	&	\{(x_{C,x_E},x_D),(\ell_{x_E},x_D)\!\in\NIT\!\times\NIA\!\mid \Kmc\!\models E\sqsubseteq\underline{\exists r.D} \} \!\!\!
	\end{align*}
	}
	~~~
	\parbox[t]{0.445\textwidth}{
	\begin{align*}
	A^\IO:={}&
	\{a\in\Ind(\Amc)\ |\ \Kmc\models A(a)\} ~\cup~
	\\
	&\{x_C\in\NIA\ |\ \Kmc\models C\sqsubseteq A\} ~\cup~
	\\
	&\{x_{C,e}\in\NIU\ |\ \Kmc\models C\sqsubseteq A\} ~\cup~
	\\
	&\{x_{C,b},\ell_{b}\in\NIT\ |\ \Kmc\models\underline{A}(b)\} ~\cup~ 
	\\
	&	\{x_{C,x_D},\ell_{x_D}\in\NIT\ |\ \Kmc\models D\sqsubseteq \underline{A}\} 
	\\[2.5mm] 
	\iras{}{\Kmc}:={}& \{(a, b) \mid {\ir}(a, b)\in\Amc \} ~\cup~ 
	\\
	&	\{(a, x_{C,a})\in\Ind(\Amc)\times\NIU\mid \Kmc\models\overline{C}(a)\} ~\cup~ 
	\\
	&		\{(e,\ell_e)\mid\ell_e\in\NIL\} ~\cup{}
			\\
	&\{(x_C,x_{D,x_{C}})\in\NIA\times\NIU\mid \Kmc\models C\sqsubseteq\overline{D} \} ~\cup~ 
	\\
	&	\{(x_{C,e},x_{D,e})\in\NIU\times\NIU\mid \Kmc\models C\sqsubseteq\overline{D} \} 
	\\[1mm]
	\ira{\IO} :={} & 
		\text{ reflexive, symmetric, transitive closure of } 
		\iras{}{\Kmc}
	\end{align*}%
}}
	\caption{The canonical interpretation $\IO=(\Delta^{\IO},\cdot^{\IO})$ of \Kmc, where
		$a\in\NI(\Amc), A\in\NC(\Kmc),$ and $r\in\NR(\Kmc)$. 
		}
	\label{def:io}
\end{figure*}

\newpart{
ABox rewritings are usually represented as canonical interpretations. The canonical interpretations 
\cite{LuWo-JSC-10} used in the combined approach for \EL \cite{LutzTW09}, need to be extended for \RELHbot 
to accommodate the information from the upper and lower approximation concept constructors and from the 
$\rho$-assertions in the ABox. Canonical models that treat upper and lower approximations were previously
described in \cite{PenZ13}, where the goal was to decide concept subsumption and thus the focus was on the 
TBox only. For our case these canonical models need to be extended to represent the information from the (input) 
ABox too. 
} 

\newpart{
To formally define the canonical interpretations, we must introduce the \emph{normal form}. We say that 
a TBox is in normal form if all its GCIs are of the form
\begin{align*}
A\sqcap B & \sqsubseteq C, & 
\exists r.A & \sqsubseteq B,& 
A & \sqsubseteq \exists r.B,
\\
A & \sqsubseteq \underline{B}, & 
A & \sqsubseteq \overline{B},  & 
\underline{A}& \sqsubseteq B, 
\end{align*}
%
 where $A,B$ are concept names or $\top$ and $C$ is a concept name, $\bot$ or $\top$. 
Every \RELHbot TBox can be transformed to normal form in polynomial time. In the following we assume
that the TBox is always in normal form.
} 

The canonical interpretations of \RELHbot
contain four sorts of domain elements. We first give an overview of the sorts and then define the sets containing 
them. Two sorts are as in canonical interpretations for classical \EL: representatives for individual names 
occurring in the ABox \Amc\ \newpart{ (collected in the set $\NI(\Amc)$)}, and for concepts occurring in the 
TBox \Tmc \newpart{(collected in \NIA)}. 
We call these elements \emph{seed elements}. Additionally, we use two new sorts of domain elements: 
representatives for the lower approximations of each concept or individual occurring in the KB 
\newpart{(collected in \NIL)} and representatives for members of the upper approximations of concepts 
\newpart{(collected in \NIU)}.

We turn now to the definition of the sets capturing these four sorts of domain elements.
For simplicity, the \emph{named elements} representing the individual names are denoted by the corresponding 
names from $\NI(\Amc)$. 
The other elements are called
\emph{auxiliary elements} and are contained in the  sets:
\begin{align*}
\NIA:={}& 
\{x_C \mid C\in\Con(\Tmc) \} \\
\NIU:={}& 
\{x_{C,e}\mid C\in\Con(\Tmc),e\in\Ind(\Amc)\cup\NIA\} \\
\NIL:={}& 
\{\ell_{e} \mid e\in\Ind(\Amc)\cup\NIA
\}
\end{align*}
Intuitively, the auxiliary elements stand for the following:
\begin{itemize}
	\item $x_C\in\NIA$ represents an element that satisfies $C$ and acts as role-successor; 
		it is employed to make the predecessors satisfy concepts of the form  $\exists r.C$; 
	\item $x_{C,e} \in \NIU$ represents an element that 	
		satisfies $C$. \newpart{If the seed element $e$ is an individual, then $x_{C,e}$ is indiscernible from  $e$. In the case where the seed element $e$ is a concept, then $x_{C,e}$ represents that every element from $e$ is indistinguishable from some element in $C$. The element $x_{C,e}$} is used to make the seed element $e$ satisfy $\overline{C}$; and
	\item $\ell_e \in \NIL$ represents an element 
		satisfying exactly those concepts $C$ that are satisfied by all elements in the \newpart{lower approximation of $e$. If seed element $e$ is an individual, then}
		$\ell_e$ is indiscernible from element $e$. \newpart{If seed element $e$ is a concept, then $\ell_e$ represents all granules fully contained in  $e$.}
		The seed element $e$ satisfies $\underline{C}$ for all concepts $C$ associated to $\ell_e$.
\end{itemize}
\newpart{
	Sometimes we use the short-hand $\NIT = \NIU \cup \NIL$ for the `non-seed' elements.
	Observe that all elements in \NIU or \NIL are `caused' by a seed element. 
	The idea is that in the canonical interpretation each seed element is associated with an element representing this seed element's lower approximation. ABox individuals have the same granule as their lower or upper approximation, thus they only induce one element in \NIL. In contrast to this, concepts from \Tmc can have several granules in their approximations. Here, the lower approximation captures what is common to all granules in the lower approximation, thus one element in \NIL representing the lower approximation of a concept suffices. The granules in the upper approximation of a concept $C$ can overlap with different concepts or individuals $e$, thus different representatives for each such overlap are introduced in \NIU. 
During the reasoning process it can be discovered that some of the granule representatives belong into the 
same granule, which then gives rise to $\ir$-edges between the granule representatives.
}

The \emph{canonical interpretation} \IO of a KB $\Kmc$ is formally defined in Figure~\ref{def:io}, through a
description of the interpretation function of all the relevant elements.
The size of $\Delta^{\IO}$ is polynomial (more precisely, cubic) in the size of \Kmc. Moreover, \IO is computable in 
polynomial time, 
and consistency of \Kmc can be checked in polynomial time \cite{PenZ13}.
\anni{If time and space permits: elaborate on canonical model definition.}
\veronika{hmm. i am not sure. i would say it is pretty self-explanatory?}

\begin{example}
\label{ex:start}
Consider  $\Kex=(\Tex,\Aex)$ with
$\Tex=\{D\sqsubseteq \overline{C}, \\ C\sqsubseteq A\sqcap\underline{B} \}$, and 
$\Aex=\{ C(a), \overline{D}(a), \exists r.D(b), \ir(a,b) \}$. 
	
Figure~\ref{fig:elcmlowup} depicts its canonical interpretation \IKex (omitting transitive, reflexive $\rho$-edges).
As in  classical canonical interpretations, 
$a$ is an instance of $A$ since ($*$) \IKex satisfies both $C(a)$ and $C\sqsubseteq A\sqcap\underline{B}$. E
The element $b$ is an instance of $\exists r.D$, since it is related to the representative instance of $D$ 
($x_{D}\in\NIA$) 
via $r^{\IKex}$.

In the rough setting, the relation \ira\IKex comes into play and
($*$) yields that $a$ is an instance of $\underline{B}$; i.e., all elements in $[a]_\ir$, especially 
$\ell_a\in\NIL$, instantiate $B$ in \IKex. 
Since $D\sqsubseteq \overline{C}$, 
$x_{D}$ instantiates $\overline{C}$; i.e., it is related via \ira{\IKex} to its representative \ira{}-successor instantiating 
$C$: $x_{C,x_{D}}$.
The latter similarly holds for $x_{D, a}$, the representative \ira{}-successor of $a$ that instantiates $D$; 
i.e., $x_{D, a}$ is related via \ira{\IKex}$\!\!\!$ to its representative successor that is an instance of $C$, $x_{C,a}$.
Note, that $x_{D, a}$ exists due to the assertion $\overline{D}(a)$.
\end{example}
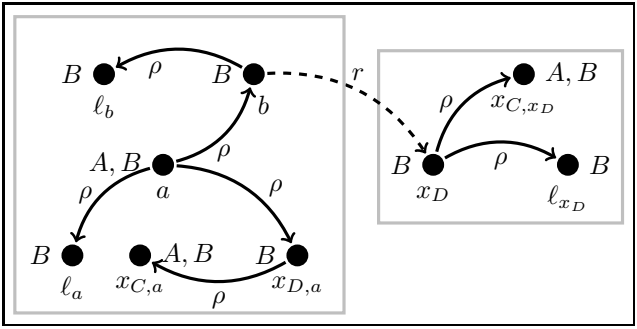
\begin{figure}[bt]
\hfil
\fbox{\parbox[t]{0.97\columnwidth}{\centering
	\begin{tikzpicture}[very thick,->,shorten >=1pt,shorten <=1pt] 
	\tikzstyle{dot} = [circle, draw,inner sep=2.5pt,fill]
	\node[dot,label=below:{$a$},label=left:{$A,B$}] (a) {};
	\node[dot,above right=of a,label=below:{$\ \ \ b$},label=left:{$B$}] (b) {};
	\node[dot,right=33mm of a,label=below:{$x_D$},label={[name=lxdb]left:{$B$}}] (xd) {};
	%
	%
	\node[dot,right=15mm of xd,label={[name=llxd]below:$\ell_{x_D}$},label={[name=lxdB]right:{$B$}}] (lxd) {}; 
	\node[dot,left=17mm of b,label=below:{$\ell_b$},label=left:{$B$}] (lb) {}; 
	\node[dot,below left=of a,label=below:{$\ell_a$},label=left:{$B$}] (la) {};
	%
	\node[dot,right=27mm of la,label=below:{$x_{D,a}$},label=left:{$B$}] (xda) {};
	\node[dot,left=18mm of xda,label=below:{$x_{C,a}$},label=right:{$A,B$}] (xca) {};
	\node[dot,above right=of xd,label=below:{$x_{C,x_D}$},label={[name=lxcxd]right:{$A,B$}}] (xcxd) {};
	%
	\node[fit=(xd)(lxd)(llxd)(xcxd)(lxdB)(lxcxd)(lxdb),rectangle,draw,color=lightgray!,inner sep=.2mm] (xdbox) {};
	\draw[bend right] (a) to node[below]{$\ir$} (b);
	\draw[bend left, style=dashed] (b) to node[above] {$r$} (xd);
	\draw[bend right] (a) to node[left] {\ir} (la);
	\draw[bend right] (b) to node[left] {\raisebox{-7mm}{\ir}} (lb);
	\draw[bend left] (xd) to node[below] {\ir} (lxd);
	%
	\draw[bend left] (a) to node[right] {$\ \ \ir$} (xda);
	\draw[bend left] (xda) to node[below] {\ir} (xca);
	\draw[bend left] (xd) to node[left] {\ir} (xcxd);
	\node[fit=(a)(b)(la)(lb)(xda.south)(xca),rectangle,draw,color=lightgray!,inner sep=6.2mm] (abox) {};
	\end{tikzpicture}
}} 
\caption{The canonical interpretation \IKex (without  transitive, reflexive {\ira{\IKex}\!\!\!}-edges and $C, D$ labels) 
for the KB $\Kex=(\Tex,\Aex)$ from Example~\ref{ex:start} shown as a graph.
Nodes represent domain elements and are labeled by the concept names they instantiate, 
edges represent relations. The gray frames highlight the granules of $a$ and $x_D$.  } 
\label{fig:elcmlowup}
\end{figure}
%
Figure \ref{fig:elcmlowup} shows that canonical interpretations in \RELHbot correspond to the ones in \EL modulo the granules---by regarding each granule as a single element, the result is an \EL interpretation that satisfies the TBox without approximation constructors.
\newpart{However, role assertions from the ABox can establish role edges between members of the same granule.}

\begin{lemma}
	\label{lem:iomodel}
	If \Kmc is consistent, then $\IO $ is a model of~\Kmc.
\end{lemma}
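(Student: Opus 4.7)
The plan is to proceed by cases, first verifying that $\IO$ is a well-defined interpretation (the unique name assumption is immediate since named elements are their own names, and $\ira{\IO}$ is an equivalence relation by definition as the reflexive-symmetric-transitive closure of $\iras{}{\Kmc}$). Then I would check that every axiom in $\Tmc \cup \Amc$ is satisfied. The ABox assertions are handled directly: for a concept assertion $A(a) \in \Amc$ we have $\Kmc \models A(a)$, so $a \in A^\IO$ by the first clause; the role and $\ir$-assertion cases are analogous. Role inclusions $r \sqsubseteq s$ follow by inspecting each disjunct in the definition of $r^\IO$ and observing that the condition defining membership (e.g.\ $\Kmc \models \exists r.C(a)$, or $\Kmc \models C \sqsubseteq \exists r.D$) together with $\Kmc \models r \sqsubseteq s$ yields the corresponding condition for $s^\IO$.

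The bulk of the work is the GCI cases, which I would handle by assuming the TBox is in normal form and treating each of the six forms separately. For each form, I further split by the type of the element $x$ witnessing the left-hand side: named ($\Ind(\Amc)$), $x_C \in \NIA$, $x_{C,e} \in \NIU$, or $\ell_e \in \NIL$. For the ``classical'' cases $A \sqcap B \sqsubseteq C$ and $\exists r.A \sqsubseteq B$, membership in the left-hand side implies (via the corresponding clauses in Figure~\ref{def:io}) an entailment such as $\Kmc \models C' \sqsubseteq A$, $\Kmc \models C' \sqsubseteq B$, which by closure under subsumption in $\Kmc$ yields $\Kmc \models C' \sqsubseteq C$, and hence $x \in C^\IO$. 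For $A \sqsubseteq \exists r.B$, when $x$ witnesses $A$, the corresponding clauses in the definition of $r^\IO$ place an appropriate edge from $x$ to either a named element (from the ABox) or to the canonical witness $x_B \in \NIA$, which lies in $B^\IO$.

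The genuinely delicate cases are the approximation axioms. For $A \sqsubseteq \overline{B}$, I must exhibit a $\ira{\IO}$-neighbour of $x$ that lies in $B^\IO$: if $x$ is named, then the $\overline{B}(x)$-entailment forces an $\ira{}$-edge to $x_{B,x}$; if $x = x_C \in \NIA$ then $C \sqsubseteq A$ combined with $A \sqsubseteq \overline{B}$ yields $C \sqsubseteq \overline{B}$, providing the edge to $x_{B,x_C}$; the $\NIU$ case is parallel. For $A \sqsubseteq \underline{B}$ and $\underline{A} \sqsubseteq B$, I need to argue about \emph{all} members of a granule, which requires a careful description of what $[x]_{\ira{\IO}}$ looks like. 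My approach here is to prove a structural lemma describing the granules of $\IO$: a granule consists of a seed element $e$ together with its $\ell_e$ and all $x_{C,e}$ reached via the listed $\iras{}{\Kmc}$-edges, closed under transitivity. Crucially, the clauses putting $x_{C,b}$ and $\ell_b$ into $A^\IO$ whenever $\Kmc \models \underline{A}(b)$ (and analogously for $E \sqsubseteq \underline{A}$) ensure that the whole granule of $e$ is in $A^\IO$ precisely when $e$ satisfies $\underline{A}$ in every model, which is exactly what is needed.

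The main obstacle is exactly this granule analysis: since $\ira{\IO}$ is the transitive closure of a relation whose $\NIU$-to-$\NIU$ edges connect $x_{C,e}$ and $x_{D,e}$ whenever $C \sqsubseteq \overline{D}$, distinct-looking auxiliary elements may end up in the same granule, and I must verify that the concept labelling on every such element is consistent with that merging — so that $[x]_{\ira{\IO}} \subseteq A^\IO$ really does correspond to $\Kmc \models \underline{A}$ holding at the seed. The consistency assumption on $\Kmc$ is used here to rule out that any $\ell_e$ or $x_{C,e}$ gets placed in $\bot^\IO$, which would otherwise collapse the argument. Once this granule description is in hand, the remaining verifications for $A \sqsubseteq \underline{B}$ and $\underline{A} \sqsubseteq B$ reduce to invoking the appropriate clauses from Figure~\ref{def:io} together with transitivity of $\sqsubseteq_\Kmc$.
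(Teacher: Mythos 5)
Your proposal is correct in its essentials but organised differently from the paper's proof. The paper proves a single membership-characterisation lemma by structural induction on concepts: for every $C\in\Con(\Tmc)$, $a\in C^{\IO}$ iff $\Kmc\models C(a)$, $x_E\in C^{\IO}$ iff $\Kmc\models E\sqsubseteq C$, with analogous equivalences (carrying an extra disjunct $\Kmc\models\underline{C}(e)$) for the \NIU- and \NIL-elements; satisfaction of an arbitrary GCI $C\sqsubseteq D$ is then an immediate corollary of applying the equivalence to $C$ and to $D$. You instead exploit the normal form to do a direct case analysis over the six axiom shapes crossed with the four element sorts, using the definitional clauses of Figure~\ref{def:io} (which, for concept names, already \emph{are} the needed equivalences) together with a granule-structure fact corresponding to the paper's Fact~\ref{prop:iosim}. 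Both routes work. Yours is more elementary for this one lemma and avoids the induction entirely; the paper's characterisation is the reusable workhorse (it reappears as Lemma~\ref{lem:ioiffo} for \IOR, in the treatment of \UO, and in the correctness proof of the rewriting), so the induction is not wasted effort there.

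Two cautions. First, your granule description covers only granules anchored at a seed element. The unrestricted domain $\Delta^{\IO}$ (unlike $\Delta^{\IOR}$) also contains elements $x_{C,e}$ with $\Kmc\not\models\overline{C}(e)$ that are not $\ir$-connected to their seed; their artificially small granules can make $\underline{A}^{\IO}$ larger than the entailments warrant, which is precisely where the verification of $\underline{A}\sqsubseteq B$ is delicate. The paper's detailed proof sidesteps this by working with \IOR, where such orphans are unreachable and hence absent; to prove the statement for \IO itself you must either check the orphan granules explicitly or note that they never witness the left-hand side of a problematic GCI. Second, consistency of \Kmc is not used to keep elements out of $\bot^{\IO}$ --- that set is empty by definition of the semantics --- but rather to ensure that no GCI with $\bot$ on the right-hand side acquires a nonempty left-hand side in \IO; note that for auxiliary elements $x_C$ with $C$ unsatisfiable w.r.t.\ \Tmc this requires an argument (or an additional assumption) beyond mere consistency of \Kmc, a point the paper's sketch also glosses over.
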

\begin{proof}[Proof (Sketch)]
	By construction, $\IO $ is a model of \Amc and of all RIs in
	\Tmc. We need to show that the GCIs in \Tmc are satisfied. 
	By induction on the concept structure it can be shown that, for all $C\in\Con(\Tmc)$, 
	$a \in\NI(\Amc)$, and $x_E\in\NIA$, it holds that
	$a\in C^{\IO }$ iff $\Kmc\models C(a)$, and 
	 $x_E\in C^{\IO }$ iff $\Kmc\models E\sqsubseteq C$.
	 Similar equivalences hold for elements of the form $x_{C,e}$ and $\ell_e$.
	 Then, it is easy to show that the GCIs $C\sqsubseteq D$ are satisfied by applying the corresponding 
	 equivalences to $C$ and~$D$.
\end{proof}
%
%
As mentioned already, the interpretation \IO can be seen as an ABox that encodes all the information stated in the 
original KB \Kmc.
However, queries cannot be answered using \IO directly, for two reasons. The first reason is, that 
the domain $\Delta^{\IO}$ of \IO may contain superfluous elements.
For example, for the KB $\Kmc_2=(\{C\sqsubseteq A\}, \emptyset)$, \IOa2 contains an element
$x_C\in A^{\IOa2}$\negmedspace. Thus, the CQ $\aquery_2=\exists y.A(y)$ would return an empty tuple (meaning that the query can be satisfied)
w.r.t.\
\IOa2, even though this is not an answer w.r.t.\ $\Kmc_2$.
We therefore restrict the canonical model \IO to those domain elements that are reachable from
named elements. 

A \emph{path} in an interpretation \Imc is a finite
sequence $d_0\widehat{r_1}d_1\cdots \widehat{r_n}d_n$, $n\geq 0$, such that
$d_0\in\Ind(\Amc)$ and, for all $i$ with $1\le i\le n$,
$d_i\in\Delta^{\Imc}\setminus\Ind(\Amc)$, 
$\widehat{r_i}\in\NR\cup\{\iras{}{\Kmc}\}$, 
and
$(d_{i-1},d_{i})\in
\widehat{r_{i}}^{\Imc}$. 
$\Paths(\Imc)$ denotes the set of all paths in \Imc.
For a path $p=d_0r_1d_1\cdots r_nd_n$,  
define
$\Tail(p):= d_n$. 
\newpart{
Intuitively, each path starts with an element that represents an ABox individual, each such element starts a 
path and there is no second ABox individual on a path. Observe that paths are defined using \iras{}{\Kmc} and 
not its symmetric, reflexive, transitive closure. 
} 

To avoid the superfluous domain elements, the interpretation $\IOR$ is obtained from \IO by restricting its domain elements to those reachable from elements that represent ABox individuals, or, more formally:
$$\Delta^{\IOR} = \{\Tail(p)\mid p\in\Paths(\IO) \}.$$
The next fact follows directly from this definition and states for those seed elements reachable by paths, the members of their granule. Thus it clarifies the picture of the indiscernibility relation in $\IOR$.
\begin{fact}
	\label{prop:iosim}
		For all seed elements that are reachable by paths, i.e., for all $e\in\NI(\Amc)\cup (\NIA\cap \Delta^{\IOR})$, we have
		\begin{align*}
		[e]_{\ira{\IOR}}~=~
		&\{e,\ell_{e}
		\}~\cup~
		\{ x_{D,e}\in\NIU\cap \Delta^{\IOR} \} ~ \cup{}\\
		&\bigcup_{\ir(e,a)\in\Amc}
		\Big(\{a,\ell_{a}
		\}\cup 
		\{ x_{D,a}\in\NIU\cap \Delta^{\IOR} \}\Big).\qed
		\end{align*}%
\end{fact}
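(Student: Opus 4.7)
My plan is to directly unpack the definition of $\ira{\IOR}$, which is the restriction to $\Delta^{\IOR}$ of the reflexive, symmetric, transitive closure of $\iras{}{\Kmc}$ as given in Figure~\ref{def:io}. The argument then amounts to a case analysis on the six clauses defining $\iras{}{\Kmc}$ and can be split into the two inclusions.

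For the ``$\supseteq$'' direction I would exhibit, for each candidate element $d$ on the right-hand side, a short $\iras{}{\Kmc}$-chain from $e$ to $d$. The element $e$ itself is handled by reflexivity. The element $\ell_e$ is reached by the third clause $\{(e,\ell_e)\mid \ell_e\in\NIL\}$, which applies whenever $e\in\NI(\Amc)\cup\NIA$. Every $x_{D,e}\in\NIU\cap\Delta^{\IOR}$ is reached in one step: by the second clause if $e\in\NI(\Amc)$ (noting that $x_{D,e}$ being reachable forces $\Kmc\models\overline{D}(e)$), or by the fourth clause if $e=x_C\in\NIA$ (where reachability of $x_{D,e}$ forces $\Kmc\models C\sqsubseteq\overline{D}$). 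Finally, for each $a$ with $\ir(e,a)\in\Amc$ the first clause gives $(e,a)\in\iras{}{\Kmc}$, and the same arguments applied to~$a$ put $\ell_a$ and the relevant $x_{D,a}$ in the class by transitivity.

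For the ``$\subseteq$'' direction I would check that the right-hand side $S$ is already closed under $\iras{}{\Kmc}$ (and hence under its reflexive, symmetric, transitive closure). Inspecting the six clauses in Figure~\ref{def:io}, every $\iras{}{\Kmc}$-edge has one of two forms: either it connects a seed element to one of its own satellites in $\{\ell_e\}\cup\NIU_e$, or it connects two elements $x_{C,e}$ and $x_{D,e}$ sharing the same seed, or it connects two ABox individuals via a first-clause $\ir$-assertion. In each case, starting from an element of $S$ one cannot escape $S$, so $S$ is $\iras{}{\Kmc}$-closed.

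The one point that requires care is transitive chains of $\ir$-assertions in the ABox: if $\ir(e,a),\ir(a,a')\in\Amc$ but $\ir(e,a')\notin\Amc$, then $a'$ lies in $[e]_{\ira{\IOR}}$ yet is not explicitly listed in the single-hop union on the right-hand side. I would dispatch this by invoking the standing assumption (which can be made without loss of generality, since any model of $\Kmc$ must interpret $\ir$ as an equivalence relation) that the $\ir$-assertions in $\Amc$ are closed under symmetry and transitivity; this closure is computable in polynomial time and leaves the set of models of $\Kmc$ unchanged. Under this assumption the union over direct neighbours of $e$ in the ABox already captures all seed elements $\iras{}{\Kmc}$-reachable from $e$, and the two inclusions match.
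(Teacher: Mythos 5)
Your proof is correct and follows the only natural route: the paper itself offers no argument for this Fact (it is stated with a \qed as ``following directly from the definition''), and your two-inclusion case analysis over the clauses of $\iras{}{\Kmc}$ is exactly the verification being left implicit. Your observation about transitive (and symmetric) chains of $\ir$-assertions in $\Amc$ is a genuine and worthwhile catch: the statement is literally false for, say, $\Amc \supseteq \{\ir(a,b),\ir(b,c)\}$ without assuming the ABox $\ir$-assertions are symmetrically and transitively closed, an assumption the paper never states but evidently relies on, and your polynomial-time WLOG normalization is the right repair. The only cosmetic quibble is that your ``one-step'' reachability of $x_{D,e}$ from $e$ leans on the (true but unproved) entailment that $\Kmc\models\overline{C}(e)$ and $\Kmc\models C\sqsubseteq\overline{D}$ imply $\Kmc\models\overline{D}(e)$; you could avoid this entirely by just walking the clause-(5) chain $e \mathbin{\iras{}{\Kmc}} x_{C_1,e}\mathbin{\iras{}{\Kmc}}\cdots\mathbin{\iras{}{\Kmc}} x_{D,e}$ and invoking transitivity of the closure.
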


The second reason why  queries cannot be answered using \IO directly, is the unintended reuse of some elements.
\newpart{
In the classical case of \EL, the elements in {\NIA} can introduce unintended joins in the model, and hence yield
erroneous answers. 
As noticed in \cite{LutzTW09}, for the KB $\Kmc_3= (\Tmc_3, \Amc_3)$ with 
$\Tmc_3= \{ A \sqsubseteq \exists r.B \sqcap \exists s.B\}$ and $\Amc_3 = \{A(a)\}$, the element $a$ is 
connected to $x_B$ via $r$ and $s$ in \IOa3. Considering the query 
$\aquery_3(x) = \exists y. r(x, y) \wedge s(x, y)$, this gives 
rise to $\IOa3 \models \aquery_3(a)$, but $a \not\in \Cert(\aquery_3, \Kmc_3)$. 

In the \RELHbot case, with the interpretation $\IOR$, the unintended reuse additionally  affects  those elements 
from \NIT 
(connected to the \NIA-elements) that were induced by seed elements from \NIA. So, for the KB $\Kmc_3$, there would be
 (among others) the element $x_{B, x_B} \in \NIU$ in the domain of $\IOa3$. This element is connected to element $x_B$ by a $\ir$-edge.
 For the query $\aquery_{3}'(x) = \exists y. r(x, y) \wedge s(x, y) \wedge \overline{B}(y) $ this gives 
rise to $\IOa3 \models \aquery_3'(a)$, but $a \not\in \Cert(\aquery_3', \Kmc_3)$. 
}  

To remedy these effects, the canonical model 
can be \emph{unraveled} into a new, tree-shaped interpretation \UO so that the paths in \IOR become the 
domain elements of \UO.
%
The \emph{unraveling} of \IOR is the interpretation
$\UO=(\Delta^{\UO},\cdot^{\UO})$, where, for all $a\in\NI(\Amc), A\in\NC(\Kmc),$  $r\in\NR(\Kmc)$: 
			\begin{align*}
			\Delta^{\UO}:={}& \Paths(\IOR)\qquad\qquad\qquad\qquad
			\\
			a^{\UO}:={} & a 
			\\
			A^{\UO}:={}&
			\{p \mid \Tail(p)\in A^{\IOR}\} 
			\\
			r^{\UO}:={}&
			\{(a, b)\mid a,b\in\Ind(\Amc), (a, b)\in r^{\IOR}\}\ \cup
			\\
			&\{(p,p\cdot se)\mid p,p\cdot se\in\Delta^{\UO},\Kmc\models
			s\sqsubseteq r\} 
			\\
			 \iras{}{\Kmc'}:={} &\{(a, b)\mid a,b\in\Ind(\Amc), (a, b)\in
			\ira{\IOR}\}~\cup 
			\\ 
			& 
			\{(p,p\cdot \ir e)\mid p\cdot\ir e\in\Delta^{\UO}\}
			\\
			\ira{\UO} :={} & \text{ reflexive, symmetric, transitive closure of } \iras{}{\Kmc'}
			\end{align*}
		Note that the construction of $\UO$ from $\IOR$ does not depend on the GCIs
		but only on the RIs in \Tmc. 
\begin{restatable}{lemma}{lemUOCert}
			\label{lem:uo-cert}
For every $a_1,\ldots,a_k\in\Ind(\Amc)$, we have  that\\
				{$(a_1,\ldots,a_k)\in\Cert(\aquery,\Kmc) \text{ ~iff~ }
				\UO\models\aquery[a_1,\ldots,a_k].$}	
\end{restatable}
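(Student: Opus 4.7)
The plan is to prove the two directions by a standard canonical-model argument, leveraging Lemma~\ref{lem:iomodel} and Fact~\ref{prop:iosim}.

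For the $(\Rightarrow)$ direction, the key step is to show that $\UO$ is itself a model of $\Kmc$, from which the conclusion is immediate since certain answers hold in every model. I would argue this in three substeps: (i) the restriction from $\IO$ to $\IOR$ preserves modelhood because every role or $\rho$-successor demanded by a GCI at a reachable element is itself reachable, so no existential witness is lost; (ii) the unraveling $\UO$ of $\IOR$ preserves concept memberships along each path through the definition $A^\UO := \{p \mid \Tail(p)\in A^{\IOR}\}$, and its $r$- and $\rho$-edges mirror those of $\IOR$ along paths; (iii) the approximation constructors are correctly interpreted because the $\rho$-successors appended to a path $p$ realize exactly the granule-membership information of $\Tail(p)$ in $\IOR$. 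Together these give $\UO \models \Kmc$.

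For the $(\Leftarrow)$ direction, fix an arbitrary model $\Jmc$ of $\Kmc$ and assume $\UO\models\aquery[a_1,\ldots,a_k]$ via some assignment $\pi$ of quantified variables into $\Delta^\UO$. I would construct a map $h:\Delta^{\UO} \to \Delta^{\Jmc}$ inductively along paths: the base case sets $h(a) := a^\Jmc$ for $a\in\Ind(\Amc)$, and a path extension $p\cdot \widehat{r}\,e$ corresponds to an existential consequence of $\Kmc$ (coming from a GCI $C\sqsubseteq\exists r.D$, $C\sqsubseteq\overline{D}$, or an ABox $\rho$-assertion), so $\Jmc$ provides a suitable $\widehat{r}^\Jmc$-successor of $h(p)$ to serve as $h(p\cdot\widehat{r}\,e)$. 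Verifying that $h$ preserves every atom of $\aquery$ transfers the answer from $\UO$ to $\Jmc$, whence $(a_1,\ldots,a_k)\in \Cert(\aquery,\Kmc)$.

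The main obstacle is the preservation of lower-approximation concept atoms $\underline{D}(t)$ under $h$, since $\underline{D}$ behaves as a universal quantifier over $\rho$-neighbours and is not preserved by plain homomorphisms. The remedy is to exploit how $\UO$ was engineered: whenever $t\in\underline{D}^{\UO}$ for a term $t$ appearing in $\pi(\aquery)$, Fact~\ref{prop:iosim} and the unraveling construction let us trace $t$ back to a seed element $e$ (an ABox individual or an $x_E\in\NIA$) for which $\Kmc\models\underline{D}(e)$ respectively $\Kmc\models E\sqsubseteq\underline{D}$; since $\Jmc\models\Kmc$, the seed then already lies in $\underline{D}^\Jmc$, and one shows that this propagates to $h(t)$ using that $h$ respects the $\rho$-chain connecting $t$ to $e$. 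Analogous but easier tracing arguments handle the upper-approximation and ordinary concept atoms via the representatives $x_{D,e}$ and $x_D$, while role and indiscernibility atoms are preserved directly by the inductive construction of $h$.
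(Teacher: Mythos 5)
Your proposal is correct and follows essentially the same route as the paper: the forward direction by establishing $\UO\models\Kmc$ (the paper's Lemma~\ref{lem:uomodel}, built on the characterization of concept memberships in \IOR and their transfer to path tails in \UO), and the backward direction by an inductively defined map from paths of \UO into an arbitrary model whose concept-preservation property is secured not homomorphically but by appeal to the entailments $\Kmc\models\underline{D}(e)$ resp.\ $\Kmc\models E\sqsubseteq\underline{D}$ attached to the seed of each path --- exactly the remedy you describe for the lower-approximation atoms. The only place your sketch is thinner than the paper is substep (i): preserving modelhood when restricting to reachable elements requires not only that existential witnesses remain reachable but also that granules of reachable seeds are not truncated (otherwise a GCI with $\underline{A}$ on the left-hand side could become violated), which is precisely what Fact~\ref{prop:iosim} supplies.
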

%
The unraveling \UO gives the correct answers to CQs, but it is typically infinite; e.g.\ in the presence of terminological
cycles. The idea is therefore to focus on \IOR for 
CQ answering, but to take \UO as a kind of reference model. Specifically, the query $\aquery$ is extended with 
conditions that accept only answers compliant with \UO, by avoiding the unintended joins.


%
\subsection{The Query Rewriting}\label{sec:rewriting}

We focus now on the problem of \emph{rewriting} a CQ \aquery in such a way that the answers of its rewriting 
$\Rew$ w.r.t.\ \IOR correspond
exactly to the answers of the original query \aquery w.r.t.\ \Kmc.
More precisely, we want to prove the following result.
%
%
\begin{restatable}{theorem}{MainThm}
\label{thm:iocert} 
For every finite set of role inclusions \Rmc and each $k$-ary CQ $\aquery$,
one can construct in polynomial time a $k$\mbox{-}ary FO query
\Rew such that, for all \RELHbot KBs $\Kmc=(\Tmc,\Amc)$ 
using only the role inclusions \Rmc, and all 
$a_1,\ldots,a_k\in\Ind(\Amc)$,
we have  $$(a_1,\ldots,a_k)\in\Cert(\aquery,\Kmc)
\text{ ~iff~ } \IOR\models 
\Rew(a_1,\ldots,a_k).$$
\end{restatable}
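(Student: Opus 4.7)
The plan is to reduce Theorem~\ref{thm:iocert} to Lemma~\ref{lem:uo-cert} by characterising, via an FO formula over $\IOR$, exactly those matches of $\aquery$ in $\IOR$ that come from matches in $\UO$. By construction, every element of $\UO$ is a path whose last element lies in $\IOR$, so the map $\Tail\colon \Delta^{\UO}\to\Delta^{\IOR}$ sends every match $\pi'$ of $\aquery$ in $\UO$ to a match $\pi=\Tail\circ\pi'$ of $\aquery$ in $\IOR$. The problem is therefore to design $\Rew$ so that $\IOR\models\Rew(\vec a)$ exactly when some $\IOR$-match is ``liftable'' to $\UO$.

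I would first analyse which $\IOR$-matches are liftable. Since $\UO$ is tree-shaped on all non-named elements, unliftable matches are precisely those that exploit the reuse of auxiliary elements of \IO. I plan to enumerate the resulting \emph{forks} along the types of the offending element: (i) classical role-role forks at elements of $\NIA$, as in the \EL combined approach; (ii) $\rho$-forks at elements of $\NIU$, where two $\ir$-atoms of the query, or an $\ir$-atom together with a role atom, land on the same $x_{C,e}$ from distinct predecessors in $\UO$; (iii) forks at elements of $\NIL$, which are similar but additionally use Fact~\ref{prop:iosim}; and (iv) cycles through auxiliary elements (which in $\UO$ would require an infinite path). Here I will rely on the fact that the ``sort'' of a domain element (named vs.\ $\NIA$ vs.\ $\NIU$ vs.\ $\NIL$) and the identity of its unique $\UO$-predecessor edge are definable by short FO formulas in the signature of $\IOR$, since the construction of $\IO$ marks these sorts syntactically.

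Next, I would construct $\Rew$ by copying $\aquery$ and conjoining, for every pair of atoms of $\aquery$ sharing a term, an FO filter that disallows the corresponding fork, as well as an FO filter that disallows each cycle pattern of length bounded by $|\aquery|$. The filter conditions depend only on the structural shape of $\aquery$ and on the role inclusion set $\Rmc$ (needed to know which ABox role edges can witness which query-role atoms), but not on \Tmc or \Amc; this yields the uniformity required in the statement, and the number of filters is polynomial in $|\aquery|+|\Rmc|$. With this rewriting in hand, I would prove soundness by building, from any $\IOR$-match satisfying all filters, an explicit $\UO$-match that replaces each auxiliary $\IOR$-element by the unique path reaching it through the predecessors forced by the query atoms; the fork filters guarantee that this choice is consistent. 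Completeness is immediate from the projection argument above, since the tree structure of $\UO$ rules out forks and cycles a priori.

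The hard part will be step~(ii)--(iii): the rough case produces fork types absent from the \EL construction because query $\ir$-atoms and the $\iras{}{\Kmc}$-edges that witness the upper and lower approximations can combine with ordinary role atoms in subtle ways, and because $\ira{\IOR}$ is the reflexive–symmetric–transitive closure of $\iras{}{\Kmc}$ rather than $\iras{}{\Kmc}$ itself. I expect that Fact~\ref{prop:iosim} will be the key tool to show that every $\ir$-edge used in a match corresponds to one of the finitely many syntactic shapes generated in Figure~\ref{def:io}, so that it suffices to filter forks for each of these shapes individually, keeping the rewriting in FO and polynomial in size.
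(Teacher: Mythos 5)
Your overall strategy coincides with the paper's: take \UO as the reference model (so that Lemma~\ref{lem:uo-cert} does the semantic work), and design \Rew so that the matches of \aquery in \IOR that survive the filters are exactly those liftable to \UO, proving one direction by projecting along $\Tail$ and the other by an explicit lifting. Your fork taxonomy (role forks at \NIA, $\ir$-forks at \NIU and \NIL, cycles through auxiliary elements) corresponds to the paper's \ForkNeq, \ForkId, \ForkH and \Cyc. But there is a genuine gap precisely where you defer ``the hard part'': the filters cannot be generated ``for every pair of atoms of \aquery sharing a term''. The identifications that force a fork are not syntactically visible as shared terms; they arise from $\ir$-atoms (two distinct successor variables $y_1,y_2$ joined by $\ir(y_1,y_2)$ must land in the same granule, hence on the same auxiliary element when that element lies in \NIA), and, crucially, they propagate \emph{upwards}: if two successors must be identified, so must their role-predecessors, and so on recursively. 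The paper packages this into the inductively defined equivalence \erqr with the closure condition~\eqref{cond:closure}, and states all filters on \erqr-classes. Without this fixpoint computation, local per-shared-term filters miss forks that appear only after several propagation steps, and your soundness direction fails: the claim that each auxiliary element is reached by ``the unique path forced by the query atoms'' is exactly what the paper must establish via Lemma~\ref{lem:claimA1} and Proposition~\ref{prop:three:point}, both proved by induction on this closure and on the degree of \erqr-classes.

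Two further points need repair. First, the filters must not ``disallow'' forks but conditionally collapse them: a tuple with $x_1=x_2$ can be a legitimate certain answer to a forking query (cf.\ $\Phi^\dag_5$, whose \ForkId-filter reads $\Aux(y_1)\to x_1=x_2$), and in the presence of a role hierarchy one must additionally require a prime-implicant edge ($\Psi_3$), since \UO realises a join at an \NIA-element only along a single role edge closed upwards under \Rmc. Your soundness sketch hints at the conditional reading, but ``disallows the corresponding fork'' as stated would sacrifice completeness. Second, \Rew must be an FO query evaluated over \IOR while \aquery may contain complex concept atoms; the paper unfolds these, and the lower approximation $\underline{C}(x)$ in particular cannot be translated by a universal quantifier over $\ir$-successors in \IOR --- it is rewritten existentially as $\exists y_1,y_2.\ir(x,y_1)\wedge\irl(y_1,y_2)\wedge C(y_2)$ using the designated $\ell_e$-elements of \NIL. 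Relatedly, the sorts of domain elements are not FO-definable from the original signature uniformly in \Kmc (the theorem forbids any dependence of \Rew on \Tmc or \Amc); the paper instead adds fresh predicates $\Aux$ and $\Aux_\ir$ with a fixed interpretation in \IOR and \UO, which is the device you need rather than ``short FO formulas in the signature of \IOR''.
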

\newpart{
In order to show this theorem, our first step is to develop the rewriting procedure. 
}
The combined approach extends a given CQ with additional
filter conditions to discard those answers to $\aquery$ in \IOR that are not answers in \UO. 
These conditions essentially target those parts of the CQ that can be satisfied by non-tree structures 
that may exist in \IOR but not in \UO. 
Observe that only non-tree structures including auxiliary elements are critical \newpart{as these are the ones
that would not appear in the original KB}. 
We extend the filter conditions from~\citeauthor{LutzTW09} to handle also the elements representing upper 
and lower approximations of concepts.
 
\newpart{Specifically, due to the properties of the indiscernibility relation $\rho$
(i.e., transitivity, reflexivity, and symmetry), and its influence in the approximation constructors, the new filter
conditions need to consider potential equivalences and joins that are only implicitly stated. For instance, a 
tree shaped query that leads to two different but indiscernible elements will include an implicit join that must
be taken into account.
}
%

Let \Rmc be an arbitrary but fixed finite set of RIs and $\aquery$ be a $k$-ary CQ. 
To identify  auxiliary elements, we introduce two fresh
unary predicates (that is, concepts): $\Aux$  identifies elements from \NIA and $\Aux_\ir$  
`approximation-related', i.e., `non-seed' elements from \NIT. 
We \newpart{define} them to be 
interpreted in \IOR and \UO 
as:
 \begin{align*}
 \Aux^{\IOR}&:=\Delta^{\IOR}\cap\NIA 
 \\
{\Aux_\ir}^{\IOR}&:=\Delta^{\IOR}\cap\NIT 
 \\ 
 \Aux^{\UO}&:=\bigcup_{p\in\Delta^{\UO},\Tail(p)\in\NIA}\{p\}
\\
{\Aux_\ir}^{\UO}&:=\bigcup_{p\in\Delta^{\UO},\Tail(p)\in\NIT}\{p\}
 \end{align*}
%
To model the filters, we describe those  mappings from answer variables to ABox individuals that describe non-tree structures  which cannot be satisfied in \UO. 
The latter is the case if the answer mapping uses a single \NIA element as a role successor for mapping several objects referred to in the query such that there is no corresponding element in \UO that fits all of them.
A corresponding such element in \UO exists, if the structures from the query can be mapped into a single path in \IOR, by identifying terms.

The terms that are identified in this way, and those that are indiscernible, are captured via an equivalence 
relation \erqr on terms, grouping them into equivalence classes. Let \erq be another equivalence relation over 
$\Term(\aquery)$ induced by the atoms of the form 
$\ir(s,t)$ occurring in $\aquery$ for some terms $s$ and $t$. 
The relation \erqr is defined inductively based on \erq as 
the smallest transitive and reflexive relation on $\Term(\aquery)$ that 
(1) includes 
 the relation
 $$\{(t,t')\mid r_1(s,t),r_2(s',t')\in\aquery,r_1,r_2\in\NR,t \erqOp t' \}$$
and (2) satisfies the closure condition:
	\begin{equation}\tag{$\dag$}
		\begin{split}
	\text{if } r_1(s,t),r_2(s',t')\in\aquery, \newpart{r_1,r_2\in\NR}
	\text{ and } t\erqrOp t',  \\ \text{ then } s\erqrOp s'. 
	\label{cond:closure}
		\end{split}
	\end{equation}
\newpart{Observe that the relation \erqr inherits symmetry by construction from the symmetric relation $\!\erqOp\!$ and, furthermore,  \erqr does not need to contain $\!\erqOp\!$ as a sub-relation.
}
The equivalence classes of \erqr group those terms that cannot be distinguished by homomorphisms from
$\aquery$ into $\UO$.
\newpart{
Such an inductively defined relation is already used in the combined approach for \EL \cite{LutzTW09}. The important difference is that in that previous work,} 
the induction is based on the identity relation. The closure condition then captures non-tree structures in the query $\aquery$, where a  term $t$ has two 
role-predecessors $s$ and $s'$.
\newpart{For \RELHbot, the identity relation is too fine-grained, since truly distinct objects belong to different granules. So, in order} 
to be able to handle in the query the relaxation introduced by the rough constructors, we need to consider 
the whole indiscernibility relation \newpart{on the query terms. Since granules can be separated by role 
relationships (as shown in Figure~\ref{fig:elcmlowup}), the incoming role edges of a granule and the related 
role-predecessor need to be addressed. In order to do so we define} for each equivalence class $\zeta$ of 
the relation \erqr \newpart{the predicates}: 
\begin{align*}
\Pre(\zeta)&:=\{ t\mid r(t,t')\in\aquery,r\in\NR, t'\in\zeta\}\\
\In(\zeta)&:=\{r\mid r(t,t')\in\aquery,
\newpart{
		r \in\NR 
		},
t'\in\zeta\}
\end{align*}
%
The set $\Pre(\zeta)$ describes all the role predecessors of terms in the equivalence class $\zeta$. The set $\In(\zeta)$ contains
all the incoming role names to $\zeta$.

\newpart{For the roles that separate the granules, the role hierarchy $\Rmc$ needs to be taken into account. As the more general role relationships of another is directly stated in the canonical model (by construction of $r^\IO$) and thus also in \IOR, the query needs to refer a most specific role.}
A role $r\in\NR$ is an \emph{implicant} of $R\subseteq\NR$ if $\Rmc\models r\sqsubseteq s$ for all 
$s\in R$. It is a \emph{prime implicant} if, additionally, $\Rmc\not\models r\sqsubseteq r'$ for all implicants $r'$ 
of $R$ with $r\neq r'$. Since KBs contains no role synonyms, there is a prime implicant for each $R\subseteq\NR$ for which there is an implicant. 

The different filters focus on different kinds of structures in $\aquery$. \newpart{ We collect these structures in the following sets, which are based on the sets $\Pre(\zeta)$ and $\In(\zeta)$, and on implicants:}
\begin{itemize}
\item 
\ForkNeq is the set of variables $v\in\QVar(\aquery)$ such that there is no implicant of $\In([v]_{\erqr})$.
Intuitively, \ForkNeq collects those variables that can never be mapped to the same \Aux-element in \UO, 
due to the shape of $\aquery$ (i.e., there are different role atoms where the variables occur as successors) and 
the interpretation of roles in \UO, which is based on the RIs entailed by \Kmc.
\item 
\ForkId is the set of pairs $(\Pre(\zeta), \zeta)$ with $|\Pre(\zeta)|\ge2$.
The first terms in the pairs in \ForkId are those variables that are mapped to indiscernible elements by any 
homomorphism of $\aquery$ into \UO and that may have to be identified if the successor variable is mapped to 
an \Aux-element.
Note that the case where the latter is not possible is captured by \ForkNeq.
Moreover, it does not suffice to require the identification, this is addressed next.
\item 
\ForkH is the set of pairs $(\I, \zeta)$ such that $\Pre(\zeta)\neq\emptyset$, there is a prime implicant of 
$\In(\zeta)$ that is not contained in $\In(\zeta)$, and \I is the set of all prime implicants of $\In(\zeta)$.
By the definition of \UO, a pair of an arbitrary element and an element of \NIA can be contained in the interpretations 
of different roles in \UO, but then it must also be in the interpretation of a prime implicant of those roles.
\ForkH therefore collects all relevant prime implicants so that the filter can enforce some such relation.
\item 
\Cyc is the set of all those quantified variables $v\in\QVar(\aquery)$ such that
there exist the role atoms $r_0(t_0,t'_0),\ldots,r_m(t_m,t'_m),\ldots,r_n(t_n,t'_n)$,
 $m,n\geq 0$ in $\aquery$ with 
$r_i\in\NR$
for all $i, 0\le i\le m$, and \newpart{the following conditions hold:}
	\begin{enumerate}
	\item  $(v,t_i)\in{\erqr\cup \erq}$ for some $i\leq n$, \item  $(t'_i,t_{i+1})\in{\erqr\cup\erq}$ for all $i<n$, 
	and 
	\item $(t'_n,t_m)\in \erqr\cup\erq$;
	\end{enumerate}
	i.e., \Cyc is the set of all \newpart{quantified} variables appearing in the query $\aquery$ that lead, through role connections and equivalences based on the indiscernibility relation,
	to cyclic dependencies. 
\end{itemize}
These definitions are analogous to those employed in the combined approach for \EL; 
the main change in our setting is the integration of the indiscernibility relation into $\erqr$ to capture the notion
of granules, \newpart{which is fundamental for the correctness of the method}.
\newpart{Notice that dealing with the indiscernibility relation requires a non-trivial extension of the 
classical case; indeed, indiscernible elements may affect many different points in the rewriting of a query.
Moreover, to keep the connection to the work by \citeauthor{LutzTW09} explicit, we have used the same
names for the filters; but they all differ from the original definitions.}

For each equivalence class $\zeta$ of \erqr, we select an arbitrary but fixed representative $t_\zeta\in\zeta$, 
and if $\Pre(\zeta)\neq\emptyset$, we also select a fixed element $t^\Pre_\zeta\in\Pre(\zeta)$.

Using these filters, we can now describe the promised query rewriting. Given the CQ $\aquery$, we define the
FO query
$$\Rew :=\exists\vec{x}.(\aquery'\wedge 
\Psi_1\wedge\Psi_2\wedge\Psi_3), \text{ where}$$ 
\begin{align*}
\Psi_1&:=\bigwedge_{v\in\AVar(\Phi)\cup\ForkNeq\cup\Cyc}\neg\Aux(v)\land \bigwedge_{v\in\AVar(\Phi)}\neg\Aux_\ir(v)\\
\Psi_2&:=\bigwedge_{(\{t_1,\ldots\,t_k\},\zeta)\in\ForkId}(\Aux(t_\zeta)\to \bigwedge_{1\leq i<k}t_i=t_{i+1})\\
\Psi_3&:=\bigwedge_{(\I,\zeta)\in\ForkH}(\Aux(t_\zeta)\to\bigvee_{r\in\I}r(t^\Pre_\zeta,t_\zeta)),
\end{align*}
and $\aquery'$ is a CQ  equivalent to $\aquery$ whose concept atoms are of the form $A(t)$ with $A\in\NC$.
This query $\aquery'$ it can be obtained from $\aquery$ through an unfolding that transforms complex concepts
into first-order terms. For example, the unfolding rewrites the conjunct $\overline{C}(x)$ in \aquery into
$\exists y.\ir(x,y)\wedge C(y)$.
%
Notice that the constraints enforcing that the explicit indiscernibility relations included in the original KB form
an equivalence relation are already encoded in the definition  of ${\Aux_\ir}^{\IOR}$ and ${\Aux_\ir}^{\UO}$.

The proof of Theorem~\ref{thm:iocert} focuses on the new query \Rew, \newpart{which can, in fact, be 
constructed in polynomial time. It remains to show that this query satisfies the property claimed by the
theorem}. The idea is that the filter conditions introduced
in the rewriting make sure that the answers over $\aquery$ that do not hold in \UO are excluded. 
$\Psi_1$ sifts out those answers in \IOR that contain auxiliary elements, and those that cannot be mirrored in \UO because the corresponding mapping uses an \NIA element as a role successor in several cases such that there is no corresponding element in \UO that fits all of them.
The query parts $\Psi_2$ and $\Psi_3$ characterize the situation in which a corresponding element in \UO exists: by identifying elements, the relevant structures from $\Phi'$ mapped into \IOR must be collapsible into a single path ($\Psi_2$), and a prime implicant must be among the edges between two nodes of this path ($\Psi_3$).

\newpart{
The proof of Theorem~\ref{thm:iocert} uses the FO query \Rew. The filtering conditions introduced
in this rewriting make sure that the answers over $\aquery$ that do not hold in the model \UO are excluded. 
For 
example, $\Psi_1$ guarantees, amongst others, that any cyclic dependency between domain elements 
must occur in the ABox. That is, cycles introduced by the reuse of auxiliary names in the canonical model are
ignored.
}
%
\begin{paper}
	\newpart{For lack of space, full proofs are not included in this paper. They can be found in the appendix uploaded with this submission. 
	}
\end{paper}
\begin{tr}
	The full proof is deferred to Appendix~\ref{sec:appB}.
\end{tr}

We now provide some simple examples of the rewriting, aimed to explain the ideas of the construction. 
Let $\Rmc=\emptyset$. Notice that in this case, \ForkH is always empty, and hence
$\Psi_3=\true$. We omit \Rmc and these $\Psi_3$ formulas in the rewritings.
%
We first demonstrate the role of \Cyc.
	Consider 
	$$\Phi_4:=\exists y_1,y_2.(\mathsf{hasA}(y_1,y_2)\land {\rho}(y_1,y_2)).$$
	%
	We have $\Cyc=\{y_1,y_2\}$, $\ForkId=\ForkNeq=\ForkH=\emptyset$,
	and thus obtain the following rewriting $\Phi_4^\dag$:
	$$
	\exists y_1,y_2.(
	\mathsf{hasA}(y_1,y_2)\land\rho(y_1,y_2)\land 
	\neg\Aux(y_1)\land\neg\Aux(x_2)).
	$$
	This query guarantees that all the answer pairs provided are indiscernible elements, related via the role $\mathsf{hasA}$, and that they contain no auxiliary elements.
%
We next consider a similar query, demonstrating the rewriting of forking situations:
	$$\Phi_5:=\exists y_1,y_2.(\mathsf{hasA}(x_1,y_1)\land \mathsf{hasA}(x_2,y_2) \land\rho(y_1,y_2)).$$
	The relation \erqfiv has equivalence classes $\{x_1\}$, $\{x_2\}$, and $\{y_1,y_2\}$, and 
	\erqfivr defines the partition $\{\{x_1,x_2\},\{y_1,y_2\}\}$.
	$\Pre(\{y_1,y_2\})=\{x_1,x_2\}$ and $\In(\{y_1,y_2\})=\{\mathsf{hasA}\}$.
	Thus, we have $\ForkId=\{(\{x_1,x_2\},\{y_1,y_2\})\}$, and $\ForkNeq=\ForkH=\Cyc=\emptyset$.
	This yields the rewriting
	\begin{align*}
	\Phi^\dag_5=&\exists y_1,y_2.(
	 \mathsf{hasA}(x_1,y_1)\land \mathsf{hasA}(x_2,y_2) \land\rho(y_1,y_2)\land{}\\
	& \neg\Aux(x_1)\land\neg\Aux(x_1)\land{} 
	 (\Aux(y_1)\to x_1=x_2)).
	\end{align*}

Notice that every step in the construction of the rewriting is polynomial in the size of the KB and the query.
Specifically, \erqr, \Pre, and \In are subsets of terms and variables that appear explicitly in \aquery. By extension,
the filters \ForkNeq, \ForkId, \ForkH, and \Cyc are also polynomial in \aquery. The only remaining case is 
ensuring that the auxiliary elements are not used to generate non-existing answers, as guaranteed by
the queries $\Psi_i, 1\le i\le 3$. The size of these queries is, in fact, polynomial in the number of auxiliary 
variables in \IOR. By construction, the domain of \IOR is polynomial in the size of \Kmc.
Overall, this means that the rewriting procedure runs in polynomial time, and produces a polynomially bounded
FO query.

%
\section{Reduction to Classical DLs}
\label{sec:assorted-facts}

After having considered the ontology-based query answering technique for rough DLs based on 
the combined approach in the last sections, 
we now take a brief look at a method for reducing this  problem to QA in 
classical DLs that builds on proposals developed for rough DLs in the past.

It is known that rough DLs can be simulated in sufficiently expressive (classical) DLs~\cite{ScKP07}.
Specifically, the upper and lower approximations $\underline{C}$ and $\overline{C}$ are equivalent to the 
concepts $\forall \rho.C$ and $\exists \rho.C$, respectively, where $\rho$ is a designated transitive, reflexive, and 
symmetric role. Hence, one needs only to be able to express existential and value restrictions (as in the DL
$\mathcal{ALC}$), and the three mentioned properties on roles. In other words, every rough-\EL KB can be expressed by an $\mathcal{SI}^{\sf Self}$ KB.%
\footnote{$\mathcal{SI}^{\sf Self}$ extends \ALC with transitive and inverse roles, and reflexivity statements. For
more details, see 
\cite{BCM+-07}.}
Thus, any QA tool capable of dealing with this (very) expressive DL would also be able to handle rough \EL. 
Given the efforts to produce efficient QA methods for expressive DLs, one obvious question is 
whether such methods can be exploited directly to handle \RELHbot. The answer, unfortunately, is `no'. 
The reason for this negative answer is that
this logic does not fall into the class of Horn DLs, for which QA tools are efficient. 
In a nutshell, Horn DLs are those that do not allow the 
expression of non-deterministic choices~\cite{OrRS11}.
 
\newpart{Recall the normal form for \RELHbot TBoxes presented at the beginning of the last section.}
It is easy to see that, under the translation of \citeauthor{ScKP07} described at the beginning of this section, 
all the axioms in the first row are in fact Horn axioms. Unfortunately, this does not hold for the last axiom since it
requires a value restriction on the left-hand side. This kind of constraint, which implicitly requires a 
non-deterministic choice (an element belongs to $\forall r.A$ if it either has no $r$-successors, or it has at least
one $r$-successor, and all of them belong to $A$), cannot be handled efficiently by state-of-the-art
QA tools.

On the other hand, the restriction of \RELHbot where lower approximations cannot appear on the left-hand side
of GCIs is, in fact, a Horn DL; more precisely, a sublanguage of Horn-$\mathcal{SI}^{\sf Self}$.
Obviously, this restriction removes an important part of the expressive power of roughness, which may be
fundamental for some practical applications. However, it is not hard to conceive cases where such
lower approximations on the left-hand side are not really necessary. For instance, in our species classification
and differentiation example, the TBox will fall within this sub-logic. Indeed, one may say that a property of
a species is indiscernible from another, but a meaningful species description will never say that an element
that is indiscernible from all in a species must satisfy some specific properties.


\newpart{There are approaches for conjunctive query answering that extend \EL directly towards the expressivity 
needed for rough \EL. For instance, in \cite{SM-AAAI-15} the authors investigate an extension of \EL 
that allows for reflexive and transitive roles, but not for symmetric ones, which in general damage the
tractability of \EL.  Their techniques were implemented in the system RDFox  \cite{MNPHO-AAAI-14}. 
As mentioned, this DL covers two of the three  properties of an equivalence relation. Symmetry for roles is missing 
in their approach, since symmetric roles behave to some extent similarly to inverse roles which are notorious for 
raising the computational complexity of reasoning in many logics. Even transitive roles alone are known to be a 
handicap to the performance of query answering systems for \EL including them. However, for \EL with transitive 
roles practical reasoning procedures based on the combined approach have been devised in \cite{LSTW-ISWC-13} 
and implemented in the Combo system. 
}


%

\section{Conclusions}

We have presented a combined approach for answering conjunctive queries in the rough DL \RELHbot.
\newpart{This approach first extends the input ABox to include also the knowledge encoded in the TBox by materialization}, and then rewrites
the query to guarantee that no answers are unexpectedly introduced in the first step. This allows us to effectively 
answer conjunctive queries in this rough DL using standard database technologies.

Interestingly, we have shown that dealing with this rough extension of \ELHbot does not incur in any increase
of complexity w.r.t.\ its classical counterpart; the rewriting remains polynomial in the size of the input.

Being able to model and reason with rough concepts is fundamental for applications in the life sciences, as they
allow the introduction of notions that cannot be precisely defined through use of approximating lower and upper bounds.
In addition, they allow to introduce examples of elements that cannot be distinguished by these approximations.
\newpart{Such approaches have recently been investigated for a more fine-grained setting, where vagueness can be captured by a similarity measure and a proto-typical instance, yielding a vague concept that can be dynamically relaxed or strengthened depending on a similarity threshold \cite{BBG-FROCOS-15}---albeit only for unfoldable TBoxes.
In our setting the query language itself allows to relax answers by admitting the indiscernibility relation and the approximation constructors in the query language. Here the degree of relaxation then depends on the presence of the indiscernibility relation in the data. A somewhat orthogonal approach has been investigated in \cite{EcPeTu-JAL15}, where the query language admits relaxation (of instance queries) by the use of a concept similarity measure and a threshold.
While the similarity-based approaches admit more flexibility, they crucially depend on the presence of an appropriate similarity measure supplied by the user. In case of approaches using rough DLs, the indiscernibility relation can, in principle, be automatically derived from the data \cite{dAmFEL-13,BSvh-ESWC-16}.
} 

We highlight that there exist database systems providing native support for 
rough sets \cite{HLH-FI-04,BB-LWA-15}. As an alternative approach, one could think of using them as a target
language for rewriting the queries. While this would solve some of the technical issues regarding indiscernible
elements in the query rewriting step, these systems are not as widely adopted and optimized as industrial
database systems. Hence we believe that our approach has a higher potential for practical impact.

We plan to implement the rewriting technique and to test its performance empirically. We will also 
extend our methods to weaker notions of roughness, by removing restrictions in the indiscernibility relation;
e.g.\ transitivity.

%


\section{Acknowledgements}

This work is partly supported by the German Research Foundation (DFG)
within the Cluster of Excellence ``Center for Advancing Electronics Dresden'' (cfaed) in CRC 912 (HAEC)  
and
within the project 
"Reasoning and Query Answering Using Concept Similarity Measures and Graded Membership Functions". \anni{In final, put number, Anni.}

{
	\bibliographystyle{aaai}
    \bibliography{short-string,ref}%

\begin{thebibliography}{}

\bibitem[\protect\citeauthoryear{Anonymous}{2018}]{paper}
Anonymous.
\newblock 2018.
\newblock Query answering for rough $\mathcal{E\!L}$ ontologies.

\bibitem[\protect\citeauthoryear{Baader \bgroup et al\mbox.\egroup
  }{2007}]{BCM+-07}
Baader, F.; Calvanese, D.; McGuinness, D.~L.; Nardi, D.; and Patel-Schneider,
  P.~F., eds.
\newblock 2007.
\newblock {\em The Description Logic Handbook: Theory, Implementation, and
  Applications}.
\newblock Cambridge University Press, 2nd edition.

\bibitem[\protect\citeauthoryear{{Baader}, {Brandt}, and
  {Lutz}}{2005}]{BaaderBrandtLutz-IJCAI-05}
{Baader}, F.; {Brandt}, S.; and {Lutz}, C.
\newblock 2005.
\newblock Pushing the $\mathcal{EL}$ envelope.
\newblock In {\em Proc.\ of 18th Int.\ Joint Conference on Artificial
  Intelligence ({IJCAI} 2005)}.

\bibitem[\protect\citeauthoryear{{Baader}, {Brewka}, and {Fern\'andez
  Gil}}{2015}]{BBG-FROCOS-15}
{Baader}, F.; {Brewka}, G.; and {Fern\'andez Gil}, O.
\newblock 2015.
\newblock Adding threshold concepts to the description logic {$\mathcal{EL}$}.
\newblock In {\em Proceedings of the 10th International Symposium on Frontiers
  of Combining Systems (FroCoS'15)}, volume 9322 of {\em LNAI},  33--48.
\newblock Springer.

\bibitem[\protect\citeauthoryear{Beek, Schlobach, and {van
  Harmelen}}{2016}]{BSvh-ESWC-16}
Beek, W.; Schlobach, S.; and {van Harmelen}, F.
\newblock 2016.
\newblock A contextualised semantics for {owl: sameAs}.
\newblock In {\em Proceedings of {ESWC}}, volume 9678 of {\em LNCS},  405--419.
\newblock Springer.

\bibitem[\protect\citeauthoryear{Beer and B{\"{u}}hler}{2015}]{BB-LWA-15}
Beer, F., and B{\"{u}}hler, U.
\newblock 2015.
\newblock An in-database rough set toolkit.
\newblock In {\em Proc.\ of the {LWA} 2015 Workshops}, volume 1458 of {\em CEUR
  Workshop Notes},  146--157.

\bibitem[\protect\citeauthoryear{Bobillo \bgroup et al\mbox.\egroup
  }{2015}]{BCE+-15}
Bobillo, F.; Cerami, M.; Esteva, F.; Garc{\'{i}}a-Cerda{\~{n}}a, {\`{A}}.;
  Pe{\~n}aloza, R.; and Straccia, U.
\newblock 2015.
\newblock Fuzzy description logic.
\newblock In Cintula, P.; Ferm{\"u}ller, C.~G.; and Noguera, C., eds., {\em
  Handbook of Mathematical Fuzzy Logic Volume 3}, volume~58 of {\em Studies in
  Logic}. College Publications.

\bibitem[\protect\citeauthoryear{Borgwardt, Cerami, and
  Pe{\~{n}}aloza}{2017}]{BoCP-IJA17}
Borgwardt, S.; Cerami, M.; and Pe{\~{n}}aloza, R.
\newblock 2017.
\newblock The complexity of fuzzy {$\mathcal{EL}$} under the {{\L}}ukasiewicz
  t-norm.
\newblock {\em Int.\ J.\ Approx.\ Reasoning} 91:179--201.

\bibitem[\protect\citeauthoryear{Borgwardt, Distel, and
  Pe{\~{n}}aloza}{2015}]{BoDP-AIJ15}
Borgwardt, S.; Distel, F.; and Pe{\~{n}}aloza, R.
\newblock 2015.
\newblock The limits of decidability in fuzzy description logics with general
  concept inclusions.
\newblock {\em Artificial Intelligence} 218:23--55.

\bibitem[\protect\citeauthoryear{d'Amato \bgroup et al\mbox.\egroup
  }{2013}]{dAmFEL-13}
d'Amato, C.; Fanizzi, N.; Esposito, F.; and Lukasiewicz, T.
\newblock 2013.
\newblock Representing uncertain concepts in rough description logics via
  contextual indiscernibility relations.
\newblock In {\em Int.\ Workshop on Uncertainty Reasoning for the Semantic
  Web}, volume 7123 of {\em LNCS},  300--314.
\newblock Springer.

\bibitem[\protect\citeauthoryear{{Ecke}, {Pe\~naloza}, and
  {Turhan}}{2015}]{EcPeTu-JAL15}
{Ecke}, A.; {Pe\~naloza}, R.; and {Turhan}, A.-Y.
\newblock 2015.
\newblock Similarity-based relaxed instance queries.
\newblock {\em Journal of Applied Logic} 13(4, Part 1):480--508.
\newblock Special Issue for the Workshop on Weighted Logics for {AI} 2013.

\bibitem[\protect\citeauthoryear{Hu, Lin, and Han}{2004}]{HLH-FI-04}
Hu, X.; Lin, T.~Y.; and Han, J.
\newblock 2004.
\newblock A new rough sets model based on database systems.
\newblock {\em Fundam.\ Inform.} 59(2-3):135--152.

\bibitem[\protect\citeauthoryear{Jiang \bgroup et al\mbox.\egroup
  }{2009}]{JiWaTX-09}
Jiang, Y.; Wang, J.; Tang, S.; and Xiao, B.
\newblock 2009.
\newblock Reasoning with rough description logics: An approximate concepts
  approach.
\newblock {\em Inf.\ Sci.} 179(5):600--612.

\bibitem[\protect\citeauthoryear{Keet}{2010}]{Keet-EKAW10}
Keet, C.~M.
\newblock 2010.
\newblock Ontology engineering with rough concepts and instances.
\newblock In {\em Proc.\ of 17th International Conference on Knowledge
  Engineering and Management by the Masses {EKAW} 2010}, volume 6317 of {\em
  LNCS},  503--513.
\newblock Springer.

\bibitem[\protect\citeauthoryear{Keet}{2011}]{Keet-11}
Keet, C.~M.
\newblock 2011.
\newblock Rough subsumption reasoning with {rOWL}.
\newblock In {\em Proc.\ of the 2011 Ann.\ Conf.\ of the South African Inst.\
  of Computer Scientists and Information Technologists, {SAICSIT} 2011},
  133--140.
\newblock {ACM}.

\bibitem[\protect\citeauthoryear{Klein, Mika, and
  Schlobach}{2007}]{KleinMS-URSW-07}
Klein, M.~C.; Mika, P.; and Schlobach, S.
\newblock 2007.
\newblock Rough description logics for modeling uncertainty in instance
  unification.
\newblock In {\em Proc.\ of 3rd {ISWC} Workshop on Uncertainty Reasoning for
  the Semantic Web}, volume 327 of {\em CEUR Workshop Notes}.

\bibitem[\protect\citeauthoryear{Liau}{1996}]{liau1996rough}
Liau, C.-J.
\newblock 1996.
\newblock On rough terminological logics.
\newblock In {\em Proc.\ of the 4th Int.\ Workshop on Rough Sets, Fuzzy Sets
  and machine Discovery (RSFD’96)},  47--54.

\bibitem[\protect\citeauthoryear{Lin and Cercone}{2012}]{lin2012rough}
Lin, T.~Y., and Cercone, N.
\newblock 2012.
\newblock {\em Rough sets and data mining: Analysis of imprecise data}.
\newblock Springer Science \& Business Media.

\bibitem[\protect\citeauthoryear{Lukasiewicz and Straccia}{2008}]{LuSt08}
Lukasiewicz, T., and Straccia, U.
\newblock 2008.
\newblock Managing uncertainty and vagueness in description logics for the
  semantic web.
\newblock {\em J.\ Web Sem.} 6(4):291--308.

\bibitem[\protect\citeauthoryear{Lutz and Wolter}{2010}]{LuWo-JSC-10}
Lutz, C., and Wolter, F.
\newblock 2010.
\newblock Deciding inseparability and conservative extensions in the
  description logic {$\mathcal{EL}$}.
\newblock {\em Journal of Symbolic Computation} 45(2):194--228.

\bibitem[\protect\citeauthoryear{Lutz \bgroup et al\mbox.\egroup
  }{2013}]{LSTW-ISWC-13}
Lutz, C.; Seylan, I.; Toman, D.; and Wolter, F.
\newblock 2013.
\newblock The combined approach to {OBDA:} taming role hierarchies using
  filters.
\newblock In {\em Proc.\ of the 12th Int.\ Semantic Web Conference ISWC 2013},
  volume 8218 of {\em LNCS},  314--330.
\newblock Springer.

\bibitem[\protect\citeauthoryear{Lutz, Toman, and Wolter}{2009}]{LutzTW09}
Lutz, C.; Toman, D.; and Wolter, F.
\newblock 2009.
\newblock Conjunctive query answering in the description logic {$\mathcal{EL}$}
  using a relational database system.
\newblock In {\em Proc.\ of 20th Int.\ Joint Conference on Artificial
  Intelligence ({IJCAI} 2009)},  2070--2075.

\bibitem[\protect\citeauthoryear{Motik \bgroup et al\mbox.\egroup
  }{2014}]{MNPHO-AAAI-14}
Motik, B.; Nenov, Y.; Piro, R.; Horrocks, I.; and Olteanu, D.
\newblock 2014.
\newblock Parallel materialisation of datalog programs in centralised,
  main-memory {RDF} systems.
\newblock In {\em Proceedings of the Twenty-Eighth {AAAI} Conference on
  Artificial Intelligence.},  129--137.
\newblock {AAAI} Press.

\bibitem[\protect\citeauthoryear{Ortiz, Rudolph, and Simkus}{2011}]{OrRS11}
Ortiz, M.; Rudolph, S.; and Simkus, M.
\newblock 2011.
\newblock Query answering in the {H}orn fragments of the description logics
  {$\mathcal{SHOIQ}$} and {$\mathcal{SROIQ}$}.
\newblock In {\em Proc.\ of the 22nd Int.\ Joint Conference on Artificial
  Intelligence ({IJCAI} 2011)},  1039--1044.
\newblock {IJCAI/AAAI}.

\bibitem[\protect\citeauthoryear{Pawlak}{1982}]{Pawlak-PR-82}
Pawlak, Z.
\newblock 1982.
\newblock Rough sets.
\newblock {\em International Journal of Parallel Programming} 11(5):341--356.

\bibitem[\protect\citeauthoryear{Pawlak}{1998}]{Pawlak-RoughReasoning-98}
Pawlak, Z.
\newblock 1998.
\newblock Reasoning about data - {A} rough set perspective.
\newblock In {\em Proc.\ of First Int.\ Conf.\ on Rough Sets and Current Trends
  in Computing (RSCTC'98)}, volume 1424 of {\em LNCS},  25--34.
\newblock Springer.

\bibitem[\protect\citeauthoryear{{Pe{\~n}aloza} and {Zou}}{2013}]{PenZ13}
{Pe{\~n}aloza}, R., and {Zou}, T.
\newblock 2013.
\newblock Roughening the $\mathcal{EL}$ envelope.
\newblock In {\em Proc.\ of Int.\ Symposium on Frontiers of Combining Systems
  ({FroCoS 2013})}, volume 8152 of {\em LNCS},  71--86.
\newblock Springer.

\bibitem[\protect\citeauthoryear{Schlobach, Klein, and Peelen}{2007}]{ScKP07}
Schlobach, S.; Klein, M.~C.; and Peelen, L.
\newblock 2007.
\newblock Description logics with approximate definitions - precise modeling of
  vague concepts.
\newblock In {\em Proc.\ of 19th Int.\ Joint Conference on Artificial
  Intelligence ({IJCAI} 2007)},  557--562.

\bibitem[\protect\citeauthoryear{Stefanoni and Motik}{2015}]{SM-AAAI-15}
Stefanoni, G., and Motik, B.
\newblock 2015.
\newblock Answering conjunctive queries over $\mathcal{EL}$ knowledge bases
  with transitive and reflexive roles.
\newblock In {\em Proceedings of the Twenty-Ninth {AAAI} Conference on
  Artificial Intelligence},  1611--1617.
\newblock {AAAI} Press.

\end{thebibliography}
}

\begin{tr}
\clearpage
\appendix
The appendix provides proofs and additional definitions that were omitted from in
main text for lack of space.


\section{Proofs for Section~\ref{sec:canonical}}
\label{sec:appA}

We prove the following three claims, which establish the results from Section~\ref{sec:canonical}:
\begin{enumerate}[label=(A.\roman*)]
\item\label{item:A:one} \IOR is a model of \Kmc;
\item\label{item:A:two} \UO is a model of \Kmc; and
\item\label{item:A:three} the answers to $\Phi$ in \UO are the certain answers.
\end{enumerate}

Notice that~\ref{item:A:one} is similar to Lemma~\ref{lem:iomodel}, but the domain of the interpretation
is restricted to elements reachable from named individuals. To show this result, we prove the following
lemma, which classifies the instances of different concepts, according to their kind.
\begin{lemma}
	\label{lem:ioiffo}
	For all $C\in\Con(\Kmc)$; $a\in\Ind(\Amc); x_D\in\NIA;$ $ x_{E,a},$ $x_{E,x_D}\in\NIU;$ and $\ell_a,\ell_{x_D}\in\NIL$, the following hold:
	\begin{enumerate}[label=(\arabic*)]
		\item $a\in C^{\IOR }$ iff $\Kmc\models C(a)$.
		\item $x_D\in C^{\IOR }$ iff $\Kmc\models D\sqsubseteq C$.
		\item $x_{E,a}\in C^{\IOR }$ iff $\Kmc\models E\sqsubseteq C$ or
		$\Kmc\models\underline{C}(a)$.
		\item $x_{E,x_D}\in C^{\IOR }$ iff $\Kmc\models E\sqsubseteq C$ or
		$\Kmc\models D\sqsubseteq\underline{C}$.
		\item $\ell_{b}\in C^{\IOR }$ iff $\Kmc\models\underline{C}(b)$.
		\item $\ell_{x_D}\in C^{\IOR }$ iff $\Kmc\models
		D\sqsubseteq\underline{C}$.
	\end{enumerate}
\end{lemma}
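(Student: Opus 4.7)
The plan is to prove all six statements simultaneously by structural induction on the concept $C\in\Con(\Kmc)$. For the base cases, if $C=\top$ the equivalences are trivial, and if $C=\bot$ they follow from the assumption that \Kmc is consistent (so no element is in $\bot^{\IOR}$ and none of the right-hand sides hold). For $C=A\in\NC$, each of the six equivalences corresponds directly to one clause (or, in the case of~(3) and~(4), the disjunction of two clauses) in the definition of $A^{\IO}$ given in Figure~\ref{def:io}. In particular, (3) follows from reading off the two contributions to $x_{E,a}\in A^{\IO}$, namely the \NIU\ clause (giving $\Kmc\models E\sqsubseteq A$) and the \NIT\ clause taken with $b=a$ (giving $\Kmc\models\underline{A}(a)$); statements (4)--(6) are handled analogously.

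For $C=D_1\sqcap D_2$ the equivalences follow by applying the induction hypothesis to $D_1$ and $D_2$ together with the semantic identities $\Kmc\models(D_1\sqcap D_2)(e)$ iff $\Kmc\models D_1(e)$ and $\Kmc\models D_2(e)$, and $\underline{D_1\sqcap D_2}^{\Imc}=\underline{D_1}^{\Imc}\cap\underline{D_2}^{\Imc}$. For $C=\exists r.D$, one reads off the candidate $r$-successors of each sort of element from the corresponding clause in the definition of $r^{\IO}$. Thus, for statement~(1), the $r$-successors of $a$ are either named individuals $b$ with some $s(a,b)\in\Amc$ and $\Kmc\models s\sqsubseteq r$, or witnesses $x_C\in\NIA$ with $\Kmc\models\exists r.C(a)$; applying the induction hypothesis (statements~(1) and~(2) for $D$) reduces $a\in(\exists r.D)^{\IOR}$ to $\Kmc\models\exists r.D(a)$. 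The remaining cases (2)--(6) proceed identically, in each case using exactly the row of clauses that governs the sort of the element in question.

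The approximation cases are the ones that require real work. For $C=\overline{D}$ we use Fact~\ref{prop:iosim} to enumerate the granule of each sort of element: for a seed element $e\in\NI(\Amc)\cup\NIA$, the granule contains $e$ itself, $\ell_e$, every $x_{D',e}\in\NIU$ that is reachable in $\IOR$, together with the analogous elements for every $e'$ linked to $e$ by an explicit $\ir$-assertion in \Amc. A reachable element $x_{D',e}$ exists precisely when $\Kmc\models\overline{D'}(e)$ (respectively $\Kmc\models E\sqsubseteq\overline{D'}$ if $e=x_E$), by the definition of $\iras{}{\Kmc}$. Combining this with the induction hypothesis applied to $D$ on every granule member gives the entailments $\Kmc\models\overline{D}(a)$, $\Kmc\models E\sqsubseteq\overline{D}$ and so on, in each case. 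For $C=\underline{D}$ the argument is dual: one requires every element of the granule to satisfy $D$, which by induction hypothesis translates into the corresponding entailment about $\underline{D}$.

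The main obstacle is the bookkeeping in statements~(3) and~(4). There the right-hand side is a disjunction of two independent sources of concept membership, one arising from the first coordinate of the auxiliary element (giving a GCI $E\sqsubseteq C$) and one from the seed element it is attached to (giving $\underline{C}(a)$ or $D\sqsubseteq\underline{C}$). Both contributions must be tracked consistently through every inductive step, and in the approximation cases one must additionally argue, using the TBox normal form, that if $\Kmc\models D'\sqsubseteq\overline{D}$ then an appropriate reachable member of the granule of $x_{D'}$ in $\IOR$ does in fact satisfy $D$ so that the forward direction succeeds. The restriction from $\IO$ to $\IOR$ causes no difficulty here, since by construction of $r^{\IO}$ and $\iras{}{\Kmc}$ every successor or $\rho$-neighbour that witnesses one of the equivalences is itself reachable from some named individual, and therefore already lies in $\Delta^{\IOR}$.
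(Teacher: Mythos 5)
Your proposal is correct and follows essentially the same route as the paper's proof: a simultaneous structural induction on $C$, reading the base and $\exists r.D$ cases off the clauses defining $A^{\IO}$ and $r^{\IO}$, and handling $\overline{D}$ and $\underline{D}$ by enumerating granules via Fact~\ref{prop:iosim} before applying the induction hypothesis to each granule member. The points you flag as requiring care (the two-source disjunction in items~(3) and~(4), and reachability in $\IOR$) are exactly where the paper's proof also spends its effort.
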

\begin{proof}
	We prove the items simultaneously by induction on the structure of
	$C$.  The base case, where $C\in\NC$ is a direct consequence of the
	definition of \IOR.
	If $C$ is of the form $D\sqcap E$, the result follows trivially from the
	semantics and the induction hypothesis.
	We consider the remaining cases in detail.

	\noindent{\textbf{Case $C=\exists r.B$.}} (1) 
	($\Rightarrow$) If $a\in(\exists r.B)^{\IOR }$, then there is
	an $e\in\Delta^{\IOR }$ such that $(a,e)\in r^{\IOR }$ and $e\in
	B^{\IOR }$.  By the definition of $\IOR $, $e\not\in\NIT$.  If
	$e\in\IndA$, then $s(a,e)\in\Amc$ for some role $s$ with $\Kmc\models
	s\sqsubseteq r$, and the \indHyp yields  $\Kmc\models
	B(e)$; hence $\Kmc\models\exists r.B(a)$.  If $e$ is
	of the form $x_{D}\in\NIA$, then $\Kmc\models\exists r.D(a)$.
	Since 
the \indHyp further yields $\Kmc\models D\sqsubseteq B$, we get  $\Kmc\models\exists r.B(a)$.  
	{($\Leftarrow$)}
	If $\Kmc\models\exists r.B(a)$, then $(a,x_{B})\in r^{\IOR
	}$, by definition. The \indHyp also yields $x_{B}\in
	B^{\IOR }$. Hence, $a\in(\exists r.B)^{\IOR }$ follows.  
	The remaining sorts of domain elements can be treated analogously.
	
	\noindent{\textbf{Case $C=\overline{B}$.}} %
	($\Rightarrow$) 	(1) If $a\in\overline{B}^{\IOR }$,
	there is an $e\in\Delta^{\IOR }$ with $(a,e)\in{\ira{\IOR}}$
	and $e\in B^{\IOR }$. By Lemma~\ref{prop:iosim}, 
	(i)~$e\in\IndA$, (ii)~$e\in\NIT$ and has the form $x_{E,b}$, or (iii)~$\ell_b$ with
	$b\in\Ind(\Amc)$.  If (i), then $\Kmc\models B(e)$ by the
		\indHyp, and hence
	$\Kmc\models\overline{B}(a)$.
	If (ii), Lemma~\ref{prop:iosim} yields $(a,b)\in{\ira{\IOR}}$ and,	
	by the \indHyp,
	\[\Kmc\models E\sqsubseteq B, \tag{*}\label{tag:five}\] or $\Kmc\models\underline{B}(b)$.  
	In the latter case, the semantics directly yields
	$\Kmc\models\overline{B}(a)$ since $(a,e)\in{\ira{\IOR}}$. In
	the former case, the fact $x_{E,b}\in\Delta^{\IOR }$ together with the
	definition of \IOR implies $\Kmc\models\overline{E}(b)$. Thus,
	$\Kmc\models\overline{B}(b)$ by~\eqref{tag:five}. Thus,  $\Kmc\models\overline{B}(a)$.
	If (iii), Lemma~\ref{prop:iosim}
	yields $(a,b)\in{\ira{\IOR}}$, too.  By the \indHyp, it additionally holds that
	$\Kmc\models\underline{B}(b)$ and thus
	$\Kmc\models\overline{B}(a)$. 
	The proof for~(2) is very
	similar.  For~(3), we can restrict to the same sorts of
	elements $e$ as in the proof of~(1), by
	Lemma~\ref{prop:iosim}.  Then, $x_{E,a}\in\overline{B}^{\IOR
	}$ implies $a\in\overline{B}^{\IOR }$. By the \indHyp, we thus get
	$\Kmc\models\overline{B}(a)$, which corresponds to
	$\Kmc\models\underline{(\overline{B})}(a)$.  The proof of (5) is
	analogous to the one of (3), and the proofs of~(4) and~(6)
	similarly correspond to the one of (2).
	
	{($\Leftarrow$)}
	(1) If $\Kmc\models\overline{B}(a)$, then
	$(a,x_{B,a})\in{\ira{\IOR}}$.  From the \indHyp on $\Kmc\models B\sqsubseteq B$ yields $x_{B,a}\in B^{\IOR}$.  But then, the semantics directly yields
	$a\in\overline{B}^{\IOR }$.  
	The proof for~(2) is analogous.
	For (3), if $\Kmc\models E\sqsubseteq\overline{B}$ holds, the
	proof is analogous to the one of (1) and (2).  If
	$\Kmc\models\underline{(\overline{B})}(a)$, then	
	$\Kmc\models{(\overline{B})}(a)$.
 By definition of \IOR, we have $(a,x_{B,a})\in{\ira{\IOR}}$, and $x_{B,a}\in {B}^{\IOR}$ by the \indHyp. 
	$x_{E,a}\in\overline{B}^{\IOR }$ then follows from $(a,x_{E,a})\in{\ira{\IOR}}$, which must hold if $x_{E,a}
	\in{\Delta^{\IOR}}$.  
	The proof of~(4) is
	analogous, and the proofs of~(5) and (6) are analogous to the second
	cases in the proofs of (3) and (4), respectively.
	
	\noindent{\textbf{Case $C=\underline{B}$.}}
	($\Rightarrow$) (1) If $a\in\underline{B}^{\IOR }$, then all
	elements that are $\ira{\IOR }$-successors of $a$ must belong to $B$ in \IOR, too.
	By Fact~\ref{prop:iosim}, $(a,\ell_a)\in{\ira{\IOR}}$, and
	hence $\ell_a\in B^{\IOR}$. The \indHyp and the semantics then directly lead to
	$\Kmc\models\underline{B}(a)$.  The proofs for the other sorts of elements are  analogous.
	
	\noindent{($\Leftarrow$)}
	(1) We prove this case by contradiction. 
	 Suppose that $\Kmc\models\underline{B}(a)$ and that there is
	an element $e\in\Delta^{\IOR}$ such that $(a, e) \in \rhoior$ and
	$e\not\in B^{\IOR}$. By Fact~\ref{prop:iosim}, $e$ is
	either (i)~an individual name or from \NIT and of the form (ii) $x_{E,b}$ or $\ell_b$ with $E\in\Con(\Tmc)$ and $b\in\NI$; note that $a=b$ is possible.
	In the case~(i), we have
	$\ir(a,b)\in\Amc$, by Fact~\ref{prop:iosim}, and hence get
	$\Kmc\models B(b)$, by the semantics of the lower approximation. 
	But then, the application of the \indHyp yields $e=b\in B^{\IOR}$, which is a contradiction.  
	In case~(ii), $(a, e) \in \rhoior$ and Fact~\ref{prop:iosim} imply $b\in[a]_{\ira{\IOR}}$ and in particular 
	$\ir(a,b)\in\Amc$. Given
		$\Kmc\models\underline{B}(a)$, the semantics yields
		$\Kmc\models\underline{B}(b)$ which contradicts 
		$\Kmc\not\models\underline{B}(b)$. The latter follows from 
		$e\not\in B^{\IOR}$ by the \indHyp.
	For (2), we proceed similarly.
	Suppose that $\Kmc\models E\sqsubseteq \underline{B}$ and that there is an element $e\in\Delta^{\IOR}$ such that $(x_D, e) \in \rhoior$ and
		$e\not\in B^{\IOR}$. By Fact~\ref{prop:iosim}, $e$ is of the form  $x_{E,x_D}$ with $E\in\Con(\Tmc)$ or $\ell_{x_D}$.
		In both cases, the \indHyp directly implies the contradiction $\Kmc\not\models D\sqsubseteq\underline{B}$.
	For (3), there are two cases to be considered. However,
	given an element $x_{E,a}\in\Delta^{\IOR}$ (i.e., it is reachable in \IOR), the
	definition of \IOR together with the \indHyp (regarding $E$) yields that $\Kmc\models\overline{E}(a)$.  
	But, then, the first case, $\Kmc\models E\sqsubseteq\underline{B}$, by
	the semantics, implies the second case, 
	$\Kmc\models\underline{B}(a)$. That case can be treated as (1) since Fact~\ref{prop:iosim} yields the same structure of the equivalence class.
	Also (6) is treated in that way.
For (4), we again only have to consider the second case, as with (3), and it can be treated analogous to (2). The same holds for (6).
\end{proof}
Given Lemma~\ref{lem:ioiffo}, $\IOR \models \Kmc$ \ref{item:A:one} follows by the same arguments as applied for \IO in the proof of Lemma~\ref{lem:iomodel}.
We now proceed to show that \UO is a model of \Kmc with the help of~\ref{item:A:one}. 
To this end, we relate the interpretations \UO and \IOR to each other based on the correspondences 
between their domain elements. 
Recall that all elements in $\Delta^{\UO}$ are paths in \IOR.

We first provide results on the different kinds of domain elements in \IOR regarding their role as tails of the paths in \UO; each sort enforces the corresponding paths to be of a certain shape.
\begin{lemma}\label{prop:uoels}
	For all $d_0\hat{r}_1d_1\cdots \hat{r}_nd_n\in\Delta^{\UO}$, we have:
	\begin{enumerate}[label=(\arabic*)] 
		\item\label{prop:uoels:ind} 
		$d_n\in\Ind(\Amc)$ iff $n=0$.
		\item\label{prop:uoels:nia} $d_n\in\NIA$ iff 
		$\hat{r}_n\in\NR$.
			\item\label{prop:uoels:indniu}\label{prop:uoels:nianiu}
			$d_n\in\NIU$ iff there exists an $e\in\NI(\Amc)\cup\NIA$
				and an $i,0\le i<n$ such that
				$d_{i}=e$ and,  for all $j,i<j\le n$, $\hat{r}_{j}={\ir}$ and  
				$d_{j}=x_{C_{j},e}\in\NIU$ with $C_{j}\in\Con(\Tmc)$.
		\item\label{prop:uoels:nil} 
				$d_n\in\NIL$ 
				iff there is an $e\in\NI(\Amc)\cup\NIA$ such that
				$d_{n-1}=e$, $\hat{r}_{n-1}={\ir}$, and $d_n=\ell_e$.
	\end{enumerate}
\end{lemma}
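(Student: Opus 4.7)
The plan is to prove all four items by exhaustive case analysis on the edges of \IOR listed in Figure~\ref{def:io}, occasionally augmented with a short induction on the position within the path. The unifying observation is that each sort of auxiliary element in $\Delta^{\IO}$ can only be \emph{entered} by edges of a very specific kind, and in the case of \NIU also only by edges that preserve the second index.

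For item~\ref{prop:uoels:ind}, I would read the statement off from the path definition: $d_0\in\Ind(\Amc)$ by requirement, and for $i\geq 1$ the definition forces $d_i\in\Delta^{\IOR}\setminus\Ind(\Amc)$, giving both directions immediately. For item~\ref{prop:uoels:nia}, I would inspect the six clauses defining $r^{\IO}$: every $r$-edge either stays inside $\Ind(\Amc)$ (hence cannot appear at a step $i\geq 1$ of a path) or lands in some $x_D\in\NIA$. Conversely, none of the clauses defining $\iras{}{\Kmc}$ has a target in \NIA. So $d_n\in\NIA$ is equivalent to the last edge being from~\NR.

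For item~\ref{prop:uoels:nianiu}, I would observe that no role edge reaches \NIU, so if $d_n\in\NIU$ then $\hat r_n={\ir}$, and the only $\ir$-edges into \NIU are of the three forms $(a,x_{C,a})$, $(x_C,x_{D,x_C})$ and $(x_{C,e},x_{D,e})$, each of which preserves the ``seed'' index. Tracing $\hat r_n$ backward, the predecessor is either already the seed $e\in\Ind(\Amc)\cup\NIA$ or another \NIU element $x_{C_{n-1},e}$ with the same $e$; a short induction on $n-i$ then produces the claimed chain terminating at $d_i=e$. For item~\ref{prop:uoels:nil}, the only $\ir$-edge into \NIL is $(e,\ell_e)$ with $e\in\Ind(\Amc)\cup\NIA$, so a single application of that case description gives both directions: $d_n=\ell_e$ forces the incoming edge to be an $\ir$-edge and $d_{n-1}=e$.

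I expect the main obstacle to be item~\ref{prop:uoels:nianiu}: one must argue that the second subscript of consecutive \NIU elements along the path is \emph{the same} element $e$, which requires noting that the only way to switch the seed is to hit an edge with source in $\Ind(\Amc)\cup\NIA$, and that such a predecessor is exactly the element the chain terminates at. The remaining bookkeeping---checking that the described predecessor $e$ indeed lies in $\Delta^{\IOR}$ and that the intermediate $\ir$-edges are in $\iras{}{\Kmc}$ rather than just in $\ira{\IOR}$---is routine given Fact~\ref{prop:iosim}.
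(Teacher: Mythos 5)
Your proposal is correct and follows essentially the same route as the paper's own (much terser) proof: both arguments reduce each item to an inspection of which clauses of the definition of $r^{\IO}$ and $\iras{}{\Kmc}$ in Figure~\ref{def:io} can target elements of $\Ind(\Amc)$, \NIA, \NIU, and \NIL, combined with the constraint from the path definition that only $d_0$ may be a named individual. The only difference is one of explicitness --- you spell out the backward induction along the seed-preserving $\ir$-edges for item~\ref{prop:uoels:nianiu}, which the paper compresses into ``similarly follow from the definitions of a path and \IOR.''
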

\begin{proof}

	\ref{prop:uoels:ind} is a direct consequence of the definition of a path.
	
	\ref{prop:uoels:nia} follows from the definition of a path: 
	$d_n\in\NIA\cup\NIT$ and $\hat{r}_n\in\NR\cup\{\ir\}$; and by the definition of \IOR, 
	($\Rightarrow$) an element of \NIA cannot be a $\iras{}{\Kmc}$-successor 
	and ($\Leftarrow$) an element of \NIT cannot be a role-successor.
	
	\ref{prop:uoels:nianiu} and \ref{prop:uoels:nil} similarly follow from the definitions of a path and \IOR. 
	Regarding the latter and ($\Rightarrow$), an element of \NIT can neither be a role-successor nor a a $\iras{}{\Kmc}$-successor for an element of \NIL, and an element of \NIA cannot be a role-predecessor if it is not the one corresponding seed element. The directions ($\Leftarrow$) are trivial.
\end{proof}

The following corollary concretizes \UO even further, 
regarding the elements of $\Delta^{\UO}$ that are indiscernible. 
It directly follows from the definition of $\ira{\UO}$ based on the paths in $\Delta^{\UO}$ and 
Lemma~\ref{prop:uoels}, which specifies the latter.
\begin{corollary}
	\label{lem:uosimniaels}
	Suppose that $p\in[qr x_C]_{\ira{\UO}}$ with $x_C\in\NIA$. Then
	$p=q r x_C$,
	$p=q r x_C\ir\ell_{x_C}$, or 
	$p=qrx_C(\ir x_{D_i,x_C})^i$, $i\ge 0$. \qed
\end{corollary}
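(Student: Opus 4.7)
The plan is to use the fact that $\ira{\UO}$ is by construction the reflexive, symmetric, transitive closure of $\iras{}{\Kmc'}$, so $p \in [qrx_C]_{\ira{\UO}}$ iff there is a finite sequence of $\iras{}{\Kmc'}$-edges (traversed in either direction) connecting $qrx_C$ to $p$. I would therefore first show that $qrx_C$ has no $\iras{}{\Kmc'}$-predecessor, and then enumerate all forward reachable paths.

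First, $qrx_C$ has no predecessor under $\iras{}{\Kmc'}$: by Lemma~\ref{prop:uoels}\ref{prop:uoels:nia}, its tail $x_C \in \NIA$ forces $\widehat{r}_n \in \NR$, so the last step in $qrx_C$ is not a $\ir$-step, and $qrx_C$ cannot be of the form $p\cdot\ir e$; and since $x_C \notin \Ind(\Amc)$, no ABox $\ir$-edge ends at $qrx_C$ either. Consequently, every element of $[qrx_C]_{\ira{\UO}}$ is forward reachable from $qrx_C$ along $\iras{}{\Kmc'}$, possibly passing back through $qrx_C$ several times.

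Next, I would inspect Figure~\ref{def:io} to list the $\iras{}{\Kmc}$-successors of $x_C$ in \IOR: these are precisely $\ell_{x_C}\in\NIL$ and the elements $x_{D,x_C}\in\NIU$ with $\Kmc\models C\sqsubseteq\overline{D}$. Hence the one-step forward extensions of $qrx_C$ in $\Delta^{\UO}$ are $qrx_C\ir\ell_{x_C}$ and $qrx_C\ir x_{D,x_C}$. From $\ell_{x_C}$ there is no outgoing $\iras{}{\Kmc}$-edge (inspection of Figure~\ref{def:io}: $\ell_{x_C}$ only appears as the second component), so $qrx_C\ir\ell_{x_C}$ is a dead end. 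From $x_{D,x_C}\in\NIU$ the only outgoing $\iras{}{\Kmc}$-edges go to further $x_{D',x_C}\in\NIU$ (via the rule $\{(x_{C,e},x_{D,e}) \mid \Kmc\models C\sqsubseteq\overline{D}\}$, with $e = x_C$); crucially, the $\NIL$-rule $\{(e,\ell_e)\}$ only applies to seed elements $e\in\Ind(\Amc)\cup\NIA$, so $x_{D,x_C}$ has no $\ell$-successor. Thus the forward extensions starting with $x_{D_1,x_C}$ stay inside $\NIU$ and have the shape $qrx_C(\ir x_{D_i,x_C})^i$.

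Combining these observations gives the three listed possibilities for $p$; the main (minor) obstacle is just carefully justifying that no ``crossover'' between the $\ell$-branch and the $\NIU$-chain is possible, which is immediate from the observation that $\ell_{x_C}$ is only reachable from $x_C$ and not from any $x_{D,x_C}$. Conversely, every such path is indeed in the equivalence class, by applying $\iras{}{\Kmc'}$ forward and, if needed, backward to return to $qrx_C$ and branch again, so by transitivity of $\ira{\UO}$ all three forms lie in $[qrx_C]_{\ira{\UO}}$.
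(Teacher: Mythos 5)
Your proposal is correct and follows essentially the same route as the paper, which simply asserts that the corollary ``directly follows'' from the definition of $\ira{\UO}$ and Lemma~\ref{prop:uoels}; your argument spells out the details that the paper leaves implicit (no $\iras{}{\Kmc'}$-predecessor for $qrx_C$ because its last edge is a role edge, and the enumeration of $\iras{}{\Kmc}$-successors of $x_C$, $\ell_{x_C}$, and $x_{D,x_C}$ from Figure~\ref{def:io}). The only cosmetic quibble is the phrase about ``passing back through $qrx_C$ several times'': since each non-named path has a unique $\iras{}{\Kmc'}$-predecessor (its prefix), the component of the predecessor-free node $qrx_C$ is exactly its set of forward $\ir$-descendants, so no back-and-forth is needed.
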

In order to relate the interpretation of \ir in \IOR to the one in \UO,
we show that the following properties hold: 
\begin{description}
	\item{(P1)} For each pair $ (p,q) \in \rho^{\UO}$, there is a corresponding tuple $(\Tail(p),\Tail(q)) \in \rho^\IOR$.
	
	\item{(P2)} For each pair $(d, e) \in \rhoior$, all ``copies'' of $d$ in \UO (i.e., all elements denoted by paths ending on $d$) have a $\rho$-successor in \UO.
\end{description}
Since $\rhoior$ and $\rho^{\UO}$ are obtained by symmetric, transitive, reflexive closures, these properties are not obvious. To show them, we define a function $\rhoTail \colon {\ira{\UO}}\to{\ira{\IOR}}$ as follows:
$$
 \rhoTailFunc{(p, q)}:= \big(\Tail(p), \Tail(q)\big).$$ 
We show that this function is well-defined to obtain P1 and that it is surjective to obtain P2. 
Note that \rhoTail is typically not a bijection, since $\Tail$ does not need to be injective. 
\begin{lemma}
	\label{lem:tails}
	Let $p,q\in\Delta^{\UO}$.
	If $(p,q)\in\ira{\UO}$, then  $\rhoTailFunc{(p,q)}\in\ira{\IOR}$.
\end{lemma}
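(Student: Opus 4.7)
The plan is to prove the lemma by induction on the construction of the relation $\ira{\UO}$, exploiting the fact that $\ira{\UO}$ is defined as the reflexive, symmetric, transitive closure of $\iras{}{\Kmc'}$, and that $\ira{\IOR}$ is itself a reflexive, symmetric, transitive relation (being the closure of $\iras{}{\Kmc}$). So once we establish the base case for the generator $\iras{}{\Kmc'}$, the closure properties transfer automatically through the tail function.

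First I would handle the base case: show that $(p,q) \in \iras{}{\Kmc'}$ implies $(\Tail(p), \Tail(q)) \in \ira{\IOR}$. There are two sub-cases according to the definition of $\iras{}{\Kmc'}$. In the first, $p = a$ and $q = b$ with $a, b \in \Ind(\Amc)$ and $(a,b) \in \ira{\IOR}$; here $\Tail(p) = a$ and $\Tail(q) = b$, so the claim is immediate. In the second, $q = p \cdot \ir e$ for some $e \in \Delta^{\IOR}$; then $\Tail(q) = e$, while $\Tail(p)$ is the last element of $p$, which by the definition of a path in $\IOR$ must satisfy $(\Tail(p), e) \in \iras{}{\Kmc}$, and hence $(\Tail(p), e) \in \ira{\IOR}$.

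The inductive step then follows from the fact that $\ira{\IOR}$ is already reflexive, symmetric, and transitive. Concretely, if $(p,p) \in \ira{\UO}$ by reflexivity, then $\rhoTailFunc{(p,p)} = (\Tail(p), \Tail(p))$ is in $\ira{\IOR}$ by reflexivity of the latter. If $(p,q) \in \ira{\UO}$ was obtained by symmetry from $(q,p)$, then the induction hypothesis gives $(\Tail(q), \Tail(p)) \in \ira{\IOR}$, and symmetry of $\ira{\IOR}$ yields $(\Tail(p), \Tail(q)) \in \ira{\IOR}$. Finally, if $(p,q)$ is the transitive composition of $(p,r)$ and $(r,q)$, the induction hypothesis gives both $(\Tail(p), \Tail(r))$ and $(\Tail(r), \Tail(q))$ in $\ira{\IOR}$, and transitivity of $\ira{\IOR}$ closes the argument.

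The main subtlety, and essentially the only place where the structural properties of paths are used, lies in the second sub-case of the base step: one must invoke the definition of $\Delta^{\UO} = \Paths(\IOR)$ to extract that the extension $p \cdot \ir e$ being a legitimate path forces the $\iras{}{\Kmc}$-connection between the tail of $p$ and $e$. Everything else is a clean propagation through the closure operators, which is why this lemma is the natural stepping stone toward property (P1) and, in combination with a surjectivity argument on $\rhoTail$, toward (P2).
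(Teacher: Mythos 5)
Your proof is correct and follows essentially the same route as the paper's: the paper phrases the induction as being on the length of the shortest chain of $\iras{}{\Kmc'}$-edges (and their inverses) connecting $p$ and $q$, which is just a concrete instantiation of your structural induction over the reflexive--symmetric--transitive closure, and its base case uses exactly your two sub-cases (named individuals versus $q=p\cdot\ir e$ with the path condition forcing $(\Tail(p),e)\in\iras{}{\Kmc}\subseteq\ira{\IOR}$).
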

\begin{proof}
	We prove this claim by induction on 
	$$k = min\big\{n \in \Nbb_0 \mid \{(p, q), (q, p)\} \cap ({\iras{}{\Kmc'}}\cup {\iras{-}{\Kmc'}})^n \not= \emptyset\big\},$$ 
	i.e., on the length of the shortest path between $p$ and $q$ in \UO consisting only of 
	$\iras{}{\Kmc'}$-edges or their inverses.	
	
	\noindent{\textbf{Case: $k=0$.}} In this case, we regard tuples $(p, p)$ in  $\iras{}{\Kmc'}$, which are also contained in $\ira{\UO}$. 
	Since $p\in \Delta^{\UO}$, $\Delta^{\UO}=\Paths(\IOR)$, and \rhoior is reflexive,
	$(\Tail(p), \Tail(p)) \in \rhoior$ holds.
	
	\noindent{\textbf{Case: $k=1$.}} In this case, $(p, q) \in \iras{}{\Kmc'}\cup\iras{-}{\Kmc'}$.
	If $\Tail(p),\Tail(q)\in\Ind(\Amc)$, then $(\Tail(p),\Tail(q))\in\ira{\IOR}$ holds by the definition of $\iras{}{\Kmc'}$. 
	Otherwise, we have $q=p\cdot\ir q'$, and $\Delta^{\UO}=\Paths(\IOR)$
	yields $(\Tail(p),\Tail(q))\in \ira{\IOR}$.
	
	\noindent{\textbf{Case: $k>1$.}} Then $(p,q)\not \in\iras{}{\Kmc'}\cup\iras{-}{\Kmc'}$ and $(p,q)$ is added  to 
	$\ira{\UO}$ by the transitive closure of $\iras{}{\Kmc'}$. Thus, there exists an element 
	$p'\in \Delta^{\UO}$ with $\{(p, p'), (p', q)\} \subseteq \ira{\UO}$. 
	Applying the \indHyp to this pair then yields:
	$\{(\Tail(p), \Tail(p')),\, (\Tail(p'), \Tail(q))\} \subseteq \rhoior$. From the transitivity of  \rhoior, 
	$\{(\Tail(p), \Tail(q))\} \in \rhoior$ follows.
	
	Since $\ira{\IOR}$ is the transitive, reflexive and symmetric closure of $\iras{}{\Kmc'}$, every pair of elements
	related via $\ira{\IOR}$ falls into one of the three cases above.
\end{proof}
The next lemma establishes surjectivity of the function  \rhoTail and thus the property P2 presented before.
\begin{lemma}
	\label{lem:tail:surj} %
	If $(d_n,e)\in\ira{\IOR}$, then for each $p\in\Delta^{\UO}$ with $\Tail(p)=d_n$ 
	there is
	an element $q\in\Delta^{\UO}$ with $\rhoTailFunc{(p, q)} = (d_n, e)$.
\end{lemma}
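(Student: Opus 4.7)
The plan is to prove the claim by induction on the length $k$ of the shortest chain witnessing $(d_n,e) \in \ira{\IOR}$ using edges from $\iras{}{\Kmc} \cup \iras{-}{\Kmc}$, paralleling the structure used in Lemma~\ref{lem:tails}. In each case I will exhibit $q$ explicitly and verify that $(p,q) \in \ira{\UO}$ and $\Tail(q) = e$.

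\medskip\noindent\textbf{Base case $k=0$.} Then $e = d_n$, so take $q := p$. Reflexivity of $\ira{\UO}$ and $\rhoTailFunc{(p,p)} = (d_n,d_n)$ finish this case.

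\medskip\noindent\textbf{Case $k=1$, forward edge $(d_n, e) \in \iras{}{\Kmc}$.} I case-split on the clause of Figure~\ref{def:io} that generates the edge. If both endpoints are individuals, then by Lemma~\ref{prop:uoels}\ref{prop:uoels:ind} the path $p$ equals the singleton $d_n$, and $q := e$ is itself a singleton path in $\Paths(\IOR)$ with $(p,q) \in \iras{}{\Kmc'} \subseteq \ira{\UO}$ by the first clause in the definition of $\iras{}{\Kmc'}$. In every other clause the edge is of one of the forms $(a, x_{C,a})$, $(f,\ell_f)$, $(x_C, x_{D,x_C})$, or $(x_{C,f}, x_{D,f})$; for all of these I simply let $q := p \cdot \ir e$. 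This is a valid path in $\IOR$ since $(d_n,e) \in \iras{}{\Kmc}$ and $e$ is reachable from $p$'s starting individual, and $(p,q) \in \iras{}{\Kmc'} \subseteq \ira{\UO}$ by the second clause defining $\iras{}{\Kmc'}$.

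\medskip\noindent\textbf{Case $k=1$, inverse edge $(e, d_n) \in \iras{}{\Kmc}$.} This is the main obstacle, because $p$ may be long and have several shapes. I use Lemma~\ref{prop:uoels} to constrain $p$ based on the sort of $d_n$. If $d_n \in \Ind(\Amc)$, then part~\ref{prop:uoels:ind} forces $p = d_n$, and the inverse edge must come from $\ir(e,d_n) \in \Amc$, so $q := e$ works. If $d_n = x_{C,f} \in \NIU$, then part~\ref{prop:uoels:indniu} forces $p$ to have the form $p_0 \cdot (\ir x_{C_1,f})\cdots(\ir x_{C_j,f})$ with $\Tail(p_0) = f$, and the inverse edge originates at the immediately preceding element on this tail; the corresponding prefix $q$ of $p$ ends exactly in $e$. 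If $d_n = \ell_f \in \NIL$, then part~\ref{prop:uoels:nil} forces $p = p'' \cdot \ir \ell_f$ with $\Tail(p'') = f = e$, so $q := p''$ works. In each subcase, $(q,p) \in \iras{}{\Kmc'}$, and hence $(p,q) \in \ira{\UO}$ by symmetry.

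\medskip\noindent\textbf{Inductive step $k > 1$.} Factor the witnessing chain through an intermediate element $f$ with strictly shorter distances to both $d_n$ and $e$; so $(d_n,f), (f,e) \in \ira{\IOR}$ with smaller witnesses. The induction hypothesis applied to $(d_n,f)$ and $p$ yields a path $p' \in \Delta^{\UO}$ with $\Tail(p') = f$ and $(p,p') \in \ira{\UO}$. A second application of the hypothesis to $(f,e)$ and $p'$ yields $q$ with $\Tail(q) = e$ and $(p',q) \in \ira{\UO}$. Transitivity of $\ira{\UO}$ gives $(p,q) \in \ira{\UO}$, so $\rhoTailFunc{(p,q)} = (d_n,e)$ as required.
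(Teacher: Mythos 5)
Your overall strategy---induction on the length of the shortest $\iras{}{\Kmc}$-chain witnessing $(d_n,e)\in\ira{\IOR}$, with an explicit single-edge case and a transitivity step---is the same as the paper's, and in two places you are actually more careful than the paper: your base case correctly uses reflexivity of $\ira{\UO}$ rather than appending a spurious $\ir$-loop, and you do not dismiss the inverse-edge direction with a ``w.l.o.g.''\ (which the paper does, even though the two directions are genuinely asymmetric, since paths can only be extended along forward $\iras{}{\Kmc}$-edges).

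However, your treatment of the inverse-edge subcase with $d_n=x_{C,f}\in\NIU$ contains a false claim. You assert that the edge $(e,d_n)\in\iras{}{\Kmc}$ ``originates at the immediately preceding element'' of $p$, so that $e$ is the tail of a prefix of $p$. By Figure~\ref{def:io}, the $\iras{}{\Kmc}$-predecessors of $x_{C,f}$ are the seed $f$ itself and \emph{any} $x_{C',f}$ with $\Kmc\models C'\sqsubseteq\overline{C}$; nothing forces such an $e$ to occur on $p$. Concretely, if $\Kmc\models\overline{C}(a)$, $\Kmc\models\overline{E}(a)$ and $\Kmc\models E\sqsubseteq\overline{C}$, then $p=a\,\ir\,x_{C,a}$ is a path with $\Tail(p)=x_{C,a}$ and $(x_{E,a},x_{C,a})\in\iras{}{\Kmc}$, yet $e=x_{E,a}$ appears on no prefix of $p$. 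The statement is still true, but the witness must be built differently: let $p_0$ be the prefix of $p$ ending at the seed $f$ (which exists by Lemma~\ref{prop:uoels}\ref{prop:uoels:indniu}); since $e\in\Delta^{\IOR}$, some path reaches $e$, and by Lemma~\ref{prop:uoels} it has the form $\tilde p_0\,\ir\,x_{D_1,f}\cdots\ir\,e$ with $\Tail(\tilde p_0)=f$. Grafting this $\ir$-suffix onto $p_0$ yields a valid path $q:=p_0\,\ir\,x_{D_1,f}\cdots\ir\,e\in\Delta^{\UO}$ (the suffix edges depend only on $f$, not on the particular path reaching it), and both $p$ and $q$ are connected to $p_0$ by $\iras{}{\Kmc'}$-chains, so $(p,q)\in\ira{\UO}$ by symmetry and transitivity. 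With this repair (the analogous subcases for $d_n\in\Ind(\Amc)$ and $d_n\in\NIL$ are fine as you wrote them), your proof goes through.
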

\begin{proof}
	Note that the fact that $\rhoTail$ is defined for $(p, q)$ implies that $(p, q)\in\ira{\UO}$.
	The lemma is shown by induction on $$k = min\big\{m  \in \Nbb_0 \mid \{(d_n, e), (e, d_n)\} \cap ({\iras{}{\Kmc}})^m \not= \emptyset\big\},$$ i.e.,  the length of the shortest path between $d_n$ and $e$ consisting only of $\iras{}{\Kmc}$-edges.
	
	\noindent{\textbf{Case $k = 0$.}} In this case, $d_n = e$ and 
	$(d_n, d_n) \in \iras{}{\Kmc} \subseteq \rhoior$ and thus 
	if $p \in \Delta^{\UO}$ and $\Tail(p) = d_n$, then
	there exists $q = p\cdot\rho d_n \in \Delta^{\UO}$ and $(p, q)\in \iras{}{\Kmc'} \subseteq \ira{\UO}$, which yields $\rhoTailFunc{(p, q)} = (d_n, d_n)$.
	
	\noindent{\textbf{Case $k = 1$.}} Then $(d_n,e)\in \iras{}{\Kmc}$ or $(e, d_n)\in \iras{}{\Kmc}$. W.l.o.g.\ assume that $(d_n,e)\in \iras{}{\Kmc}$.  
	If $d_n,e\in\Ind(\Amc)$, then  $(d_n,e)\in\iras{}{\Kmc'}\subseteq\ira{\UO}$ holds by
	definition of $\UO$ and thus $\rhoTailFunc{(p, q)} = (d_n, e)$.  
	Otherwise, if $e\not\in\Ind(\Amc)$ and if $p \in \Delta^{\UO}$ with $\Tail(p) = d_n$, then since
	$(d_n,e)\in\iras{}{\Kmc}\subseteq \ira{\IOR}$, there exists the
	element $q=p\cdot\ir e\in\Delta^{\UO}$, and hence 
	$(p,q)\in\iras{}{\Kmc'}\subseteq\ira{\UO}$, which yields $\rhoTailFunc{(p, q)} = (d_n, e)$.
	
	\noindent{\textbf{Case $k > 1$.}} Then, 
	$\{(d_n, e), (e, d_n)\} \cap ({\iras{}{\Kmc}})^k \not= \emptyset\big\}$. Assume w.l.o.g.\ 
	that $(d_n, e) \in  ({\iras{}{\Kmc}})^k$. This implies that there exists $f \in \Delta^{\IOR}$ such that 
	$\{(d_n, f), (f, e)\} \in ({\iras{}{\Kmc}})^{(k-1)} \subseteq \rhoior$. 
	If $p \in \Delta^{\UO}$ with $\Tail(p) = d_n$, then the  \indHyp implies that there exists an element 
	$p_f \in \Delta^{\UO}$ such that
	$\rhoTailFunc{(p, p_f)} = (d_n, f)$. In this case, the \indHyp also yields $q \in \Delta^{\UO}$ such that
	$\rhoTailFunc{(p_f, q)} = (f, e)$. This  implies that $\rhoTailFunc{(p, q)} = (d_n, e)$.
\end{proof}

Using these results, we can finally show that concept memberships coincide in \UO and \IOR. 
\begin{lemma}
	\label{lem:ioiffuo}
	For all $p\in\Delta^\UO$ and all $C\in\Con(\Kmc)$, we have 
	$p\in C^{\UO}$ iff $\Tail(p)\in C^{\IOR}$. 
\end{lemma}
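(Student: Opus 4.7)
The plan is to prove the lemma by structural induction on the concept $C$. The base case for concept names $A$ is immediate from the definition $A^{\UO} := \{p \mid \Tail(p) \in A^{\IOR}\}$, and the cases $C = \top$ and $C = \bot$ are trivial. The case $C = D \sqcap E$ follows directly from the semantics and the induction hypothesis applied to $D$ and $E$. The three interesting cases are $\exists r.D$, $\overline{D}$, and $\underline{D}$.

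For the case $C = \exists r.D$, in the $(\Rightarrow)$ direction, if $p \in (\exists r.D)^{\UO}$, then there is $q \in \Delta^{\UO}$ with $(p,q) \in r^{\UO}$ and $q \in D^{\UO}$. By the definition of $r^{\UO}$, either $p,q$ are both individual names with $(p,q) \in r^{\IOR}$, or $q = p \cdot s e$ for some role $s$ with $\Kmc \models s \sqsubseteq r$. In the latter case, $(\Tail(p), e) \in s^{\IOR}$ by construction of $\UO$, and since $\IOR$ models the RIs, $(\Tail(p), \Tail(q)) = (\Tail(p), e) \in r^{\IOR}$. Combined with the induction hypothesis $\Tail(q) \in D^{\IOR}$, we conclude $\Tail(p) \in (\exists r.D)^{\IOR}$. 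For the $(\Leftarrow)$ direction, if $\Tail(p) \in (\exists r.D)^{\IOR}$, there is $e \in \Delta^{\IOR}$ with $(\Tail(p), e) \in r^{\IOR}$ and $e \in D^{\IOR}$. By Lemma~\ref{prop:uoels} the structure of $\Tail(p)$ determines the shape of legal continuations of $p$, and by inspecting the definition of $r^{\IOR}$ we can build an extension $q = p \cdot s e \in \Delta^{\UO}$ (or use the individual-name case), so that $(p,q) \in r^{\UO}$; the induction hypothesis then gives $q \in D^{\UO}$.

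For the case $C = \overline{D}$, we rely crucially on Lemmas~\ref{lem:tails} and~\ref{lem:tail:surj}. In the $(\Rightarrow)$ direction, if $p \in \overline{D}^{\UO}$, pick $q \in \Delta^{\UO}$ with $(p,q) \in \rho^{\UO}$ and $q \in D^{\UO}$. Lemma~\ref{lem:tails} gives $(\Tail(p), \Tail(q)) \in \rho^{\IOR}$, and the induction hypothesis yields $\Tail(q) \in D^{\IOR}$, hence $\Tail(p) \in \overline{D}^{\IOR}$. In the $(\Leftarrow)$ direction, if $\Tail(p) \in \overline{D}^{\IOR}$, pick $e \in \Delta^{\IOR}$ with $(\Tail(p), e) \in \rho^{\IOR}$ and $e \in D^{\IOR}$. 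By the surjectivity established in Lemma~\ref{lem:tail:surj}, there is $q \in \Delta^{\UO}$ with $(p,q) \in \rho^{\UO}$ and $\Tail(q) = e$, and the induction hypothesis delivers $q \in D^{\UO}$, so $p \in \overline{D}^{\UO}$.

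For the case $C = \underline{D}$, the argument is dual and again uses both Lemmas~\ref{lem:tails} and~\ref{lem:tail:surj}. In the $(\Rightarrow)$ direction, assume $p \in \underline{D}^{\UO}$ and take any $e$ with $(\Tail(p), e) \in \rho^{\IOR}$; Lemma~\ref{lem:tail:surj} produces $q \in \Delta^{\UO}$ with $(p,q) \in \rho^{\UO}$ and $\Tail(q) = e$, hence $q \in D^{\UO}$ and by induction $e \in D^{\IOR}$. In the $(\Leftarrow)$ direction, assume $\Tail(p) \in \underline{D}^{\IOR}$ and take any $q$ with $(p,q) \in \rho^{\UO}$; Lemma~\ref{lem:tails} gives $(\Tail(p), \Tail(q)) \in \rho^{\IOR}$, so $\Tail(q) \in D^{\IOR}$, and by induction $q \in D^{\UO}$. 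I expect the main obstacle to lie in the $(\Leftarrow)$ direction of the $\exists r.D$ case, where one must carefully use the characterisation of paths in Lemma~\ref{prop:uoels} to check that the required successor really exists in $\Delta^{\UO}$ regardless of the sort ($\Ind(\Amc)$, $\NIA$, $\NIU$, or $\NIL$) of $\Tail(p)$; the approximation cases, by contrast, reduce cleanly to the two auxiliary lemmas on $\rho$-tails.
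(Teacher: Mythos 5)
Your proposal is correct and follows essentially the same route as the paper's proof: structural induction with the same case split, using Lemma~\ref{lem:tails} for the directions that push $\rho$-edges from \UO down to \IOR and Lemma~\ref{lem:tail:surj} for the directions that lift them back up, exactly as the paper does in the $\overline{D}$ and $\underline{D}$ cases. The existential case is likewise handled identically, by inspecting the two clauses of $r^{\UO}$ and constructing the successor path $p\cdot re$ (or reusing the named individual).
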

\begin{proof}
	The claim is shown by induction on the structure of $C$.  If 
	$C\in\NC$, it follows from the
	definition of $\UO$.  
	The case $C=D\sqcap E$ also follows easily from the application of the \indHyp.
	
	\noindent{\textbf{Case $C=\exists r.D$.}}
	($\Rightarrow$) If $p\in (\exists r.D)^{\UO}$, then
	there exists a $q\in\Delta^{\UO}$ such that $(p,q)\in
	r^{\UO}$ and $q\in D^{\UO}$.
	By the definition of $\UO$, either (i) $p,q\in\Ind(\Amc)$, meaning $p=\Tail(p)$ and $q=\Tail(q)$, and
	$(p,q)\in r^{\IOR}$; or (ii) $q$ is of the form $q=p\cdot se$ with $s\in \NR$ and $\Kmc\models s\sqsubseteq r$.
	For the latter, $p\cdot se\in\Paths(\IOR)$ holds, by the
	definition of $\UO$, which implies $(\Tail(p),e)\in s^{\IOR}$, and $(\Tail(p),e)\in r^{\IOR}$ by Lemma~\ref{lem:iomodel}.  By the
	\indHyp, $\Tail(q)\in D^{\IOR}$ holds in both cases, and
	$\Tail(p) \in (\exists r.D)^{\IOR}$ follows.
	($\Leftarrow$) If $\Tail(p)\in (\exists r.D)^{\IOR}$, then
	there is an $e\in D^{\IOR}$ with $(\Tail(p),e)\in
	r^{\IOR}$.  By the definition of \IOR, either
	$e\in\Ind(\Amc)\subseteq\Delta^{\UO}$ and we set $q:=e$, or $q:=p\cdot re\in\Delta^{\UO}$.
	In both cases, the definition of $\UO$
	yields $(p,q)\in r^{\UO}$.  By the \indHyp, $q\in D^{\UO}$,
	and $p\in (\exists r.D)^{\UO}$ follows.
	
	\noindent{\textbf{Case $C=\overline{D}$.}}
	($\Rightarrow$) Let $p\in\overline{D}^{\UO}$, then
	there is some $q\in\Delta^{\UO}$
	such that $(p,q)\in\ira{\UO}$ and $q\in D^{\UO}$.  By
	Lemma~\ref{lem:tails}, we get $(\Tail(p),\Tail(q))\in\ira{\IOR}$, and
	by induction $\Tail(q)\in D^{\IOR}$.  Hence,
	$\Tail(p)\in\overline{D}^{\IOR}$.
	($\Leftarrow$) If $\Tail(p)\in\overline{D}^{\IOR}$, then
	there is some $e\in D^{\IOR}$ with $(\Tail(p),e)\in\ira{\IOR}$.  
	By Lemma~\ref{lem:tail:surj}, there is a
	$q\in\Delta^{\UO}$ such that $(p,q)\in\ira{\UO}$ and $\Tail(q)=e$.
	By the \indHyp, $q\in D^{\UO}$, and thus $p\in\overline{D}^{\UO}$.
	
	\noindent{\textbf{Case $C=\underline{D}$.}}
	($\Rightarrow$) If $p\in\underline{D}^{\UO}$, then
	$q\in D^{\UO}$ for all $q\in[p]_{\ira{\UO}}$. 
	By Lemma~\ref{lem:tail:surj}, for every $e\in[\Tail(p)]_{\ira{\IOR}}$,
	there is a $q\in [p]_{\ira{\UO}}$ with $\Tail(q)=e$. 
	Hence, $e\in  D^{\IOR}$ follows from the \indHyp. 
	This implies $\Tail(p)\in\underline{D}^{\IOR}$.
	($\Leftarrow$) If $d\in\underline{D}^{\IOR}$, then
	$e\in D^{\IOR}$ for all $e\in[d]_{\ira{\IOR}}$.  
	Let $q\in[p]_{\ira{\UO}}$ with $\Tail(p)=d$. By Lemma~\ref{lem:tails}, we have
	$\Tail(q)\in[d]_{\ira{\IOR}}$. By induction, $q\in D^{\UO}$, and
	hence $p\in\underline{D}^{\UO}$.
\end{proof}

It is now straightforward to establish the following result.
\begin{lemma}
	\label{lem:uomodel}
	$\UO$ is a model of \Kmc.
\end{lemma}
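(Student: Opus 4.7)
The plan is to reduce everything to what has already been established: the fact that $\IOR\models\Kmc$ (shown as~\ref{item:A:one} via Lemma~\ref{lem:ioiffo}) together with Lemma~\ref{lem:ioiffuo}, which provides a bridge between concept-membership in $\UO$ and in $\IOR$ via the $\Tail$ operation. The indiscernibility relation $\ira{\UO}$ is by construction the reflexive, symmetric, transitive closure of $\iras{}{\Kmc'}$, so it is automatically an equivalence relation; thus it remains to verify the GCIs, the RIs, and the ABox assertions.

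For a GCI $C\sqsubseteq D\in\Tmc$, I would take an arbitrary $p\in C^{\UO}$. Lemma~\ref{lem:ioiffuo} gives $\Tail(p)\in C^{\IOR}$, and since $\IOR\models C\sqsubseteq D$ by~\ref{item:A:one}, we obtain $\Tail(p)\in D^{\IOR}$, which by the other direction of Lemma~\ref{lem:ioiffuo} yields $p\in D^{\UO}$. This step is essentially an immediate transfer argument.

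For an RI $r\sqsubseteq s\in\Tmc$ and $(p,q)\in r^{\UO}$, the definition of $r^{\UO}$ leaves only two cases. If $p,q\in\Ind(\Amc)$, then $(p,q)\in r^{\IOR}$ and we inherit $(p,q)\in s^{\IOR}\subseteq s^{\UO}$ from $\IOR\models r\sqsubseteq s$. Otherwise $q=p\cdot t e$ with $\Kmc\models t\sqsubseteq r$, and since $r\sqsubseteq s$ is in $\Tmc$ we get $\Kmc\models t\sqsubseteq s$, so $(p,q)\in s^{\UO}$ by the construction. For the ABox assertions: $A(a)\in\Amc$ gives $a\in A^{\IOR}$ and hence $a\in A^{\UO}$ by Lemma~\ref{lem:ioiffuo} (noting $\Tail(a)=a$); $r(a,b)\in\Amc$ gives $(a,b)\in r^{\IOR}$ and, since both elements are individual names, $(a,b)\in r^{\UO}$; and $\ir(a,b)\in\Amc$ gives $(a,b)\in\ira{\IOR}$, which by the first clause of $\iras{}{\Kmc'}$ lies in $\iras{}{\Kmc'}\subseteq\ira{\UO}$.

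The conceptually hard work has already been carried out in Lemma~\ref{lem:ioiffuo}, which in turn relied on Lemmas~\ref{lem:tails} and~\ref{lem:tail:surj} to synchronize $\ira{\UO}$ and $\ira{\IOR}$ along the $\Tail$ correspondence; so this final proof is essentially a checklist. The only minor subtlety to keep an eye on is the RI case, where one must use that the construction of $r^{\UO}$ records the \emph{most specific} subrole along each new edge, so that entailments through the role hierarchy $\Rmc$ propagate cleanly; but once this is unpacked, no new ideas are required.
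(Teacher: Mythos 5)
Your proposal is correct and follows essentially the same route as the paper: reduce everything to $\IOR\models\Kmc$ (item~\ref{item:A:one}) and transfer it to \UO via Lemma~\ref{lem:ioiffuo} and the $\Tail$ correspondence for GCIs and concept assertions, and via the definition of \UO for RIs, role assertions, and $\ir$-assertions. The paper's own proof states exactly these reductions, only more tersely; your unpacking of the RI case and the ABox cases matches what is intended.
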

\begin{proof}
$\UO\models\Amc$ follows from Lemma~\ref{lem:iomodel}, from the fact that the domains of \UO and \IOR coincide on the named elements, from Lemma~\ref{lem:ioiffuo} regarding concept assertions, and from the definition of \UO based on that of \IOR regarding the remaining assertions.
The RIs in \Tmc are satisfied by the definition of \UO,
and the GCIs by Lemmas~\ref{lem:iomodel} and~\ref{lem:ioiffuo}.
\end{proof}

In the remainder of this section, we prove that \UO can be used for CQ answering, which establishes the 
claim~\ref{item:A:three}.
\lemUOCert*
\begin{proof}
($\Rightarrow$) This direction follows from Lemma~\ref{lem:uomodel}.

\noindent($\Leftarrow$) 
	Assume that $\UO\models\psi[a_1,\ldots,a_k]$ holds and let
	\Imc be an arbitrary model of \Kmc.  We define a mapping
	$\pi:\Delta^{\UO}\rightarrow\Delta^{\Imc}$ such that, for all $p,q\in\Delta^{\UO}$, $a\in\Ind(\Amc)$,
	$r\in\NR(\Kmc)$, and $C\in\Con(\Kmc)$, the following hold:
	\begin{enumerate}[label=(\arabic*)]
		\item $\pi(a)=a^\Imc$. 
		\item $p\in C^{\UO}$ implies $\pi(p)\in C^{\Imc}$.
		\item $(p,q)\in r^{\UO}$ implies $(\pi(p),\pi(q))\in r^{\Imc}$.
		\item $(p,q)\in\ira{\UO}$ implies $(\pi(p),\pi(q))\in\ira{\Imc}$.
	\end{enumerate}
	%
	This mapping $\pi$ is defined inductively based on the structure of paths.
	
	\noindent{\textbf{Case $p = a\in\Ind(\Amc).$}} Define $\pi(a):=a^\Imc$.
	
	Hence, (1) is satisfied.
	By Lemmas~\ref{lem:ioiffuo} and \ref{lem:ioiffo} and the fact that \Imc is a model of \Kmc, 
	(2) is also fulfilled.
	(3) is satisfied by the definition of \UO based on \IOR, the definition of \IOR, the fact that relations 
	between named elements can only be enforced by assertions, and, again, by $\Imc\models\Kmc$.
	(4) is fulfilled due to Lemma~\ref{lem:tails} and the arguments given for (3).
	This establishes the induction base.
	
	\noindent{\textbf{Case $p = qsd$, $s\in\NR$.}}
	By induction, assume that $\pi$ is already defined for $q$.
	By Lemma~\ref{prop:uoels}, $d$ must then be of the form $x_D\in\NIA$.  
	By Lemma~\ref{lem:ioiffo}, $x_D\in D^{\IOR}$, and hence
	$\Tail(q)\in(\exists s.D)^{\IOR}$ by the definition of paths based on \IOR.  
	Lemma~\ref{lem:ioiffuo} implies $q\in(\exists s.D)^{\UO}$.
	By the \indHyp, $\pi(q)\in(\exists s.D)^{\Imc}$.  
		Hence there is an $e\in\Delta^\Imc$ with $(\pi(q),e)\in s^\Imc$ and $e\in D^\Imc$.
		Define $\pi(p):=e$.
	(1) and (4) are trivially satisfied by this definition.
	(2) is fulfilled because
	$p\in C^{\UO}$ iff $\Tail(p)\in C^{\IOR}$ by Lemma~\ref{lem:ioiffuo}; $\Tail(p)=x_D$;
	$x_D\in C^{\IOR }$ iff $\Kmc\models D\sqsubseteq C$ by Lemma~\ref{lem:ioiffo}; and $e\in D^\Imc$, 
	and $\Imc\models\Kmc$.
	(3) is fulfilled by the definition of \UO, the fact that $(\pi(q),e)\in s^\Imc$, and $\Imc\models\Kmc$.

	\noindent{\textbf{Case $p=q\ir d$.}}
	We assume $\pi$ is defined for $q$.
	By Lemma~\ref{prop:uoels}, $d\in\NIT$ and has the form (i) $x_{D,e}$ or (ii) $\ell_e$, where $e$ is determined by $q$.
	In case~(i), we can argue as in the previous case.
		By Lemma~\ref{lem:ioiffo}, $x_{D,e}\in D^{\IOR}$, and hence
		$\Tail(q)\in\overline{D}^{\IOR}$ by the definition of paths based on \IOR.  
		Lemma~\ref{lem:ioiffuo} implies $q\in\overline{D}^{\UO}$.
		By the \indHyp, $\pi(q)\in\overline{D}^{\Imc}$.  
			Hence there is an $e\in\Delta^\Imc$ with $(\pi(q),e)\in \ira\Imc$ and $e\in D^\Imc$.
			Define $\pi(p):=e$.
	In the case (ii), then set $\pi(p):=\pi(q)$.
		(1) and (3) are trivially satisfied by this definition.
		(4) is fulfilled by the definition of \UO, the \indHyp, and the fact that $(\pi(q),e)\in \ira\Imc$.
		(2) is fulfilled for (i) by reasons analogous to the ones given in the previous case w.r.t.\ (2).
		For (ii), we have that
		$p\in C^{\UO}$ iff $\Tail(p)\in C^{\IOR}$ by Lemma~\ref{lem:ioiffuo}; $\Tail(p)=\ell_e$;
		$\ell_e\in C^{\IOR }$ iff 
		$\Kmc\models \underline{C}(e)$ if $e\in\NI(\Kmc)$ and 
		$\Kmc\models E\sqsubseteq \underline{C}$ if $e=x_E\in\NIA$ by Lemma~\ref{lem:ioiffo}; 
		by Lemma~\ref{prop:uoels}, $\Tail(q)=e$.
		If $e\in\NI(\Kmc)$, then the \indHyp w.r.t.\ (1), $\Imc\models\Kmc$,
		$(p,q)\in \ira\UO$, and  
		the fact that (4) is fulfilled, yield $\pi(p)\in C^{\Imc}$. 
		In case $e=x_E\in\NIA$, then $p\in E^{\UO}$ holding by Lemmas~\ref{lem:ioiffo} 
		and~\ref{lem:ioiffuo},
		the \indHyp w.r.t.\ (2), $\Imc\models\Kmc$,
		$(p,q)\in \ira\UO$, and the previous observation that (4) is fulfilled, yield $\pi(p)\in C^{\Imc}$. 

Given this mapping $\pi$, we show that every homomorphism of $\Phi$ into \UO, which justifies some 
answer to $\Phi$, composed with $\pi$ yields a homomorphism of $\Phi$ into \Imc. This is an obvious
consequence of the four properties satisfied by $\pi$.
\end{proof}


\section{Proofs for Section~\ref{sec:rewriting}}
\label{sec:appB}

\veronika{explain (see comment):\\
By the assumption that the KB contains no role synonyms, there is a prime implicant for every set $R\in\NR$ for which there is an implicant. 
}

To prove Theorem~\ref{thm:iocert}, we first need to construct the query $\aquery'$ used in the definition
of the rewriting \Rew. Let \aquery be a CQ.
Consider a new binary predicate 
$\irl$ which we assume to be always interpreted by the canonical interpretation \IOR and its unraveling \UO 
as follows:
 \begin{align*}
\irla{\IOR}&:=\{(e,\ell_e)\in\Delta^{\IOR}\times\NIL\}\\
 \irla{\UO}&:=\{(p\cdot e,p\cdot e\ir  \ell_e)\in\Delta^{\UO}\times\Delta^{\UO}\}.
 \end{align*}
We construct the FO query $\aquery'$ by exhaustively applying
the unfolding rules in Figure~\ref{fig:unf:rules},
\begin{figure}[tb]	
	\fbox{		
	\resizebox{0.93\columnwidth}{!}{
		\begin{array}[t]{lrll}
	\mbox{ }\\[-0.5ex]
	(UF1) & \overline{C}(x)&\rightarrow&\exists y.\ir(x,y)\wedge C(y)\\
	(UF2) & \underline{C}(x)&\rightarrow&\exists y_1,y_2.\rho(x,y_1)\wedge\irl(y_1,y_2)\wedge C(y_2)\\
	(UF3) & C\sqcap D(x)&\rightarrow&C(x)\wedge D(x)\\
	(UF4) & \exists r.C(x)&\rightarrow&\exists y.r(x,y)\wedge C(y),r\in\NR 
	\\[-0.75ex]
	\mbox{ }
	\end{array}
}}
\caption{Unfolding rules for constructing $\aquery'$}
\label{fig:unf:rules}
\end{figure}
where a rule application corresponds to replacing a conjunction on the left-hand side of the rule, by the corresponding one
on the right-hand side.
In the rules, $C$ and $D$ denote arbitrary complex concepts, and $y_1,y_2$, and $y$ fresh variables for each rule 
application.
\veronika{maybe better move that to the figure caption?}
Notice that the terms used in the construction of $\aquery'$ are based on the original query $\aquery$, 
and hence do not apply to the existentially quantified variables introduced during the application of the unfolding 
rules in this construction.

\veronika{proof outline}
Given the CQ \aquery, let $\pi$ be a valuation of the variables in $\aquery$ such that 
$\UO\models\aquery(\pi(\vec{x}))$. 
We define the mapping $\tau:\Term(\Rew)\to\Delta^{\IOR}$ inductively on the application of the unfolding rules
from Figure~\ref{fig:unf:rules} as follows:
\begin{itemize}
\item $\tau(t)=\Tail(\pi(t))$ for all $t\in\Term(\aquery)$; 
\item if $\rho(x,y)\land C(y)$ was introduced by (UF1), then $\tau(y)=x_{C,b}$ if
$\tau(x)$ is of the form $b,x_{D,b}$, or $\ell_b$, with $b\in\Ind(\Amc)$, and 
$\tau(y)=x_{C,x_D}$ if $\tau(x)\in[x_D]_{\ira\IOR}$;
\veronika{$x_{C,b}\in\NIU$... similarly for the others, no? also below}
\rafael{don't understand the comment}
\item if $\rho(x,y_1)\land\rho_L(y_1,y_2)$ was introduced by (UF2) then
\begin{itemize}
\item $\tau(y_1)=b, \tau(y_2)=\ell_b$ if $\tau(x)$ is of the form $b,x_{D,b}$, or $\ell_b$, with $b\in\Ind(\Amc)$, and
\item $\tau(y_1)=x_C, \tau(y_2)=\ell_{x_C}$ if  $\tau(x)\in[x_D]_{\ira\IOR}$; and
\end{itemize}
\item if $r(x,y)\land C(y)$ was introduced by (UF4), then $\tau(y)=x_C$
\end{itemize}
It is easy to see that this function $\tau$ is well defined. We now show that $\IOR\models\aquery'(\tau(\vec{x}))$.
\begin{lemma}
\label{lem:psi:cq}
$\IOR\models\aquery'(\tau(\vec{x}))$.
\end{lemma}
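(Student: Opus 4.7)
The plan is to prove this by induction on the number of unfolding rule applications used to obtain $\aquery'$ from $\aquery$, establishing that $\IOR \models \alpha[\tau]$ for each atom $\alpha$ of $\aquery'$. In the base case, $\aquery' = \aquery$ and every concept atom is already of the form $A(t)$ with $A\in\NC$. For such atoms, since $\UO \models \aquery[\pi]$ and $\tau(t) = \Tail(\pi(t))$ for $t\in\Term(\aquery)$, Lemma~\ref{lem:ioiffuo} transfers concept membership from $\UO$ to $\IOR$. For role atoms $r(t,u)$ with $r\in\NR$, the claim follows from the definition of $r^{\UO}$ in terms of paths in $\IOR$. For indiscernibility atoms $\ir(t,u)$, Lemma~\ref{lem:tails} yields $(\Tail(\pi(t)), \Tail(\pi(u))) \in \ira{\IOR}$.

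For the inductive step, one of the unfolding rules is applied to some concept atom; I would dispatch each rule and then invoke the induction hypothesis on the strictly smaller unfolded query. Rule (UF3) is immediate from the semantics of conjunction. For (UF4) applied to $(\exists r.C)(x)$, Lemma~\ref{lem:ioiffuo} gives $\tau(x)\in(\exists r.C)^{\IOR}$, so by Lemma~\ref{lem:ioiffo} one of the TBox entailments $\Kmc\models\exists r.C(\tau(x))$, $\Kmc\models E\sqsubseteq\exists r.C$, or $\Kmc\models\underline{\exists r.C}(b)$ holds, depending on whether $\tau(x)$ lies in $\Ind(\Amc)$, $\NIA$, $\NIU$, or $\NIL$; a direct inspection of the four constituent sets of $r^{\IOR}$ in Figure~\ref{def:io} exhibits the edge $(\tau(x), x_C) = (\tau(x), \tau(y)) \in r^{\IOR}$ in each case, and $x_C\in C^{\IOR}$ follows from Lemma~\ref{lem:ioiffo} since $\Kmc\models C\sqsubseteq C$.

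For (UF1) applied to $\overline{C}(x)$, Lemma~\ref{lem:ioiffuo} gives $\tau(x)\in\overline{C}^{\IOR}$. I would now case-split on the sort of $\tau(x)$, using Fact~\ref{prop:iosim} to describe $[\tau(x)]_{\ira{\IOR}}$: if $\tau(x)\in\{b, x_{D,b}, \ell_b\}$ for some $b\in\Ind(\Amc)$, then the entailment $\Kmc\models\overline{C}(b)$ together with the definition of $\iras{}{\Kmc}$ yields $(b, x_{C,b})\in\ira{\IOR}$ and hence $(\tau(x), \tau(y))\in\ira{\IOR}$; the analogous argument handles the case $\tau(x)\in[x_D]_{\ira{\IOR}}$ via $\Kmc\models D\sqsubseteq\overline{C}$ and $(x_D, x_{C,x_D})\in\ira{\IOR}$. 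Membership $\tau(y)\in C^{\IOR}$ follows in either case from Lemma~\ref{lem:ioiffo}. Rule (UF2) is treated analogously, with the additional $\irl$-edge $(\tau(y_1),\tau(y_2))\in\irla{\IOR}$ holding by construction since $\tau(y_2)$ is precisely $\ell_{\tau(y_1)}$ and the lower-approximation entailment $\Kmc\models\underline{C}$ applied to the appropriate seed element ensures $\ell_{\tau(y_1)}\in C^{\IOR}$ via Lemma~\ref{lem:ioiffo}.

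The main obstacle is the case analysis for (UF1) and (UF2): one must verify, for every sort of the parent term, both that the auxiliary element selected by the definition of $\tau$ actually lies in $\Delta^{\IOR}$ (i.e., is reachable from a named individual along a $\Paths$-compatible sequence) and that the required $\ira{\IOR}$-edge is among those generated by $\iras{}{\Kmc}$ rather than arising only in the transitive closure. Combining Fact~\ref{prop:iosim} on the shape of granules of reachable seed elements with the explicit entailment $\tau(x)\in\overline{C}^{\IOR}$ (respectively $\underline{C}^{\IOR}$) supplied by Lemma~\ref{lem:ioiffo} should allow each case to be discharged uniformly.
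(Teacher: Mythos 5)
Your proposal is correct and follows essentially the same route as the paper: induction on the sequence of unfolding-rule applications, with the base case discharged via Lemma~\ref{lem:ioiffuo} and the inductive step handled by a per-rule case analysis that combines Fact~\ref{prop:iosim}, Lemma~\ref{lem:ioiffo}, and the explicit edges of Figure~\ref{def:io} (the paper works out only (UF1) and declares the other rules analogous, whereas you sketch all four). The one small point to tidy is that at the start of each rule case the membership $\tau(x)\in\overline{C}^{\IOR}$ (resp.\ $(\exists r.C)^{\IOR}$) should be cited from the induction hypothesis $\IOR\models\aquery^{n}(\tau(\vec{x}))$ rather than from Lemma~\ref{lem:ioiffuo}, since $x$ may be a fresh variable on which $\pi$ is undefined.
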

\begin{proof}
The proof is by induction on the application of unfolding rules for constructing $\aquery'$.
Let $\aquery^0,\aquery^1,\ldots$ be the sequence queries obtained at each application of an unfolding rule, 
with $\aquery^0=\aquery$. For the base case, it follows from Lemma~\ref{lem:ioiffuo} and the construction
of $\tau$ that $\IOR\models\aquery(\tau(\vec{x}))=\aquery^0(\tau(\vec{x}))$. Suppose now that 
$\IOR\models\aquery^n(\tau(\vec{x}))$. We prove that
$\IOR\models\aquery^{n+1}(\tau(\vec{x}))$ by a case analysis over the rule applied. As a prototypical case, we
show the result only for (UF1); all other cases are analogous.
\veronika{haha... didn't you once comment that this is not sufficient ;P}
\rafael{yes; we need to extend it, but for the moment I am focusing on the rest}

\noindent{\bf (UF1)} $\aquery^{n+1}$ is obtained from $\aquery^n$ by replacing $\overline{C}(x)$ by 
$\exists y.\rho(x,y)\land C(y)$, where $x\in\Term(\aquery^n)$. By induction, we know that 
$\tau(x)\in\exists\rho.C^{\IOR}$.
By Fact~\ref{prop:iosim}, $\tau(x)$ can only be an equivalence class of the form $[b]_{\ira{\IOR}}$, 
$b\in\Ind(\Amc)$, or $[x_D]_{\ira{\IOR}}$, $x_D\in\NIA$. From Lemma~\ref{lem:ioiffo} it then follows that
$\Kmc\models\overline{C}(a)$ or $\Kmc\models D\sqsubseteq \overline{C}$, respectively.
 But then
$(\tau(x),x_{C,e})\in\rho^{\IOR}$ and $x_{C,e}\in C^{\IOR}$, where $e$ is either $b$ or $x_D$, respectively.
This implies that $\IOR\models \aquery^{n+1}(\tau(\vec{x}))$.
\end{proof}
This lemma shows that $\tau$ is an $(a_1,\ldots,a_k)$-match for $\IOR$ and $\aquery'$. 
\veronika{is this defined already?}
Since our goal is to
show that it is a match for \Rew, we need to prove that $\IOR\models\Psi_i(\tau)$ for all $i,1\le i\le 3$.
Notice that all the new variables introduced to \Rew during the rewriting are existentially quantified,
and hence cannot be answer variables; moreover, the auxiliary sets $\ForkId,\ForkNeq,\ForkH$, and $\Cyc$ used are 
defined w.r.t.\ the relation \erqr. Thus, it suffices to consider only $\tau(t)$ for $t\in\Term(\aquery)$. 
\veronika{I currently don't see why the before arguments lead to that..}
\rafael{the point is that we do not need to match any other variables}
We start by showing the following result.

\begin{lemma}
\label{lem:cl:one}
Consider $s,t\in\Term(\aquery)$ such that $s\erqrOp t$ and $\pi(s)\in{\Aux}^{\UO}$. Then
\begin{enumerate}
\item\label{lem:cl:one:a} $\pi(s)=\pi(t)$ and
\item\label{lem:cl:one:b} for all terms $s',t'$ and roles $r_1,r_2$, if $r_1(s',s),r_2(t',t)\in\aquery$, then 
						  $\pi(s')=\pi(t')$.
\end{enumerate}
\end{lemma}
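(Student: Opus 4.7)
The plan is to prove both parts jointly by induction on the derivation of $s \erqrOp t$, viewing $\erqr$ as built up from the initial set by reflexive, transitive closure and the saturation condition \eqref{cond:closure}. The key structural observation that drives the whole argument is the following. Since $\pi(s) \in \Aux^{\UO}$ means $\Tail(\pi(s)) \in \NIA$, Lemma~\ref{prop:uoels} yields $\pi(s) = p \cdot r_0 x_C$ for some path $p$, role $r_0 \in \NR$, and $x_C \in \NIA$. By Corollary~\ref{lem:uosimniaels}, every element of $[\pi(s)]_{\ira{\UO}}$ is either $\pi(s)$ itself or a forward extension $\pi(s)\cdot \ir \ell_{x_C}$ or $\pi(s)\cdot(\ir x_{D_i,x_C})^i$, each of which ends with a $\ir$-edge rather than an $\NR$-edge. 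Moreover, once part~(\ref{lem:cl:one:a}) has been established, part~(\ref{lem:cl:one:b}) follows immediately: the edges $(\pi(s'), \pi(s)) \in r_1^{\UO}$ and $(\pi(t'), \pi(t)) \in r_2^{\UO}$ with $r_1, r_2 \in \NR$ and $\pi(s) = \pi(t) = p \cdot r_0 x_C$ force $\pi(s') = \pi(t') = p$ by the tree-uniqueness of $\NR$-predecessors in $\UO$.

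For part~(\ref{lem:cl:one:a}) I would proceed by cases on how $(s,t)$ entered $\erqr$. Reflexivity is trivial. For the initial set, $(s,t)$ is added because there exist $u,u'$ with $r_1(u,s), r_2(u',t) \in \aquery$, $r_1,r_2 \in \NR$, and $s \erqOp t$. Since $\pi$ satisfies the query, $(\pi(s),\pi(t)) \in \ira{\UO}$; and since $r_2(u',t)$ forces $\pi(t)$ to be the target of an $\NR$-edge in $\UO$, the structural observation rules out every extension of $\pi(s)$, leaving $\pi(t) = \pi(s)$. For transitivity ($s \erqrOp u \erqrOp t$), the induction hypothesis on $(s,u)$ gives $\pi(s) = \pi(u)$, whence $\pi(u) \in \Aux^{\UO}$, and a second application to $(u,t)$ gives $\pi(u) = \pi(t)$.

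The closure case \eqref{cond:closure} is the most delicate. Here $(s,s')$ is added because $r_1(s,t), r_2(s',t') \in \aquery$ with $r_1,r_2 \in \NR$ and $t \erqrOp t'$. Starting from $\pi(s) \in \Aux^{\UO}$, I would first lift the auxiliary-membership to $\pi(t)$: since $(\pi(s), \pi(t)) \in r_1^{\UO}$ and $\pi(s)$ is not a named individual (its tail lies in $\NIA$), the tree definition of $r_1^{\UO}$ forces $\pi(t) = \pi(s) \cdot s'' e''$, and Lemma~\ref{prop:uoels} then guarantees $e'' \in \NIA$, i.e., $\pi(t) \in \Aux^{\UO}$. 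The induction hypothesis can thus be applied to $(t,t')$; instantiating its part~(\ref{lem:cl:one:b}) at the witnessing atoms $r_1(s,t)$ and $r_2(s',t')$ yields exactly $\pi(s) = \pi(s')$, which is the required part~(\ref{lem:cl:one:a}) for the present pair.

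The main obstacle I anticipate is in this closure step: one must correctly invoke part~(\ref{lem:cl:one:b}) (not (\ref{lem:cl:one:a})) of the induction hypothesis on the predecessor pair $(t,t')$, and justify carefully that the auxiliary-element status propagates from $\pi(s)$ to $\pi(t)$ along the tree-shaped role edges of $\UO$. Once this is handled, the remaining verifications are routine consequences of the structural observation together with the tree structure of $\UO$.
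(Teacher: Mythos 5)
Your proposal is correct and follows essentially the same route as the paper: induction on the generation of $\erqr$, using the granule structure of auxiliary elements (Corollary~\ref{lem:uosimniaels}/Fact~\ref{prop:iosim}) for the base case and the tree-uniqueness of role predecessors in \UO for the closure step. The only cosmetic difference is that in the closure case you invoke part~(2) of the induction hypothesis on the successor pair, whereas the paper applies part~(1) and then the unique-predecessor argument; these are equivalent since part~(2) is derived from part~(1) at each stage.
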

\begin{proof}
By definition, \erqr is the smallest transitive and reflexive relation that includes 
$\{(t,t')\mid r_1(s,t),r_2(s',t')\in\aquery,r_1,r_2\in\NR,t\erqOp t' \}$, 
and is closed under~\eqref{cond:closure} (see page~\pageref{cond:closure}).

We prove~\ref{lem:cl:one:a} by induction on the definition of \erqr. 
If $s\erqrOp t$ with $s\not=t$, then $s\erqOp t$ and there exist $r_1(s',s),r_2(t',t)\in\Phi$. Since $\pi$ is a match for
$\aquery$ and \UO, we have that $\pi(s),\pi(t)\in\Ind(\Amc)\cup\Aux^{\UO}$. Given $\pi(s)\in\Aux^{\UO}$ and 
Fact~\ref{prop:iosim}, we get $\pi(s)=\pi(t)$. The result follows trivially for the reflexive closure.
We only need to prove it for the closure under transitivity and \eqref{cond:closure}.

Assume that the result holds for $s\erqrOp t'$ and $t'\erqrOp t$. Then, by the induction hypothesis, $\pi(s)=\pi(t')=\pi(t)$.

Suppose now that $r_1(s,s'),r_2(t,t')\in\aquery$ and the result holds for $s'\erqrOp t'$. 
Since
$(\pi(s),\pi(s'))\in r_1^{\UO}$, $\pi(s')\in\Aux^{\UO}$, and hence, by induction, $\pi(s')=\pi(t')$. But then,
by the construction of the unraveled interpretation, $\pi(s)=\pi(t)$.

The property~\ref{lem:cl:one:b} follows directly from~\ref{lem:cl:one:a} and the closure under~\eqref{cond:closure}.
\end{proof}

Using this result, we can then show that $\tau$ is a match for the auxiliary queries $\Psi_i$.

\begin{lemma}
\label{lem:psi:all}
If $\UO\models\aquery(\pi)$, then $\IOR\models\Psi_i(\tau)$ for all $i,1\le i\le 3$.
\end{lemma}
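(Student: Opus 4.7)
The plan is to analyze each of the three conjuncts $\Psi_1,\Psi_2,\Psi_3$ separately, in each case translating the assumption $\UO\models\aquery(\pi)$ into a constraint on $\tau$ via the correspondence $\tau(t)=\Tail(\pi(t))$ for $t\in\Term(\aquery)$, and exploiting the tree-like structure of $\UO$ together with Lemma~\ref{lem:cl:one}.

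For $\Psi_1$, I would split the proof into three cases. First, for answer variables $v\in\AVar(\aquery)$, since $\pi$ is a match for \UO and $\UO\models\aquery(\pi(\vec{x}))$ with $\vec x$ mapped to individual names, $\pi(v)\in\Ind(\Amc)$, so $\tau(v)=\pi(v)\in\Ind(\Amc)$, which is disjoint from both $\NIA=\Aux^{\IOR}$ and $\NIT={\Aux_\ir}^{\IOR}$. Second, for $v\in\ForkNeq$, I would argue by contradiction: assume $\tau(v)\in\NIA$, so by Lemma~\ref{prop:uoels}\ref{prop:uoels:nia}, $\pi(v)$ ends with an edge $\widehat r_n\in\NR$, say $\pi(v)=p\cdot s_0 x_D$ with some concrete role $s_0\in\NR$. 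By Lemma~\ref{lem:cl:one}\ref{lem:cl:one:a}, every $t\erqrOp v$ satisfies $\pi(t)=\pi(v)$, and by \ref{lem:cl:one:b} each role atom $r(s,t)\in\aquery$ with $t\in[v]_{\erqr}$ yields $(\pi(s),\pi(v))\in r^{\UO}$. The definition of $r^{\UO}$ then forces $\Kmc\models s_0\sqsubseteq r$ for every $r\in\In([v]_{\erqr})$, so $s_0$ is an implicant of $\In([v]_{\erqr})$, contradicting the definition of $\ForkNeq$. Third, for $v\in\Cyc$, I would again assume $\tau(v)\in\NIA$ and derive a contradiction by following the cycle witnesses $r_0(t_0,t'_0),\dots,r_n(t_n,t'_n)$: repeatedly applying Lemma~\ref{lem:cl:one} together with the definition of $\iras{}{\Kmc'}$ would show that $\pi$ identifies the endpoints along the cycle, yielding a path in $\UO$ that loops back to an ancestor of itself. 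Since $\Delta^{\UO}=\Paths(\IOR)$ is well-founded (each path has a strictly shorter prefix), this is impossible.

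For $\Psi_2$, suppose $(\{t_1,\dots,t_k\},\zeta)\in\ForkId$ and $\tau(t_\zeta)\in\NIA$. Then by Lemma~\ref{prop:uoels}\ref{prop:uoels:nia}, $\pi(t_\zeta)$ has a unique last edge $s_0 x_D$; and Lemma~\ref{lem:cl:one}\ref{lem:cl:one:a} gives $\pi(t')=\pi(t_\zeta)$ for every $t'\in\zeta$. For each $t_i\in\Pre(\zeta)$ there is an atom $r_i(t_i,t'_i)\in\aquery$ with $t'_i\in\zeta$, so $(\pi(t_i),\pi(t_\zeta))\in r_i^{\UO}$. By the definition of $r^{\UO}$, the predecessor of the \NIA-ending path $\pi(t_\zeta)$ in \UO is uniquely determined, so all $\pi(t_i)$ coincide. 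Taking $\Tail$ of both sides delivers $\tau(t_i)=\tau(t_{i+1})$.

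For $\Psi_3$, consider $(\I,\zeta)\in\ForkH$ with $\tau(t_\zeta)\in\NIA$. Exactly as for $\Psi_2$, $\pi(t_\zeta)=q\cdot s_0 x_D$ for some $s_0\in\NR$, and $(\pi(t^\Pre_\zeta),\pi(t_\zeta))\in r^{\UO}$ for every $r\in\In(\zeta)$, forcing $\Kmc\models s_0\sqsubseteq r$ for all such $r$; hence $s_0$ is an implicant of $\In(\zeta)$, so some prime implicant $r^*\in\I$ satisfies $\Kmc\models s_0\sqsubseteq r^*$. By the construction of $\IOR$ (closure of role edges under $\Kmc\models s\sqsubseteq r$), $(\Tail(q),x_D)=(\tau(t^\Pre_\zeta),\tau(t_\zeta))\in (r^*)^{\IOR}$, which witnesses the disjunction in $\Psi_3$.

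The main obstacle I expect is the $\Cyc$ clause of $\Psi_1$: unlike the purely local forking cases, it requires propagating the identification of $\pi$-images along a heterogeneous chain of $\erqr$-, $\erq$-, and role-steps, and then arguing carefully that the resulting equalities in $\UO$ would collapse a path into one of its own proper prefixes. Handling the mixture of $\erq$ (which introduces genuine $\ir$-equivalences in \UO that need not collapse paths) with $\erqr$ (which forces equality of paths at \Aux-elements) is the delicate point, and will likely require a separate induction on the length of the cycle witness together with a careful case analysis based on Lemma~\ref{prop:uoels}.
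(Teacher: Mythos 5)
Your proposal is correct and follows essentially the same route as the paper: a case split over $\Psi_1,\Psi_2,\Psi_3$, using Lemma~\ref{lem:cl:one} to collapse $\erqr$-related terms at $\Aux$-elements, the tree shape of \UO (via Lemma~\ref{prop:uoels}/Corollary~\ref{lem:uosimniaels}) for the fork and implicant arguments, and well-foundedness of paths for \Cyc. Your observation that the last edge label of the path $\pi(t_\zeta)$ is itself an implicant of $\In(\zeta)$ is a slightly more explicit justification of a step the paper leaves implicit in the $\Psi_3$ case, but the argument is the same in substance.
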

\begin{proof}
For $\Psi_1$, let first $v\in\AVar(\aquery)$. 
By definition of query answers,
$\pi(v)\in\Ind(\Amc)^{\UO}$. 
But then, $\tau(v)=\pi(v)$ by definition,
and hence
$\tau(v)\notin \Aux_\rho^{\IOR}\cup\Aux^{\IOR}$ since that set is disjoint with $\Ind(\Amc)^{\IOR}$.

Regarding the other cases, we proceed by contradiction and suppose that $\tau(v)\in\Aux^{\IOR}$. If $v\in\ForkNeq$, then
there is no implicant of $\In([v]_{\erqr})$ by the definition of \ForkNeq. For every $r\in\In([v]_{\erqr})$, there
exists $r(s_r,t_r)\in\aquery$ such that $t_r\erqrOp v$. 
Moreover, $\tau(v)\in\Aux^{\IOR}$ implies $\pi(v)\in\Aux^{\UO}$ by the interpretations of \Aux; 
\veronika{original argument does only hold for concepts?}
thus $\pi(v)=\pi(t_r)$ (Lemma~\ref{lem:cl:one}), and 
$(\pi(s_r),\pi(v))\in r^{\UO}$. Given that \UO is the unraveling of the interpretation \IOR; i.e., it is tree-shaped,
this implies that for all $r,r'\in\In([v]_{\erqr})$
$\pi(s_r)=\pi(s_{r'})$; but then every $r\in\In([v]_{\erqr})$ is an implicant of $\In([v]_{\erqr})$, 
\veronika{I am not sure of the reason since implicant is defined wrt \Rmc ie the whole KB?!\\maybe:
due to the interpretation of roles in \UO?
}
yielding
a contradiction.

Finally, if $v\in\Cyc$ then there exist
$m\geq 0$
\veronika{regarding below check also comment in paper. I am lost in this case because of the cyc def. I must have some blockade in my thoughts?}
\rafael{fixed}
 $r_i(t_i,t'_i)\in\aquery, 0\le i\le m$, and $j,0\le j\le m$, with
$(v,t_j)\in{\erqr\cup\erq}$. Since $\tau(v)\in\Aux^{\IOR}$, it follows from Lemma~\ref{lem:cl:one} 
and Corollary~\ref{lem:uosimniaels} that $\pi(t_j)\in\Aux^{\UO}$, and therefore $\pi(t_j')=\pi(t_j)\cdot r_jd$ for some
$d\in\Delta^{\IOR}$. 
In particular, $\pi(t'_j)\in\Aux^{\UO}$. Additionally, we know that
$(t'_i,t_{i+1})\in{\erqr\cup\erq}$ for all $i,0\le i<m$, and $(t'_m,t_0)\in{\erqr\cup\erq}$. Repeating this
argument, we obtain that $\pi(t_j)=\pi(t_{j+m\mod m+1})=\pi(t_j)r_jp$ for some path $p$, which is a 
contradiction.
\veronika{considering the defintion of \UO based on \IOR?}

To prove that it is a match for $\Psi_2$, let $(\{t_1,\ldots,t_k\},\zeta)\in\ForkId$ such that $t_\zeta\in\Aux^{\IOR}$.
Then, $\pi(t_\zeta)\in\Aux^{\UO}$ and there are terms $t'_1,\ldots,t'_k\in\zeta$ and role names $r_1,\ldots,r_k$
such that $r_i(t_i,t'_i)\in\aquery$ for all $i,1\le i\le k$. By Lemma~\ref{lem:cl:one} (\ref{lem:cl:one:b}),
$\pi(t_i)=\pi(t_j)$, and hence $\tau(t_i)=\tau(t_j)$
 holds for all $1\le i,j\le k$.

Finally, we prove the claim for $\Psi_3$. Let $(\I,\zeta)\in\ForkH$ such that $\tau(t_\zeta)\in\Aux^{\IOR}$. 
Since $\Pre(\zeta)\neq\emptyset$, $t^\Pre_\zeta$ is defined and 
$\Gamma:=\{r\in\NR\mid (\tau(t^\Pre_\zeta),\tau(t_\zeta))\in r^{\IOR}\}\neq\emptyset$ has an implicant $r\in\Gamma$.
\veronika{the last consequence is not clear to me}
Lemma~\ref{lem:cl:one}, together with the definition of $\tau$ then yields:
\begin{itemize}
\item $\tau(t)=\tau(t_\zeta)$ for all $t\in\zeta$, and
\item $\tau(t)=\tau(t^\Pre_\zeta)$ for all $t\in\Pre(\zeta)$.
\end{itemize}
Let $\Psi:=\{s\in\NR\mid s(t,t')\in\aquery$ for some $t\in\Pre(\zeta),t'\in\zeta\}$. Then $\Psi\subseteq\Gamma$
 and 
hence $r$ is an implicant for $\Psi$; moreover, there exists a prime implicant $\hat r\in \Gamma$ of $\Psi$. 
\veronika{all this only holds because of the 2 above items, right?}
\rafael{yes}
Then we have $(\tau(t^\Pre_\zeta),\tau(t_\zeta))\in \hat{r}^{\IOR}$ and 
$\hat r\in\I$.
\end{proof}

The following is a direct consequence of Lemmas~\ref{lem:psi:cq} and~\ref{lem:psi:all}.

\begin{corollary}
\label{cor:nec}
Let $\aquery$ be a CQ. If $\UO\models\aquery(a_1,\ldots,a_k)$, then 
$\IOR\models\Rew(a_1,\ldots,a_k)$.
\end{corollary}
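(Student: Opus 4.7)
By Lemma~\ref{lem:uo-cert}, it suffices to prove that $\UO\models\aquery[a_1,\dots,a_k]$ iff $\IOR\models\Rew(a_1,\dots,a_k)$. The polynomial bound on the construction of \Rew has already been sketched in the body of the paper: the unfolding into $\aquery'$ is linear in the size of $\aquery$, the equivalence classes of \erqr and the sets $\Pre(\zeta),\In(\zeta),\ForkNeq,\ForkId,\ForkH,\Cyc$ are all computable in polynomial time from $\aquery$ and \Rmc, and the extra conjuncts $\Psi_1,\Psi_2,\Psi_3$ have size polynomial in the terms of $\aquery$. So the real content is the two soundness and completeness directions.

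For the ``only if'' direction ($\UO\models\aquery$ implies $\IOR\models\Rew$), I would take a homomorphism $\pi$ of $\aquery$ into \UO and define a candidate homomorphism $\tau$ of $\aquery'$ into \IOR by taking tails of paths and extending them to the freshly introduced variables of $\aquery'$ by making canonical choices among the auxiliary elements of \NIA, \NIU, \NIL, as dictated by the unfolding rules in Figure~\ref{fig:unf:rules}. Using Lemma~\ref{lem:ioiffuo} (concept membership coincides modulo $\Tail$), the fact that $\UO$ is the unraveling of \IOR, and a technical lemma stating that whenever $s\erqrOp t$ and $\pi(s)$ is an $\Aux$-element then $\pi(s)=\pi(t)$ and their role predecessors coincide, one verifies successively that $\tau$ satisfies $\aquery'$ and each $\Psi_i$. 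Informally: $\Psi_1$ holds because answer variables map to named individuals (in $\Ind(\Amc)$), while variables in $\ForkNeq$ and $\Cyc$ cannot have $\Aux$-images, since in the tree-shaped \UO this would force a prime implicant to exist, respectively force a genuine cycle in \UO, both contradictions. $\Psi_2$ and $\Psi_3$ follow from the indiscernibility-aware forking argument captured by \erqr.

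For the ``if'' direction ($\IOR\models\Rew$ implies $\UO\models\aquery$), I would start from a match $\tau$ of $\Rew$ into \IOR and build a homomorphism $\pi$ of $\aquery$ into \UO by induction along the atom structure of $\aquery$. Terms $t$ with $\tau(t)\in\Ind(\Amc)$ are mapped to $\tau(t)$ itself; for a term with $\tau(t)\in\NIA\cup\NIT$, the filter $\Psi_1$ guarantees $t$ is not an answer variable and not part of any cycle, so we may pick a concrete path $p\in\Delta^{\UO}$ with $\Tail(p)=\tau(t)$ extending the path chosen for a unique role-predecessor. The well-definedness of this choice is exactly what $\Psi_2$ and $\Psi_3$ enforce: whenever two query terms in a class of \erqr hit the same \NIA-element in \IOR, $\Psi_2$ collapses their predecessors, and $\Psi_3$ witnesses an actual prime-implicant role edge at the forking point, so all role atoms pointing to that $\NIA$-element can be realised by the unique path that leaves the shared predecessor in \UO.

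\textbf{Main obstacle.} The delicate part is the back direction in the presence of the rough constructors, because $\NIU$ and $\NIL$ elements can create identifications that are invisible in the classical \EL combined approach. An element $x_{C,e}\in\NIU$ lies in a granule shared by $e$ and possibly by many other $x_{D,e}$, so the same atom $\overline C(t)$ in the query can be witnessed via arbitrarily many indiscernible choices in \IOR; one must show that $\erqr$ (built on top of \erq) captures all the forced identifications these witnesses imply, and that replacing the fine-grained identity relation by indiscernibility in the closure condition~\eqref{cond:closure} does not lose any true answer nor admit a spurious one. This is where the non-trivial generalisation of the classical forking filters becomes essential and where most of the bookkeeping in the appendix lives.
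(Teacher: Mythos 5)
Your plan for the direction actually at stake here---building $\tau$ from a match $\pi$ by taking tails of paths, extending it canonically to the fresh variables introduced by the unfolding rules, and then verifying $\aquery'$ and each $\Psi_i$ via Lemma~\ref{lem:ioiffuo} and a technical lemma on \erqr-related terms whose $\pi$-images are \Aux-elements---is essentially the paper's own proof, which factors it into Lemma~\ref{lem:psi:cq} and Lemma~\ref{lem:psi:all} with Lemma~\ref{lem:cl:one} playing the role of the technical lemma you postulate. Note only that Corollary~\ref{cor:nec} is just this single direction, so the converse construction and the appeal to Lemma~\ref{lem:uo-cert} that you also sketch belong to the full Theorem~\ref{thm:iocert} rather than to this statement.
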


To finish the proof of Theorem~\ref{thm:iocert}, we need to show that the converse implication holds too; that is, that our filter conditions fit their purpose of sifting out spurious answers. We proceed similarly as before and consider an arbitrary, but fixed, match $\pi$ for $\IOR$ and \Rew. 
In order to define a corresponding match $\tau$ for \UO and \aquery, we have to find the relevant domain elements in \UO.
\veronika{. where i deg used...}
 The filter conditions are helpful there.
%
%
In the proof, we use the \emph{degree} of an equivalence class of \erqr.
Intuitively, this is the largest length of a `sequence' (modulo \erqr) of role atoms in \aquery starting in an element of the class.
 Formally, the degree of the
equivalence class $\zeta$, written $d(\zeta)$, is the largest $n\ge 0$ such that there exists a sequence 
$r_1(t_0, t'_1),r_2(t_1,t'_2),\ldots, r_n ( t_{n-1} , t'_n ) \in\aquery$
with $t_0 \in \zeta$, and $r_i\in\NR$ and $t'_i\erqrOp t_i$ for all $i,1\le i < n$. 
If no such largest 
natural number exists, then define $d(\zeta):=\infty$.
\begin{lemma}~
\label{lem:claimA1}
\begin{enumerate}
\item If $\pi(t) \in \Aux^{\IOR}$, then $d([t]_{\erqr}) <\infty$.
\item If $s\erqrOp t$ and $\pi(s)\in\Aux^{\IOR}$,then
  \begin{enumerate}[label=(\roman{*}), ref=(\roman{*})]
  \item $\pi(s) = \pi(t);$ \label{lem:A1:one}
  \item If $r_1(s', s), r_2(t', t)\in\aquery$, $r_1,r_2\in\NR$, then $\pi(s') = \pi(t')$. \label{lem:A1:two}
  \end{enumerate}
\end{enumerate}
\end{lemma}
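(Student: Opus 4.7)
For part~1 I argue by contraposition. Assume $d([t]_{\erqr})=\infty$; since $\aquery$ contains only finitely many role atoms and terms, any witnessing infinite sequence of role atoms $r_i(t_{i-1},t'_i)$ with $t'_i\erqr t_i$ must revisit some $\erqr$-class, isolating a closed segment $r_m(t_{m-1},t'_m),\ldots,r_n(t_{n-1},t'_n)$ with $t'_n\erqr t_m$ that witnesses exactly the cyclic conditions in the definition of $\Cyc$. Because $t\erqr t_0$ for the start $t_0$, we obtain $t\in\Cyc$ whenever $t$ is a quantified variable; answer variables are immediate from $\Psi_1$, and individuals satisfy $\pi(t)\in\Ind(\Amc)$, which is disjoint from $\Aux^{\IOR}$. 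In every case $\Psi_1$ rules out $\pi(t)\in\Aux^{\IOR}$, contradicting the hypothesis.

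For part~2 I prove~(i) and~(ii) simultaneously by induction on the fixed-point construction of $\erqr$. Reflexivity is trivial. In the base rule~(1) we have $r_1(u,s),r_2(u',t)\in\aquery$ and $s\erq t$: the chain of $\ir$-atoms witnessing $s\erq t$ appears verbatim in the unfolded query $\aquery'$, so $(\pi(s),\pi(t))\in\ira{\IOR}$; moreover, an inspection of every clause defining $r^{\IOR}$ shows that role targets always lie in $\Ind(\Amc)\cup\NIA$, so $\pi(t)\in\Ind(\Amc)\cup\NIA$. Combining these observations with $\pi(s)\in\NIA$ and the consequence of Fact~\ref{prop:iosim} that the $\ira{\IOR}$-class of any seed $x_C\in\NIA\cap\Delta^{\IOR}$ meets $\Ind(\Amc)\cup\NIA$ only in $\{x_C\}$, we conclude $\pi(s)=\pi(t)$, proving~(i). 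The transitivity step composes the hypothesis directly.

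The critical case is the (dag) closure: $s\erqr s'$ is derived from $r_1(s,t),r_2(s',t')\in\aquery$ and an earlier $t\erqr t'$. A further inspection of $r^{\IOR}$ shows that every $r$-successor of a seed in $\NIA$ lies again in $\NIA$; thus $\pi(s)\in\NIA$ forces $\pi(t)\in\NIA=\Aux^{\IOR}$. Invoking~(ii) of the inductive hypothesis on $t\erqr t'$ with the very atoms $r_1(s,t),r_2(s',t')$ yields $\pi(s)=\pi(s')$, establishing~(i). For property~(ii) at any step, once~(i) holds at that step for $s\erqr t$, set $\zeta:=[s]_{\erqr}$: if $|\Pre(\zeta)|\le 1$ the claim is immediate; otherwise $(\Pre(\zeta),\zeta)\in\ForkId$, and~(i) applied to $s\erqr t_\zeta$ gives $\pi(t_\zeta)=\pi(s)\in\Aux^{\IOR}$, so the filter $\Psi_2$ of \Rew equates the $\pi$-images across $\Pre(\zeta)$. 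The principal difficulty is orchestrating this simultaneous induction coherently---(i) at a dag step consumes~(ii) at a strictly earlier step, while~(ii) at each step consumes~(i) at the same step through the fixed representative $t_\zeta$; this is resolved by first fully establishing~(i) at the current step before deriving~(ii), and by noting that the appeal to~(i) for $s\erqr t_\zeta$ is legitimate since $t_\zeta$ is chosen a priori and the pair $s\erqr t_\zeta$ belongs to $\erqr$ by definition of the equivalence class.
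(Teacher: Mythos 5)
Part~1 and the individual steps of Part~2 (the base case via Fact~\ref{prop:iosim} and the observation that role targets lie in $\Ind(\Amc)\cup\NIA$, the transitivity step, and the use of $\Psi_2$ in the closure step) all match the paper's argument in content. The problem is the induction scheme you use to tie them together: a single simultaneous induction on the stages $\erqr^{(i)}$ of the fixed-point construction is not well-founded here, and the difficulty you flag in your last paragraph is real but your resolution does not address it. Concretely: to derive~(ii) for a pair $s\erqrOp t$ that entered at stage $i$, you apply~(i) to the pair $(s,t_\zeta)$, where $t_\zeta$ is the fixed representative of the \emph{full} equivalence class $\zeta=[s]_{\erqr}$ (this is unavoidable, since the guard of $\Psi_2$ is $\Aux(t_\zeta)$ and only $t_\zeta$ appears there). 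But the pair $(s,t_\zeta)$ may only enter $\erqr^{(j)}$ at some stage $j>i$, so ``(i) at the current step'' does not cover it; the fact that $(s,t_\zeta)\in\erqr$ ``by definition of the equivalence class'' is true but irrelevant to an induction whose hypothesis at stage $i$ only speaks about $\erqr^{(i)}$. Establishing~(i) for $(s,t_\zeta)$ at stage $j$ may in turn (via a closure step) require~(ii) at stage $j-1$ for a successor pair, which again requires~(i) for a pair involving another representative at a yet later stage, and so on --- nothing in the stage index bounds this regress.

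The repair is exactly the paper's double induction. By Part~1, $d([s]_{\erqr})<\infty$ whenever $\pi(s)\in\Aux^{\IOR}$, and in the closure case the successor class satisfies $d([t]_{\erqr})<d([s]_{\erqr})$ (prepending $r_1(s,t)$ lengthens any witnessing sequence). Taking the degree as the \emph{outer}, well-founded measure, the outer hypothesis grants~(i) for \emph{all} pairs of the smaller-degree class $[t]_{\erqr}$ --- in particular for $(t,t_{[t]_{\erqr}})$, which is what legitimizes firing $\Psi_2$ --- while the inner induction on stages only has to handle pairs within the fixed-degree class via transitivity. Your step-by-step reasoning then goes through verbatim; only the bookkeeping of which instances of~(i) are already available needs to be re-anchored on the degree rather than on the construction stage.
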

\begin{proof}
To prove the first point, suppose that 
$d([t]_{\erqr})=\infty$. Since $\pi(t)\in\Aux^{\IOR}$, $t$ cannot be an answer variable, and hence
$t\in\QVar(\aquery)$. Since $\aquery$ is finite,
$d([t]_{\erqr})=\infty$ implies that $t\in\Cyc$. 
\veronika{really? the very last relation in the \Cyc condition is \erq. what if we have an \erqr relation?}
But, then, $\Psi_1$ contains the conjunct $\neg \Aux(t)$,
which contradicts the given fact that $\pi(t)\in\Aux^{\IOR}$.

Consider now the second point. Since $\pi(s)\in\Aux^{\IOR}$, then by the previous point we know that
$d([s]_{\erqr})<\infty$. We prove~\ref{lem:A1:one} by induction on the degree of $[s]_{\erqr}$.
If $d([s]_{\erqr})=0$, then, since $s\erqrOp t$, it follows that $s\erqOp t$. 
\veronika{because? (i currently wonder why)}
Additionally, if $s\not=t$, then 
there must exist $r_1(s',s),r_2(t',t)\in\aquery$ with $r_1,r_2\in\NR$. In particular, this means that
$t\in\Aux^{\IOR}$ 
\veronika{because? similar arguments as in below comment?}
and hence, by Fact~\ref{prop:iosim}, $\pi(s)=\pi(t)$.
For the induction step, we label the construction of \erqr by defining
\begin{align*}
\erqr^{(0)} :={} & \{(t, t)\mid t\in\Term(\aquery)\} \cup {} \\
		    & \{(t,t')\mid r_1(s,t),r_2(s',t'){\in}\aquery,r_1,r_2{\in}\NR,t\erqOp t' \},
\end{align*}
\veronika{the first just ensure that all terms are present, right? since r1,r2 don't have to be different, the second also holds for terms occuring in 1 role atom only- see paper comment}
and
\begin{align*}		    
\erqr^{(i+1)} {:=} &  \erqr^{(i)}\cup {} \\
&\{(s,t)\mid \exists s'.s\;\erqr^{(i)}\; s'\text{ and }s'\;\erqr^{(i)}\; t\}\ \cup\\ 
&\{(s,t){\mid}\exists r_1(s,s'),r_2(t,t'){\in}\aquery,r_1,r_2{\in}\NR,s'\erqr^{(i)}t'\}.
\end{align*}
It is easy to see that $\erqr=\bigcup_{n\ge 0}\erqr^{(i)}$. We show by induction on $i$ that, if
$s\;\erqr^{(i)}\;t$, $d([s]_{\erqr})=n$, and $\pi(s)\in\Aux^{\IOR}$, then $\pi(s)=\pi(t)$. 
The induction base, for $i=0$ is trivial. 
\veronika{really? ok, $s\;\erqr^{(0)}\;t$ implies $s=t$; or $s\;\erq\;t$, then Fact 3 together with the role interpretation in \IOR (only \NIA elements can be successors apart from named ones) yields $\pi(s)=\pi(t)$. }
For the induction step, we consider two cases.

\noindent{\bf[Case 1]} If there is an $s'$ such that $s\;\erqr^{(i)}\; s'\;\erqr^{(i)}\;t$, then, by the induction on $i$, we know
	that $\pi(s)=\pi(s')$, and hence $\pi(s')\in\Aux^{\IOR}$; moreover, $s'\in[s]_{\erqr}$, which implies that
	$d([s']_{\erqr})=n$. By the induction hypotheses, we similarly derive that $\pi(t)=\pi(s')$, yielding $\pi(t)=\pi(s)$.

\noindent{\bf[Case 2]} 
\veronika{i find it a bit confusing that the prime versions are switched here}
\rafael{right: will fix it}
If there exist $r_1(s,s'),r_2(t,t')\in\aquery$ with $s'\;\erqr^{(i)}\;t'$, then, since $\pi(s)\in\Aux^{\IOR}$,
\veronika{$\pi(s')\in\Aux^{\IOR}$? looking at the following, it seems that the above is just a confusion? you mean $r_1(s',s),r_2(t',t)\in\aquery$?}
	it follows that $\pi(s')\in\Aux^{\IOR}$. 
	\veronika{by?}
	Moreover, $d([s']_{\erqr})<d([s]_{\erqr})$. By induction on the degree,
	we have that $\pi(s')=\pi(t')$. Since $\pi$ is a match for $\Psi_2$, it follows that $\pi(s)=\pi(t)$.
	
The proof of~\ref{lem:A1:two} follows immediately from~\ref{lem:A1:one} and the fact that $\pi$ is a match for
\Rew.
\veronika{especially because of $\Psi_2$, no?}
\rafael{yes}
\end{proof}

Recall that we constructed the query $\aquery'$ by applying the unfolding rules to the original query $\aquery$.
This query $\aquery'$ satisfies some useful properties, which later support us to find the above mentioned relevant elements in \UO, too.

\begin{lemma}\label{lemma:claim:A2} 
The unfolding $\aquery'$ of $\aquery$ satisfies the following properties:
\begin{enumerate}[label=(\alph{*}), ref=(\alph{*})]
\item For every $v\in\Var(\aquery')\setminus\Var(\aquery)$ there is at most one atom 
	$r(t,v)\in\Rew$ with $r\in\NR\cup\{\irl\}$ and $t\in\Term(\Rew)$.
\item For all $v\in\Var(\aquery)$, if $r(t,v)\in\Rew$, $r\in\NR\cup\{\irl\}$, and $t\in\Term(\Rew)$, then $r(t,v)\in\aquery$.
\item If there is a sequence 
$r_0(t_0,t'_0),\ldots,r_m(t_m,t'_m)\in\Rew$ with $m\geq 0$, 
$t'_i \erqrOp t_{i+1}$ or $t'_i \erqOp t_{i+1}$,  
for all $i<m$, and $t'_m \erqOp t_0$, then 
$t_i,t'_i\not\in\Var(\aquery')\setminus\Var(\aquery)$ and, in particular, $r_i\not=\irl$ for all $r_i$.
\end{enumerate}
\end{lemma}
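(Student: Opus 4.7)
The plan is to proceed by a careful case analysis of the unfolding rules (UF1)--(UF4), combined with the observation that both $\erqr$ and $\erq$ are defined as relations on $\Term(\aquery)$ only. The main obstacle is not difficulty but careful bookkeeping---specifically, being precise about which position of each unfolding-introduced atom hosts a fresh variable, and respecting the mismatch between the domain of $\erq,\erqr$ (which is $\Term(\aquery)$) and the extended term set $\Term(\aquery')$ of the unfolded query.

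For (a), I would trace, for every fresh variable introduced by the unfolding, the atoms in which that variable occurs as a \emph{second} argument. Rule (UF1) places the fresh $y$ only into $\rho(x,y)$, and $\rho\notin\NR\cup\{\irl\}$; rule (UF2) places the fresh $y_1$ only into $\rho(x,y_1)$ (again not counted) and the fresh $y_2$ only into $\irl(y_1,y_2)$; rule (UF4) places the fresh $y$ only into $r(x,y)$ with $r\in\NR$. Any subsequent unfolding applied to the concept atom $C(y)$ uses $y$ in the \emph{first} position of the atoms produced, so no further incoming role or $\irl$ atom on $y$ ever accumulates. Because each rule application uses globally fresh variables, this yields the ``at most one incoming atom'' property~(a).

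For (b), the corresponding remark is that every role atom or $\irl$ atom introduced by the unfolding has a \emph{fresh} variable in its second position. Consequently, if $v\in\Var(\aquery)$ is an original variable and $r(t,v)\in\Rew$ with $r\in\NR\cup\{\irl\}$, the atom cannot be a by-product of the unfolding and must therefore already belong to $\aquery$.

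For (c), the decisive observation is that $\erq$ and $\erqr$ live on $\Term(\aquery)$ by definition. Hence the hypothesis $t'_i\erqrOp t_{i+1}$ or $t'_i\erqOp t_{i+1}$ for $i<m$, together with $t'_m\erqOp t_0$, forces every $t_i$ and $t'_i$ to lie in $\Term(\aquery)$, which is precisely the statement $t_i,t'_i\notin\Var(\aquery')\setminus\Var(\aquery)$. The second conclusion $r_i\neq\irl$ follows from (b) plus the additional remark that $\irl$ is a fresh binary predicate not occurring in the original query, while in the unfolding it is introduced only by (UF2), whose atom $\irl(y_1,y_2)$ always has the fresh variable $y_2$ in the successor position; since $t'_i$ has just been shown to be non-fresh, $r_i=\irl$ is impossible.
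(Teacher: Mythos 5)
Your proof is correct and follows essentially the same route as the paper's: both rest on the observation that every unfolding step places its fresh variables only in the successor positions of the atoms it introduces (and never reuses them as successors later), which immediately gives (a) and (b). For (c) your justification via the fact that $\erq$ and $\erqr$ are by definition relations on $\Term(\aquery)$ only is, if anything, a more explicit rendering of the paper's terse appeal to fresh-successor bookkeeping, and it correctly covers every $t_i$ and $t'_i$ in the cycle before ruling out $r_i=\irl$.
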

\begin{proof}
Each unfolding step uses a freshly introduced variable as successor in an atom 
$r\in\NR\cup\{\irl\}$ that is introduced in the same step and that it does not use other variables as successors.
This directly implies~(a) and (b).
Together with the fact that the unfolding only uses fresh variables as successors (i.e., also in \ir-atoms), the assumption that a predicate 
$\irl\in\Rew$ can only have been introduced during unfolding yields (c).
\end{proof}

\veronika{What is the purpose of the following? what does the relation capture? used to identify the terms that are mapped to the same element in \UO?}
We now define the relation $\sim_\pi$ to be the reflexive and transitive closure of the following relation on
$\Term(\aquery)$:
\begin{align*}
& \{(s,t)\mid s\erqrOp t,\pi(s),\pi(t)\in \Aux^{\IOR}\} \cup {}\\
& \{(s, t)\mid r_1(s, s'), r_2(t, t')\in\Phi,\pi(s')\in\Aux^{\IOR},s'\erqrOp t'\}. 
\end{align*}
Clearly, $\sim_\pi$ is an equivalence relation.
From Lemma~\ref{lem:claimA1} it follows that  if {$s\sim_\pi t$}, then $\pi(s)=\pi(t)$.
Consider now the query $\Psi$ obtained from $\aquery'$ by identifying all terms $t,t'\in\Term(\aquery)$ where
$t\sim_\pi t'$. It is easy to see that $\pi$ is also a match for this query $\Psi$.

We can now prove the following proposition.
As the previous two lemmas, it supports us in finding those  elements in \UO that can be used to answer $\Phi$.
\veronika{why is this a proposition and the others are lemmas?}
\rafael{no specific reason, just because it is big}
\veronika{stopped here}
\begin{proposition}
\label{prop:three:point}
\begin{enumerate}[label=(\Roman{*}), ref=(\Roman{*})]
\item 
If $v\in\QVar(\Psi)$ and $ \pi(v)\in \Aux^{\IOR}$, then there is at most one $t\in\Term(\Psi)$ 
such that $r(t,v)\in\Psi$ for some $r\in\NR\cup\{\irl\}$;
\item  
If $v\in\QVar(\Psi)$, $\pi(v)\in \Aux^{\IOR}$, and $t\in\Term(\Psi)$ is 
such that $\Gamma=\{r\mid r(t,v)\in\Psi\}\not=\emptyset$, then there is an implicant $s$ for 
$\Gamma$ with $(\pi(t), \pi(v))\in s^{\IOR}$;
\item 
If $r_0(t_0,t'_0),\ldots,r_m(t_m,t'_m)\in\Psi$ with $m\geq 0$, $r_i\in\NR\cup\{\irl\}$, $t'_i \erqOp t_{i+1}$
for all $i<m$, and $t'_m \erqOp t_0$, 
then $\pi(t_i),\pi(t'_i)\not\in \Aux^{\IOR}$ for all $i\leq m$.
\end{enumerate}
\end{proposition}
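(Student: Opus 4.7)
The plan is to establish the three parts in sequence, each time leveraging the filter formulas $\Psi_1,\Psi_2,\Psi_3$ that the match $\pi$ satisfies, together with Lemmas~\ref{lem:claimA1} and~\ref{lemma:claim:A2}. Throughout, for any variable $v\in\QVar(\Psi)$ I would distinguish whether $v$ is fresh (i.e., in $\Var(\aquery')\setminus\Var(\aquery)$) or lifted from $\aquery$; this matters because $\sim_\pi$ only identifies terms of $\aquery$, so fresh variables pass unchanged from $\aquery'$ into $\Psi$ and inherit the restrictive shape imposed by the unfolding rules.

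For (I), I would argue by contradiction. Suppose $r_1(t_1,v),r_2(t_2,v)\in\Psi$ with $t_1\neq t_2$. If $v$ is fresh, Lemma~\ref{lemma:claim:A2}(a) already forces the two atoms to coincide in $\aquery'$, a contradiction. Otherwise $v\in\Var(\aquery)$ and by Lemma~\ref{lemma:claim:A2}(b) both atoms originate from $\aquery$, so the preimages of $t_1,t_2$ lie in $\Pre([v]_{\erqr})$ and $(\Pre([v]_{\erqr}),[v]_{\erqr})\in\ForkId$. Since $v\erqrOp t_{[v]_{\erqr}}$ and $\pi(v)\in\Aux^{\IOR}$, Lemma~\ref{lem:claimA1}(2)(i) yields $\pi(t_{[v]_{\erqr}})\in\Aux^{\IOR}$, so $\Psi_2$ forces $\pi(t_1)=\pi(t_2)$. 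Invoking $v\erqrOp v$ and the second clause of the definition of $\sim_\pi$, this forces $t_1\sim_\pi t_2$, contradicting $t_1\neq t_2$ in $\Psi$.

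For (II), if $v$ is fresh then by construction (UF4) introduces $v$ via a single atom $r(x,v)$; the role $r$ is trivially an implicant of $\Gamma=\{r\}$ and $(\pi(x),\pi(v))\in r^{\IOR}$ holds by the match property. If $v\in\Var(\aquery)$, Lemma~\ref{lemma:claim:A2}(b) ensures $\Gamma\subseteq\In([v]_{\erqr})$. Since $\pi(v)\in\Aux^{\IOR}$, $\Psi_1$ excludes $v\in\ForkNeq$, so $\In([v]_{\erqr})$ admits a prime implicant $\hat s$. If $(\I,[v]_{\erqr})\in\ForkH$, Lemma~\ref{lem:claimA1}(2)(i) gives $\pi(t_{[v]_{\erqr}})\in\Aux^{\IOR}$, so $\Psi_3$ delivers some $\hat s\in\I$ with $(\pi(t^\Pre_{[v]_{\erqr}}),\pi(t_{[v]_{\erqr}}))\in\hat s^{\IOR}$, and Lemma~\ref{lem:claimA1} equates these $\pi$-images with $\pi(t),\pi(v)$. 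Otherwise some prime implicant $\hat s$ already lies in $\In([v]_{\erqr})$, so there is an atom $\hat s(s_0,v_0)\in\aquery$ with $v_0\erqrOp v$; Lemma~\ref{lem:claimA1}(2) then gives $\pi(v_0)=\pi(v)$ and $\pi(s_0)=\pi(t)$, yielding $(\pi(t),\pi(v))\in\hat s^{\IOR}$.

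For (III), I would argue by contradiction: assume some $\pi(t_j)$ or $\pi(t'_j)$ lies in $\Aux^{\IOR}$. By Lemma~\ref{lemma:claim:A2}(c) all the $t_i,t'_i$ belong to $\Var(\aquery)$ and all $r_i\in\NR$, so the $\Psi$-cycle lifts, modulo $\sim_\pi$, to a chain in $\aquery$. A case analysis on the construction of $r^{\IOR}$ and $\ira{\IOR}$ shows that any $\NR$-successor of an $\NIA$-element remains in $\NIA$, while $[x_C]_{\ira{\IOR}}\subseteq\NIA\cup\NIT$ and any $\NR$-step from $\NIT$ likewise lands in $\NIA$. Hence the hypothesis propagates around the cycle, and the lift yields exactly the configuration that exhibits $t_j$ as a witness for $\Cyc$. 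Then $\Psi_1$ contains the conjunct $\neg\Aux(t_j)$, contradicting the hypothesis. The main obstacle, and where I expect the proof to require the most care, is precisely this last lift: the $\sim_\pi$-identifications can collapse several $\aquery$-edges into a single $\Psi$-edge and merge $\erqr$- with $\erq$-links in non-trivial ways, so producing a faithful $\Cyc$-witness in $\aquery$ requires tracking equivalence class representatives and $\pi$-equalities hand in hand with the cycle traversal.
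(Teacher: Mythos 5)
Your proposal is correct and follows essentially the same route as the paper: lift the $\Psi$-atoms back to $\aquery$/$\aquery'$ via Lemma~\ref{lemma:claim:A2}, use $\sim_\pi$ and Lemma~\ref{lem:claimA1} to identify predecessors for (I), use the $\ForkNeq$/$\ForkH$ filters enforced by $\Psi_1$ and $\Psi_3$ for (II), and derive a $\Cyc$-witness contradicting $\Psi_1$ for (III). The only differences are cosmetic: your detour through $\ForkId$/$\Psi_2$ in (I) and the Aux-propagation argument in (III) are not needed (the second clause of $\sim_\pi$ and the syntactic definition of $\Cyc$ already suffice), and in (II) you should also note the fresh successors introduced by (UF1)/(UF2), for which the claim is trivial since $\Gamma$ is empty or a singleton.
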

\begin{proof}
\noindent{\bf (I)} Let $\pi(v)\in\Aux^{\IOR}$ and suppose that there exist $r_1(t_1,v),r_2(t_2,v)\in\Psi$ with
$r_1\not=r_2$. By Lemma~\ref{lemma:claim:A2} (a) we know that $v\in\Var(\aquery)$. From 
Lemma~\ref{lemma:claim:A2} (b) it follows that there are $r_1(s_1,s'_1),r_2(s_2,s'_2)\in\aquery$ s.t.\
$s_1\sim_\pi t$, $s_2\sim_\pi t$, and $s_1\sim_\pi v\sim_\pi s_2$. But then, $\pi(s_1)=\pi(v)$. Then,
$t_1\sim_\pi t_2$ and hence $t_1=t_2$.

\noindent{\bf (II)} Lemma~\ref{lemma:claim:A2} implies the existence of such implicant for all variables
introduced during unfolding. Let now $v\in\Var(\aquery)$ such that $\pi(v)\in\Aux^{\IOR}$ and 
$\Gamma\not=\emptyset$. Since $\pi$ is a match for $\psi_1\land\psi_3$, there exists an implicant
$s$ for $\In([v]_{\erqr})$ with $(\pi(t^\Pre_{[v]}), \pi(t_{[v]}))\in s^{\IOR}$. Moreover, 
we have
$t^\Pre_{[v]} \sim_\pi t_{[v]}$ and $t_{[v]}\sim_\pi v$. Hence $\pi(t_{[v]}) = \pi(t)$ and $\pi(t_{[v]}) = \pi(v)$.
Thus, $s$ is the required implicant for $\Gamma$.

\noindent{\bf (III)}
Let $r_0(t_0,t'_0),\ldots,r_m(t_m,t'_m)\in\Psi$ with $m\geq 0$, $r_i\in\NR\cup\{\irl\}$, 
$t'_i \erqOp t_{i+1}$, for all $i<m$, and $t'_m \erqOp t_0$. Since unfolding does not replace any variables, there
must exist $r_0(s_0, s'_0), \ldots, r_{m}(s_{m}, s'_{m})\in\aquery$ with $s_i\sim_\pi t_i$ and $s'_i \sim_\pi t'_{i}$ and 
$s'_i \erqOp s_{i+1}$, for all $i<m$, and $s'_m \erqOp s_0$.
Assume that $\pi(t'_i)\in \Aux^{\IOR}$ for some $i\le m$. Then $\pi(s_i)=\pi(t_i)$, and thus $\pi(s_i)\in\Aux^{\IOR}$
and $s_i\in\QVar(\aquery)$. But then, $s_i\in\Cyc$, and thus $\neg\Aux(s_i)$ appears in $\aquery'$, yielding a 
contradiction.
\end{proof}

We now define a mapping $\tau:\Term(\Psi)\to\Delta^{\UO}$ such that for every two terms
$t,v\in\Term(\Psi)$ it holds that $\Tail(\tau(t))=\pi(t)$ and if $(\pi(t),\pi(v))\in\ira{\IOR}$, then 
$(\tau(t),\tau(v))\in\ira{\UO}$. 
This mapping is defined recursively, depending on the properties of the term $t$.
\begin{enumerate}
\item Let $t\in\Term(\Psi)$ be such that $\pi(t)\not\in\Aux\cup\Aux_\ir$. Then define
$\tau(t):=\pi(t)$. In particular, this defines $\tau(t)$ for all 
$t\in\AVar(\Psi)\cup(\Term(\Psi)\cap\NI)$.
\item Let $v\in\QVar(\Psi)$ be such that $\pi(v)\in\Aux^{\IOR}$ and there 
is neither an atom $r(t, v)\in\phi_L$, 
$r\in\NR\cup\{\irl\}$, 
nor a symbol $t\in\Term(\Psi)$ with $v\erqrOp t$ and $v\not=t$ 
 (i.e., there is no atom $\ir(v,t')\in\aquery$ or $\ir(t',v)\in\aquery$, $t'\in\Term(\Psi)$).
By the definition of \UO and since each 
$d\in\Delta^{\IOR}$ is reachable from an element of $\Ind(\Amc)^{\IOR}$, 
there are sequences $d_0,\ldots,d_n\in\Delta^{\IOR}$ and 
$r_0,\ldots,r_{n-1}\in\NR\cup\{\ir\}$ such that 
$d_0\in\Ind(\Amc)^{\IOR}$,$d_n=\pi(v)$, $(d_i,d_{i+1})\in r^{\IOR}$ if 
$r\in\NR$, and 
$(d_i,d_{i+1})\in\irO$ if 
$r=\ir$ for all $0\leq i<n$.
Then define $\tau(v):=d_0r_0d_1\cdots r_{n-1}d_n\in\Delta^{\UO}$.
\item Let $v\in\QVar(\Psi)$ with $|[v]_{\erqr}|>1$, be such that there is no
$t\in\Term(\Psi)$ with $(v,t)\in\erqr$ for which $\tau(t)$ is already defined nor exists an
atom $r(t', t)\in\aquery$, $r\in\NR\cup\{\irl\}$, $t'\in\Term(\Psi)$. $\tau(v)$ is then defined as in 
the previous item.
\item If $\tau(v)$ is undefined and there is an atom $r(t,v)\in\Psi$ with $r\in\NR$ and $\tau(t)$ defined, then by
property (II) of Proposition~\ref{prop:three:point} there is an implicant $s$ for 
$$\Gamma=\{r\mid r(t,v)\in\Psi\}\neq\emptyset$$ 
such that $(\pi(t),\pi(v))\in s^{\IOR}$.
In this case, we define $\tau(v):=\tau(t)\cdot s\pi(v)$. 
Since $\Tail(\tau(t))=\pi(t)$ and $(\pi(t),\pi(v))\in s^{\IOR}$, we have $\tau(v)\in\Delta^{\UO}$.
\item If $\tau(v)$ is undefined and there exists a symbol $t\in\Term(\Psi)$ with $v\erqrOp t$ and $\tau(t)$ defined,
then
\begin{enumerate}
 \item
  If $\pi(t)\in[a]_{\ira{\IOR}}$, $a\in\Ind(\Amc)$, set $\tau(v)$ to an arbitrary element 
  $p\in\Delta^{\UO}$ with $\Tail(p)=\pi(v)$. 
 \item
  If $\pi(t)\in[x_C]_{\ira{\IOR}}$, $x_C\in\NIA$, by construction $\Tail(\tau(t))=\pi(t)$.
  By Fact~\ref{prop:iosim} and Proposition~\ref{prop:uoels}, $\tau(t)$ must be of the form 
  $\tau(t)=p\cdot rx_C$, $\tau(t)=p\cdot rx_C(\ir x_{D'_i,x_C})^i$, or 
  $\tau(t)=p\cdot rx_C\ir a_{x_C}$ for some $r\in\NR$, $p\in\Delta^{\UO}$, and $i\geq 1$.
  If $\pi(v)=x_C$ set $\tau(v):=p\cdot rx_C$. If $\pi(v)$ is of the form $\pi(v)=x_{E,x_C}$, then 
  there is an element $p'\cdot x_C(\ir x_{E'_j,x_C})^j\ir x_{E,x_C}\in\Delta^{\UO}$, $j\geq 0$
  (Proposition~\ref{prop:uoels}). But then, we also have the element 
  $e=p\cdot rx_C(\ir x_{E'_j,x_C})^j\ir x_{E,x_C}\in\Delta^{\UO}$, and
  can set $\tau(v):=e$. The case for $\pi(v)=a_{x_C}$ is analogous to the previous case.
 \end{enumerate}
 \item If $\tau(v)$ is undefined and there is an atom $\irl(t,v)\in\Psi$ with $\tau(t)$ defined, 
 then set $\tau(v):=\tau(t)\cdot\ir\pi(v)$.
 Since $\Tail(\tau(t))=\pi(t)$ and $(\pi(t),\pi(v))\in \irla{\IOR}$, we have $\tau(v)\in\Delta^{\UO}$. 
\end{enumerate}

We first show that this mapping is well defined. For the first two cases, this is clearly the case.
The third case is only applicable once for every equivalence class of \erq by construction, and hence
$\tau(v)$ is also well defined. 
By the property (I) of Proposition~\ref{prop:three:point}, the term $t$ used for defining $\tau(v)$ in the fourth
case is unique, which implies that this case is well defined too.
Consider now the fifth case. We must show that if there exist several terms $t$ for which $\tau$ is already 
defined, the equivalence class chosen for $\tau(v)$ is the same for all of them. If there is any such term
$t$ such that $\pi(t)\in\Ind(\Amc)$, then this is obviously the case. Otherwise, $\tau(t)$ must have been defined
in one of the steps 3 to 6. Step 3 can only be used to define $\tau(t)$ for one $t$ in each equivalence class.
Afterwards, all other members of this class are mapped, by step 4, to the same element $\tau(t)$.
By Lemma~\ref{lemma:claim:A2}~(b), steps 4 and 6 can only be applied once, and only if step 3 was not applied
before to the same term. The last step is well defined because all atoms of the form $\irl(t,v)\in\Psi$
are introduced at the construction of $\aquery'$, which always introduces new successor variables. If this 
step is applicable then the step 4 is not applicable. 
Overall, this means that the mapping $\tau$ is unambiguously defined; i.e., each term can only be mapped to
one element of $\Delta^{\UO}$.

It remains to be shown that $\tau(t)$ is defined for all terms $t\in\Term(\Psi)$. This follows from property
(III) of Proposition~\ref{prop:three:point}, which states that there cannot exist a cycle in $\Psi$ where a 
variable is mapped to an unnamed element.

\begin{lemma}
The mapping $\tau$ is a match for \UO and $\Psi$. 
\end{lemma}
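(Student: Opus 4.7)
The plan is to verify, by case analysis on the form of the atoms in $\Psi$, that every atom is satisfied by $\tau$ in $\UO$.
Throughout, I would rely on the invariant guaranteed by the construction, namely that $\Tail(\tau(t))=\pi(t)$ for all $t\in\Term(\Psi)$, together with the fact that $\pi$ is a match for \Rew in \IOR. Matching individual names is immediate by item~1 of the construction of~$\tau$ (combined with $\pi(a)=a^{\IOR}$ which holds because $\pi$ is an answer mapping) and the fact that $a^{\UO}=a$ in the unraveling.

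First I would handle concept atoms. Because $\aquery'$ was obtained by exhaustively applying the unfolding rules in Figure~\ref{fig:unf:rules}, every remaining concept atom in $\Psi$ has the form $A(t)$ with $A\in\NC$. Since $\pi$ is a match, $\pi(t)\in A^{\IOR}$; then $\Tail(\tau(t))=\pi(t)$ combined with Lemma~\ref{lem:ioiffuo} yields $\tau(t)\in A^{\UO}$, as required.

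Next I would handle role atoms $r(t,v)\in\Psi$ with $r\in\NR$. Here I would case-split on the clause of $\tau$'s construction used to define $\tau(v)$. If $\tau(v)$ was set by step~4, then it equals $\tau(t)\cdot s\pi(v)$ for a prime implicant $s$ of $\Gamma=\{r'\mid r'(t,v)\in\Psi\}$ with $(\pi(t),\pi(v))\in s^{\IOR}$ (guaranteed by Proposition~\ref{prop:three:point}(II)); by the definition of $r^{\UO}$ and $\Rmc\models s\sqsubseteq r$, we get $(\tau(t),\tau(v))\in r^{\UO}$. If $\tau(v)$ was defined by step~1 (both endpoints named, or $\pi(v)\notin \Aux\cup\Aux_\rho$), then the role edge already lives in $\IOR$ between named elements (or can be lifted via the definition of $r^{\UO}$ on named pairs), and thus holds in $\UO$. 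If steps~2, 3, or 5 were used, I would combine Lemma~\ref{lemma:claim:A2}(b)—which says that every role atom with $r\in\NR\cup\{\irl\}$ whose successor is an original query variable already lives in $\aquery$—with Proposition~\ref{prop:three:point}(I) (uniqueness of the role-predecessor of an \Aux-image) to conclude that any such atom has been dealt with by step~4 or by a suitable indiscernibility transfer, so the same implicant argument applies.

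Finally I would handle indiscernibility atoms, distinguishing $\irl$ from $\rho$. If $\irl(t,v)\in\Psi$, then by construction of $\aquery'$ the variable $v$ is fresh and $\tau(v)$ is defined in step~6 as $\tau(t)\cdot\rho\pi(v)$, giving $(\tau(t),\tau(v))\in\irla{\UO}$ directly. For $\rho(t,v)\in\Psi$ with $(\pi(t),\pi(v))\in\ira{\IOR}$, I would argue that $\tau$ was defined so that every $\rho$-related pair in $\Psi$ lands in the same granule of $\UO$: when step~5 applies, the explicit case distinction on the shape of $\pi(t)$ (via Fact~\ref{prop:iosim} and Corollary~\ref{lem:uosimniaels}) places $\tau(t)$ and $\tau(v)$ on compatible paths inside the same equivalence class of $\ira{\UO}$; when steps~2, 3, or 6 apply, the path $\tau(v)$ is obtained by appending a $\rho$-edge from $\tau(t)$, so Lemma~\ref{lem:tails} (its converse is not needed here, only the forward direction of tail-preservation) gives the required $\rho$-edge in $\UO$. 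The hardest part will be step~5, because there $\tau(v)$ is chosen among possibly many compatible paths; I expect to need Lemma~\ref{lem:tail:surj} to witness an appropriate path whose tail is $\pi(v)$ and that is $\ira{\UO}$-equivalent to $\tau(t)$, together with the explicit enumeration of granule members from Fact~\ref{prop:iosim} to guarantee the choice is consistent across all $\rho$-atoms touching the same equivalence class. Once this step is carried out, all atom types are covered and $\tau$ is a match for \UO and $\Psi$.
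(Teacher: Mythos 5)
Your proposal is correct and follows essentially the same route as the paper: a case analysis over the atom types of $\Psi$, using the invariant $\Tail(\tau(t))=\pi(t)$ together with Lemma~\ref{lem:ioiffuo} for concept atoms, the clause of the construction of $\tau$ (in particular step~4 with the implicant from Proposition~\ref{prop:three:point}) for role atoms, and the granule-preservation property of $\tau$ for the $\rho$- and $\irl$-atoms. The only difference is presentational: you organize the role-atom case by which construction clause defined $\tau(v)$ and explicitly re-verify the $\ira{\UO}$-invariant via Fact~\ref{prop:iosim} and Lemma~\ref{lem:tail:surj}, whereas the paper splits on whether the $\pi$-images are auxiliary and simply cites that invariant as a stated property of the construction.
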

\begin{proof}
To show this result, it suffices to consider only concepts of the form $A\in\NC$, thanks to the 
properties of the unfolding rules.
It is immediate that $\UO\models A(\tau(t))$ for all $A(t) \in \Psi$, since $\Tail(\tau(t)) = \pi(t)$, 
which is a property of the construction of $\tau$, 
and Lemma~\ref{lem:ioiffuo}. 

Let now $r(t,t')\in\Psi$, for some $r\in\NR$. If 
$\pi(t),\pi(t')\not\in\Aux^{\IOR}\cup\Aux_\ir^{\IOR}$, then 
$\tau(t)=\pi(t),\tau(t')=\pi(t')$, and $(\pi(t),\pi(t'))\in r^{\UO}$ must hold by the 
definition of \UO.
If $\pi(t')\in\Aux^{\IOR}$, then the construction of $\tau$ implies that 
$\tau(t')=\tau(t)\cdot s\pi(t)$ with $\Tmc\models s\sqsubseteq r$. By the 
definition of \UO, it then follows that $(\tau(t),\tau(t'))\in r^{\UO}$. 
The cases that $\pi(t)\in\Aux^{\IOR}\cup\Aux_\ir^{\IOR}$ and $\pi(t')\in\Ind(\Amc)$, and $\pi(t')\in\Aux_\ir^{\IOR}$
cannot occur, by the manner in which \IOR is constructed. 
For $\ir(t,t')\in\Psi$, $(\pi(t),\pi(t'))\in\ira{\IOR}$, given by the semantics, directly yields that 
$(\tau(t),\tau(t'))\in\ira{\UO}$ since this is a property of the construction of $\tau$. 
For $\irl(t,t')\in\Psi$, we have $(\pi(t),\pi(t'))\in\irla{\IOR}$ and that $\pi(t)$ and $\pi(t')$ must be of the form 
$e$ and $a_e$, $e\in\Ind(\Amc)\cup(\NIA\cap\Delta^{\IOR})$.
But then, the construction of $\tau$ implies that there is an element 
$\tau(t')=\tau(t)\cdot\ir\pi(t')\in\Delta^{\UO}$, and then the definition of $\irla{\UO}$ yields
$(\tau(t),\tau(t'))\in\irla{\UO}$. 
\end{proof}

Finally, we adapt $\tau$ to get a mapping from $\Term(\aquery)$ to $\Delta^{\UO}$ 
by setting
$\tau(t) := \tau(t')$ if $t \in\Term(\aquery) \setminus\Term(\Psi)$ and $t\erqrOp\pi(t')$. 
It is a simple task to verify that $\tau$ is a match for \UO and $\aquery$. 
Since $\tau(t) = \pi(t)$ if $\pi(t)\in\Ind(\Amc)$ for all $t\in\Term(\Psi)$, it is also clear that $\tau$ is an 
$(a_1,\ldots, a_k)$-match. 
Overall, what this means is that every match for \Rew in \IOR is also a match for \aquery in \UO.

\begin{corollary}
\label{cor:suff}
If $\IOR\models\Rew(a_1,\ldots,a_k)$, then $\UO\models\aquery(a_1,\ldots,a_k)$.
\end{corollary}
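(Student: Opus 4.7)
The plan is to chain the rewriting through the unraveled model \UO. By Lemma~\ref{lem:uo-cert}, $(a_1,\ldots,a_k)\in\Cert(\aquery,\Kmc)$ is equivalent to $\UO\models\aquery(a_1,\ldots,a_k)$, so it suffices to prove that $\UO\models\aquery(a_1,\ldots,a_k)$ iff $\IOR\models\Rew(a_1,\ldots,a_k)$. Polynomiality of the construction follows directly: $\aquery'$ is obtained by exhaustive unfolding, which is linear in the nesting depth of the input concepts; the equivalence relation \erqr, the sets \Pre, \In, \ForkId, \ForkNeq, \ForkH, \Cyc, and the prime implicants are all computable in polynomial time from $\aquery$ and $\Rmc$. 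I would first state that $\Rew$ has size polynomial in $|\aquery|+|\Rmc|$, and then focus on the two directions of the equivalence.

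For the (necessity) direction $\UO\models\aquery\Rightarrow\IOR\models\Rew$, I would fix a match $\pi$ witnessing the answer in \UO and define a match $\tau$ for $\IOR$ and $\aquery'$ by setting $\tau(t):=\Tail(\pi(t))$ for original terms and by explicitly naming the appropriate \NIA, \NIU, \NIL representatives for the fresh variables introduced by (UF1)--(UF4), using Fact~\ref{prop:iosim} and Lemma~\ref{lem:ioiffo} to pick the right seed. The key structural observation, which I would prove as an intermediate lemma, is that whenever $s\erqrOp t$ and $\pi(s)\in\Aux^{\UO}$, then $\pi(s)=\pi(t)$, and any two role-predecessors of $s,t$ must also coincide under $\pi$; this follows by induction on the construction of \erqr, exploiting that \UO is tree-shaped so that any two incoming role edges landing on an \NIA node must originate at the same predecessor. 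From this one reads off that $\tau$ satisfies $\Psi_1$ (answer variables and cyclically-connected variables cannot map to auxiliary elements in a tree), $\Psi_2$ (any forking into a shared \NIA successor forces its predecessors to be equal), and $\Psi_3$ (the shared incoming edge in the tree must be labelled by a prime implicant of the required role set, because \UO uses only prime-implicant edges by construction).

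For the (sufficiency) direction $\IOR\models\Rew\Rightarrow\UO\models\aquery$, I would fix a match $\pi$ for $\Rew$ and build a match $\tau$ in \UO by producing a specific path in $\Paths(\IOR)$ for each term. The construction proceeds by induction along the role-atom structure of $\aquery'$: answer variables and non-auxiliary images are mapped directly; for a variable $v$ with $\pi(v)\in\Aux^{\IOR}$ I would prepend $\tau(t)$ (where $t$ is a predecessor) by an edge $r\pi(v)$ with $r$ a prime implicant guaranteed by $\Psi_3$, and for $\irl$-edges introduced by (UF2) or $\ir$-edges I would extend with a $\ir$-step. The key obstacle, and the place where the filter conditions earn their keep, is showing that this procedure is well-defined: the variable $v$ can have at most one incoming role edge in the quotient (needed so that the path prefix is unambiguous), and there must exist a prime implicant that actually relates $\pi(t)$ and $\pi(v)$ in \IOR; these follow from $\Psi_1$ (via \ForkNeq), $\Psi_2$ (which lets us collapse multiple predecessors into one through $\sim_\pi$), and $\Psi_3$, together with $\Psi_1$'s \Cyc clause which excludes cyclic dependencies that would make the path construction diverge. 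Once $\tau$ is constructed, verifying that it is a homomorphism into \UO uses Lemma~\ref{lem:ioiffuo} for concept atoms and the definitions of $r^{\UO},\ira{\UO},\irla{\UO}$ for role and indiscernibility atoms. The main technical difficulty throughout is that, unlike in the classical \EL case, indiscernibility propagates through the approximation constructors and through chains of \ir-edges, so \erqr must identify more terms than mere syntactic joins would, and the well-definedness of the lifting must be established simultaneously for role, $\ir$, and $\irl$ edges over \NIA, \NIU, and \NIL sorts.
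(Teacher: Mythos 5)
Your sufficiency argument follows essentially the same route as the paper: quotient the unfolded query by an equivalence relation induced by the match and \erqr, inductively lift each term to a path in $\Paths(\IOR)$ using $\Psi_1$--$\Psi_3$ to guarantee uniqueness of the incoming edge, existence of a (prime) implicant, and absence of auxiliary cycles, and then verify the homomorphism via Lemma~\ref{lem:ioiffuo}. The only execution detail you gloss over is the placement of terms mapped to \NIU-elements of the form $x_{E,x_C}$: a single $\ir$-step appended to the predecessor's path need not be a legal path (the required $\iras{}{\Kmc}$-edge between two \NIU-elements may be absent), so one must rebase at the seed element and splice in an existing $\ir$-chain from $\Delta^{\UO}$, as the paper does in its case analysis.
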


Corollaries \ref{cor:nec} and \ref{cor:suff} imply that 
$\IOR\models\Rew(a_1,\ldots,a_k)$ if and only if $\UO\models\aquery(a_1,\ldots,a_k)$. By 
Lemma~\ref{lem:uo-cert}, the latter is the case iff $(a_1,\ldots,a_k)\in \Cert(\aquery,\Kmc)$, which finishes the
proof of Theorem~\ref{thm:iocert}.

\end{tr}


\end{document}
